\newcommand{\mybibstyle}{\bibliographystyle{abbrvnat}}
\theoremstyle{plain}
\newtheorem{theorem}{Theorem}[section]
\newtheorem*{theorem*}{Theorem}% without numbering
\newtheorem{lemma}[theorem]{Lemma}
\newtheorem*{lemma*}{Lemma}% without numbering
\newtheorem{proposition}[theorem]{Proposition}
\newtheorem*{proposition*}{Proposition}% without numbering
\newtheorem{corollary}[theorem]{Corollary}
\newtheorem*{corollary*}{Corollary}% without numbering
\newtheorem*{condition*}{Condition}% without numbering
\newtheorem*{definition*}{Definition}% without numbering
\theoremstyle{remark}
\newtheorem{remark}[theorem]{Remark}
\newtheorem*{remark*}{Remark}% without numbering
\newtheorem*{example*}{Example}% without numbering
\newcommand{\mybeginproofof}[1]{\begin{proof}[Proof of {#1}]}
\newcommand{\mybeginproof}{\begin{proof}}
\newcommand{\myendproof}{\end{proof}}
\newlength{\mygraphicwidth}
\newlength{\mysubgraphicwidth}
\newcommand{\notiz}[1]{\relax}
\newcommand{
\author{
Georg Mainik 
\footnote{RiskLab, Department of Mathematics, ETH Zurich; \  \href{http://www.georgmainik.com}{www.georgmainik.com}} 
}
% date of the last preprint:
%\date{March 3, 2012}
\maketitle
\vspace{-0.5em}
\begin{abstract}
%(Enter an abstract of 150 to 250 words. The abstract should not contain any undefined abbreviations or unspecified references.)
%
This paper studies convergence properties of multivariate distributions 
%generated from an empirical copula and empirical margins. 
constructed by endowing empirical margins with a copula. 
%generated by plugging empirical margins into a copula.
This setting includes 
%Such problems arise from 
Latin Hypercube Sampling with dependence, also 
known as the Iman--Conover method.  
The primary question addressed here is the convergence of the component sum, 
which is relevant to risk aggregation in insurance and finance. 
This paper shows that a CLT for the aggregated risk 
distribution is not available, so that the underlying mathematical problem 
% involves convergence of empirical processes on set 
% classes that are not Vapnik--\u{C}ervonenkis,
% which 
goes beyond classic functional CLTs for empirical copulas.  
% As it turns out, a CLT for the aggregated risk 
% distribution is not available. 
%, 
%
This issue is relevant to Monte-Carlo based risk aggregation in 
all multivariate models generated by plugging empirical margins into 
a copula. 
Instead of a functional CLT,
this paper establishes strong uniform consistency of 
the estimated sum distribution function and provides a sufficient criterion 
for the convergence rate $O(n^{-1/2})$ in probability.  
These convergence results hold for all copulas with bounded densities. 
Examples with unbounded densities 
include bivariate Clayton and Gauss copulas. 
The convergence results are not specific to the component 
sum and hold also for any other componentwise non-decreasing aggregation 
function.
%
% The specific form of aggregation functions may simplify the problem, but 
% in the general case the technical issues are the same as for the sum. 
%
On the other hand, convergence of estimates for the joint distribution 
is much easier to prove, including CLTs. 
Beyond Iman--Conover estimates, the results of this paper 
apply to multivariate distributions obtained by 
plugging empirical margins into 
%a copula. 
an exact copula or by plugging exact margins into an empirical copula. 
%
%
%%% Local Variables:
%%% TeX-master: "../lhs-dep-sums.tex"
%%% End:

\\[0.5em]
\textbf{Key words:} Risk aggregation, empirical marginal distributions, empirical copula, functional CLT, Iman--Conover method, Latin hypercube sampling
\end{abstract}
\thispagestyle{fancy}
}{
\author{
Georg Mainik 
\footnote{RiskLab, Department of Mathematics, ETH Zurich; \  \href{http://www.georgmainik.com}{www.georgmainik.com}} 
}
% date of the last preprint:
%\date{March 3, 2012}
\maketitle
\vspace{-0.5em}
\begin{abstract}
%(Enter an abstract of 150 to 250 words. The abstract should not contain any undefined abbreviations or unspecified references.)
%
This paper studies convergence properties of multivariate distributions 
%generated from an empirical copula and empirical margins. 
constructed by endowing empirical margins with a copula. 
%generated by plugging empirical margins into a copula.
This setting includes 
%Such problems arise from 
Latin Hypercube Sampling with dependence, also 
known as the Iman--Conover method.  
The primary question addressed here is the convergence of the component sum, 
which is relevant to risk aggregation in insurance and finance. 
This paper shows that a CLT for the aggregated risk 
distribution is not available, so that the underlying mathematical problem 
% involves convergence of empirical processes on set 
% classes that are not Vapnik--\u{C}ervonenkis,
% which 
goes beyond classic functional CLTs for empirical copulas.  
% As it turns out, a CLT for the aggregated risk 
% distribution is not available. 
%, 
%
This issue is relevant to Monte-Carlo based risk aggregation in 
all multivariate models generated by plugging empirical margins into 
a copula. 
Instead of a functional CLT,
this paper establishes strong uniform consistency of 
the estimated sum distribution function and provides a sufficient criterion 
for the convergence rate $O(n^{-1/2})$ in probability.  
These convergence results hold for all copulas with bounded densities. 
Examples with unbounded densities 
include bivariate Clayton and Gauss copulas. 
The convergence results are not specific to the component 
sum and hold also for any other componentwise non-decreasing aggregation 
function.
%
% The specific form of aggregation functions may simplify the problem, but 
% in the general case the technical issues are the same as for the sum. 
%
On the other hand, convergence of estimates for the joint distribution 
is much easier to prove, including CLTs. 
Beyond Iman--Conover estimates, the results of this paper 
apply to multivariate distributions obtained by 
plugging empirical margins into 
%a copula. 
an exact copula or by plugging exact margins into an empirical copula. 
%
%
%%% Local Variables:
%%% TeX-master: "../lhs-dep-sums.tex"
%%% End:

\\[0.5em]
\textbf{Key words:} Risk aggregation, empirical marginal distributions, empirical copula, functional CLT, Iman--Conover method, Latin hypercube sampling
\end{abstract}
\thispagestyle{fancy}
}
\DeclareMathOperator{\esssup}{ess\,sup}
\DeclareMathOperator{\cov}{cov}
\newcommand{\abs}[1]{| #1 |}
\newcommand{\absfl}[1]{\left| #1 \right|}
\newcommand{\Acal}{\mathcal{A}}
\newcommand{\anbr}[1]{\langle #1 \rangle}
\newcommand{\anbrup}[1]{\langle #1 \rangle_{\mathrm{up}}}
\newcommand{\anbrfl}[1]{\left\langle #1 \right\rangle}
\newcommand{\asconv}{\stackrel{\mathrm{a.s.}}{\to}}
\newcommand{\At}{A_t}
\newcommand{\bnd}{\partial} % the boundary
\newcommand{\Bt}{B_t}
\newcommand{\ceilbr}[1]{\lceil #1 \rceil}
\newcommand{\Ccal}{\mathcal{C}}
\newcommand{\Cdn}{C_{d,n}}
\newcommand{\Cin}{C_{i,n}}
\newcommand{\Cn}{C_n}
\newcommand{\Cnast}{C_n^\ast}
\newcommand{\Conen}{C_{1,n}}
\newcommand{\cubr}[1]{\{{#1}\}}
\newcommand{\cubrfl}[1]{\left\{{#1}\right\}}
\newcommand{\Dcal}{\mathcal{D}}
\newcommand{\Dcaldeltazer}{\Dcal_{\delta_0}}
\newcommand{\dm}{\mathrm{d}}
\newcommand{\Dminus}{D_{-}}
\newcommand{\Dplus}{D_{+}}
\newcommand{\enquote}[1]{\lq\lq{}{#1}\rq\rq{}}
\newcommand{\eps}{\varepsilon}
\newcommand{\ExpDistr}{\mathrm{Exp}}
\newcommand{\Fi}{F_i}
\newcommand{\Fdn}{F_{d,n}}
\newcommand{\Fin}{F_{i,n}}
\newcommand{\Fnast}{F_n^\ast}
\newcommand{\Fonen}{F_{1,n}}
\newcommand{\Ftilde}{\widetilde{F}}
\newcommand{\Ftildein}{\Ftilde_{i,n}}
\newcommand{\Gbb}{\mathbb{G}}
\newcommand{\Gnast}{G_n^\ast}
\newcommand{\GPsinast}{G_{\Psi,n}^\ast}
\newcommand{\Gtilde}{\widetilde{G}}
\newcommand{\Gtilden}{\Gtilde_n}
\newcommand{\ginv}{^{\leftarrow}}
\newcommand{\half}{\frac{1}{2}}
\newcommand{\Hcal}{\mathcal{H}}
\newcommand{\Hcalbar}{\overline{\Hcal}}
\newcommand{\HcalT}{\mathcal{H}_T}
\newcommand{\Hcald}{\Hcal_d}
\newcommand{\Hcaldbar}{\overline{\Hcald}}
\newcommand{\Hcaltwobar}{\overline{\Hcal_2}}
\newcommand{\Iast}{I_\ast}
\newcommand{\id}{\mathrm{id}}
\newcommand{\Iminus}{I_{-}}
\newcommand{\inv}{^{-1}}
\newcommand{\Iplus}{I_{+}}
\newcommand{\linf}{l^{\infty}}
\newcommand{\mylefteqn}{\hspace{2em}&\hspace{-2em}}
\newcommand{\N}{\Nbb}
\newcommand{\Nbb}{\mathbb{N}}
\newcommand{\Nbr}{N_{[\,]}}
\newcommand{\Ncal}{\mathcal{N}}
\newcommand{\norm}[1]{\Vert #1 \Vert}
\newcommand{\normfl}[1]{\left\Vert #1 \right\Vert}
\newcommand{\oneby}[1]{\frac{1}{#1}}
\newcommand{\onebyn}{\oneby{n}}
\newcommand{\Pbb}{\mathbb{P}}
\newcommand{\pospart}{_{+}}
\newcommand{\powminustheta}{^{-\theta}}
\newcommand{\Prob}{\mathrm{P}}
\newcommand{\ProbC}{\Prob_{C}}
\newcommand{\ProbCn}{\Prob_{\Cn}}
\newcommand{\ProbCnast}{\Prob_{\Cnast}}
\newcommand{\ProbCrho}{\Prob_{C_\rho}}
\newcommand{\ProbCtheta}{\Prob_{C_\theta}}
\newcommand{\ProbFnast}{\Prob_{\Fnast}}
\newcommand{\R}{\Rbb}
\newcommand{\Rd}{\R^d}
\newcommand{\Rbb}{\mathbb{R}}
\newcommand{\Rcal}{\mathcal{R}}
\newcommand{\robr}[1]{(#1)}
\newcommand{\robrfl}[1]{\left( #1 \right)}
\newcommand{\sampj}{^{(j)}}
\newcommand{\sampk}{^{(k)}}
\newcommand{\sampn}{^{(n)}}
\newcommand{\sampone}{^{(1)}}
\newcommand{\secondrevision}{\relax}
\newcommand{\sqbr}[1]{[#1]}
\newcommand{\sqbrfl}[1]{\left[#1\right]}
\newcommand{\sqrtn}{\sqrt{n}}
\newcommand{\sumioned}{\sum_{i=1}^d}
\newcommand{\sumjonen}{\sum_{j=1}^n}
\newcommand{\sumkonen}{\sum_{k=1}^n}
\newcommand{\symdiff}{\bigtriangleup}
\newcommand{\Tcal}{\mathcal{T}}
\newcommand{\Tcald}{\Tcal_d}
\newcommand{\Tcaldbar}{\overline{\Tcal_d}}
\newcommand{\Tcaltwobar}{\overline{\Tcal_2}}
\newcommand{\Tn}{T_n}
\newcommand{\toinf}{\to\infty}
\newcommand{\tr}{^{\top}}
\newcommand{\Ttilde}{\widetilde{T}}
\newcommand{\Ttilden}{\Ttilde_n}
\newcommand{\uast}{u^\ast}
\newcommand{\ubar}{\overline{u}}
\newcommand{\Udelta}{U_\delta}
\newcommand{\Udeltaone}{U_{\delta_1}}
\newcommand{\uintd}{[0,1]^d}
\newcommand{\uintepsd}{[\eps, 1-\eps]^d}
\newcommand{\ulbar}{\underline{u}}
\newcommand{\upd}{^{(d)}}
\newcommand{\upi}{^{(i)}}
\newcommand{\upiminusone}{^{(i-1)}}
\newcommand{\upj}{^{(j)}}
\newcommand{\upone}{^{(1)}}
\newcommand{\uptwo}{^{(2)}}
\newcommand{\upzer}{^{(0)}}
\newcommand{\vbarast}{{\overline{v}}^\ast}
\newcommand{\vlbarast}{{\underline{v}}^\ast}
\newcommand{\VC}{Vapnik--\u{C}ervonenkis\xspace}
\newcommand{\Vdeltat}{V_{\delta,t}}
\newcommand{\Wdeltat}{W_{\delta,t}}
\newcommand{\weakconv}{\stackrel{\mathrm{w}}{\to}}
\newcommand{\Xtilde}{\widetilde{X}}
\newcommand{\Ytilde}{\widetilde{Y}}
\newcommand{\Ytilden}{\Ytilde_n}
\begin{document}
% date of the last preprint:
\date{August 10, 2015}
\title
%{Sampling dependence: empirical copulas, Latin hypercubes, and convergence of sums}
{Risk aggregation with empirical margins: Latin hypercubes, empirical copulas, and convergence of sum distributions}
% author/date in AuxFiles (mytitlepage-journal or mytitlepage-preprint)

\author{
Georg Mainik 
\footnote{RiskLab, Department of Mathematics, ETH Zurich; \  \href{http://www.georgmainik.com}{www.georgmainik.com}} 
}
% date of the last preprint:
%\date{March 3, 2012}
\maketitle
\vspace{-0.5em}
\begin{abstract}

\\[0.5em]
\textbf{Key words:} Risk aggregation, empirical marginal distributions, empirical copula, functional CLT, Iman--Conover method, Latin hypercube sampling
\end{abstract}
\thispagestyle{fancy}

\section{Introduction}\label{sec:1}
%\subsection*{Real-world applications}
In various real-world % practical 
applications, multivariate stochastic models are 
constructed upon empirical marginal data and an assumption on the dependence 
structure between the margins. This dependence assumption is often formulated 
in terms of copulas. 
The major reason for this set-up is the lack of 
multivariate data sets, as it is often the case in insurance and finance.
This approach may appear artificial from the statistical point of view, 
but it arises naturally in the context of stress testing. 
In addition to finance and insurance, relevant application areas 
include engineering and environmental studies. 
Sometimes the marginal data is not even based on observations, 
but is generated by a univariate model that is considered reliable. 
Many of these models are so complex that the 
resulting distributions cannot be expressed analytically. 
In such cases exact marginal distributions are replaced by 
empirical distributions of simulated univariate samples. 
These empirical margins are endowed with some dependence structure  
to obtain a multivariate distribution. The computation of aggregated 
risk or other characteristics of this multivariate model is 
typically based on Monte-Carlo techniques. 
\par
\subsection*{Iman--Conover: dependence \enquote{injection} by sample reordering}
Related methods include generation of synthetic multivariate samples 
from univariate data sets. 
% in a way that preserves the marginal distributions 
% and endows them with a dependence structure that satisfies the application's 
% needs.
Whilst the margins of such a synthetic sample accord with the univariate data, 
its dependence structure is modified to fit the application's needs. 
The most basic example is the classic 
Latin Hypercube Sampling method, which mimics independent margins. 
It is a popular tool for removing spurious correlations from multivariate 
data sets. 
This method is also applied to 
variance reduction in the simulation of independent random variables  
\citep[cf.][]{McKay/Beckman/Conover:1979, Stein:1987, Owen:1992, Iman:2008}. 
Similar applications to dependent random variables include 
variance reduction in Monte-Carlo methods \citep{Packham/Schmidt:2010} 
and in copula estimation \citep{Genest/Segers:2010}. 
\par 
An extension of Latin Hypercube Sampling that brings dependence 
into the samples was proposed by \citet{Iman/Conover:1982}. 
The original description of the Iman--Conover method uses random reordering of 
marginal samples, and the intention there was to control the 
rank correlations in the synthetic multivariate sample. 
The reordering is performed according to the vectors of marginal ranks 
in an i.i.d.\ sample of some multivariate distribution, say, $H$, 
with continuous margins.
Thus rank correlations of $H$ are \enquote{injected} into the synthetic sample. 
This procedure is equivalent 
to plugging 
empirical margins
(obtained from asynchronous observations)  
into the rank based empirical 
copula of a sample of $H$ \citep{Arbenz/Hummel/Mainik:2012}. 
Moreover, it turned out that the Iman--Conover method allows to introduce  
not only the rank correlations of $H$ into the synthetic samples, 
but the entire copula of $H$ 
\citep[cf.][]{Arbenz/Hummel/Mainik:2012,Mildenhall:2005}. 
%
% \par
In somewhat weaker sense, 
these results are related to the approximation of stochastic dependence by 
deterministic functions and to the pioneering result 
by~\citet{Kimeldorf/Sampson:1978}. Further developments in that area 
include measure preserving transformations \citep{Vitale:1990} and 
shuffles of $\min$ \citep{Durante/Sarkoci/Sempi:2009}. 
In statistical optimization, reordering techniques were also used by
\cite{Rueschendorf:1983}. 
A very recent, related application in 
quantitative risk management is a rearrangement algorithm 
%by~\cite{Puccetti/Rueschendorf:2012}. 
that computes worst-case bounds for 
the aggregated loss quantiles in a portfolio with given marginal 
distributions \citep[cf.][and references therein]{Embrechts/Puccetti/Rueschendorf:2013}.  
\par
Using explicit reorderings of univariate marginal samples, the 
Iman--Conover method has a unique algorithmic tractability.
It is implemented in various software packages, and it serves as 
a standard tool in dependence modelling and uncertainty analysis. 
The reordering algorithm allows even to construct synthetic samples 
with hierarchical dependence structures 
that meet the needs of 
risk aggregation in insurance and reinsurance companies
\citep{Arbenz/Hummel/Mainik:2012}. 
The distribution of the aggregated risk is estimated by the empirical 
distribution of the component sums 
$\Xtilde_1\sampk+\ldots+\Xtilde_d\sampk$ 
of the synthetic samples $\Xtilde\sampk=\robr{\Xtilde_1\sampk,\ldots,\Xtilde_d\sampk}$ for $k=1,\ldots,n$. 
This Monte-Carlo approach has computational advantages. 
The resulting convergence rate of $n^{-1/2}$ 
(or even faster with Quasi-Monte-Carlo using special sequences) 
allows to outperform explicit calculation of sum distributions 
already for moderate dimensions $d\ge4$
\citep[cf.][]{Arbenz/Embrechts/Puccetti:2011}. 
\par
\subsection*{Challenge and contribution: convergence proofs}
Despite its popularity,  
some applications of the Iman--Conover method have been  
justified by simulations rather than by mathematical proofs.
The original publication  \citep{Iman/Conover:1982} 
derives its conclusions from 
promising simulation results for the distribution of 
the following function of a $4$-dimensional random vector: 
$f(X_1,\ldots,X_4) = X_1 + X_2(X_3 - \log\abs{X_1}) + \exp(X_4/4)$.  
Yet a rigorous proof is still missing. 
The present paper provides a convergence proof for Iman--Conover estimates 
of the component sum distribution. 
It also includes a proof sketch for the much simpler case of 
the estimated joint distribution. 
Both problems have been open until now. 
% As it turns out, the convergence of sum distributions is much more 
% difficult to prove. 
%
\par
The solutions given in this paper 
% % presented in this paper 
% answer these questions 
% % latter question    
% % are obtained 
%using 
are derived from 
the empirical process 
theory as presented in~\citet{van_der_Vaart/Wellner:1996}. 
Under appropriate regularity assumptions, 
Iman--Conover estimates of the sum distribution are 
strongly uniformly consistent with  convergence rate 
$O_\Prob(n^{-1/2})$ (see Theorems~\ref{thm:1} and \ref{thm:2}).
The convergence of Iman--Conover estimates for the joint distribution 
is discussed in cf.\ Remark~\ref{rem:9}. 
All these findings are not specific to the component sum and extend immediately 
to all componentwise non-decreasing functions 
(see Corollary~\ref{cor:3}). 
%Remark~\ref{rem:5}). 
Moreover, Theorems~\ref{thm:1} and \ref{thm:2} 
also cover the convergence of aggregated risk 
distributions obtained by Monte-Carlo sampling of a multivariate model 
constructed by plugging empirical margins into a copula 
(see Remark~\ref{rem:7}). 
In fact, both sampling methods (reordering by Iman--Conover and classic 
top-down sampling with empirical margins instead of the exact ones) 
lead to the same mathematical problem. 
This is discussed in Remark~\ref{rem:2}(\ref{item:rem.2.c}).  
\par
The regularity assumptions used here to establish the $O_\Prob(n^{-1/2})$ 
convergence rate for Iman--Conover estimates of sum distributions 
are satisfied for all copulas with bounded 
densities. This case includes the independence copula in arbitrary dimension 
$d\ge2$. 
The assumptions are also satisfied for all bivariate Clayton copulas 
and for bivariate Gauss copulas with correlation parameter $\rho\ge0$. 
The convergence rate for $\rho<0$ is, if at all, only slightly weaker. 
The best bound that is currently available for $\rho<0$  
is $O_\Prob(n^{-1/2}\sqrt{\log n})$. 
\par
The regularity assumptions for the marginal distributions 
involved in the Iman--Conover method are absolutely natural, and they are 
always satisfied by empirical distribution of i.i.d.\ samples: 
Strong uniform consistency of Iman--Conover estimates 
needs strong uniform consistency of 
consistency of empirical margins, 
whereas the uniform $O_\Prob(n^{-1/2})$ convergence rate of 
Iman--Conover requires the same uniform convergence rate of $O_\Prob(n^{-1/2})$ 
in the margins. 
\par
\subsection*{Why a precise CLT remains elusive}
%\subsection*{Related results, problems, and complexity issues}
The convergence results obtained here are related to standard convergence 
results for empirical copulas 
\citep[cf.][]{Rueschendorf:1976,Deheuvels:1979,Fermanian/Radulovic/Wegkamp:2004,Segers:2012}. 
However, the mathematical problem for the sum distribution goes beyond the 
standard setting, where empirical measures are evaluated on rectangular sets. 
In the case of sum distributions, the usage of empirical margins in the 
construction of the multivariate model 
significantly extends the class of sets 
on which the empirical process of the copula sample should converge. 
%
%makes the sets whose probabilities must be estimated much more complex. 
As shown in Section~\ref{sec:2a}, the 
canonical way to prove asymptotic normality for the 
Iman--Conover estimator of the sum distribution 
would need a uniform CLT for the copula sample on the 
collection of so-called lower layers in $\uintd$. 
However, this class is too complex for a uniform CLT 
\citep[cf.][Theorems 8.3.2, 12.4.1, and 12.4.2]{Dudley:1999}. 
For this reason the proofs of consistency and convergence rate 
presented here sacrifice the precise asymptotic variance and use approximations
that allow to simplify the problem. 
%
%\par
This technical difficulty is not specific to the Iman--Conover method.  
It also arises in any other application where multivariate 
samples are generated from a simulated copula sample and empirical marginal 
distributions. 
As mentioned above, this approach is very popular in practice, 
especially for computational reasons. 
%if the exact marginal distributions are not analytically tractable. 
Similar problems also arise in applications that combine exact marginal distributions 
with empirical copulas. 
\par
\subsection*{Structure of the paper}
The paper is organized as follows. Section~\ref{sec:2} introduces the  
reordering method and highlights the relations between sample reordering and 
empirical copulas. 
The complexity issues are discussed in Section~\ref{sec:2a}. 
The convergence results are established in 
Section~\ref{sec:3}. The underlying regularity assumptions 
are discussed in Section~\ref{sec:4}, including examples of copula 
families that satisfy them. 
Conclusions are stated in Section~\ref{sec:5}.
%
%
% ===== ===== ===== ===== ===== ===== ====
\section{Empirical copulas and sample reordering}\label{sec:2}
\par
Let $X=(X_1,\ldots,X_d)$ be a random vector in $\Rd$ with 
joint distribution function $F$, marginal distribution functions 
$F_1,\ldots,F_d$, and copula $C$. That is, 
\begin{align}\label{eq:052}
F(x) = C \robr{F_1(x_1),\ldots,F_d(x_d)},
\end{align}
where $C$ is a probability distribution function on $\uintd$ with uniform margins. 
%It is well known that
By Sklar's Theorem, 
any multivariate distribution function 
$F$ admits this representation. 
% \citep[cf][]{Sklar:1959}. 
\par
We assume throughout the following that $\Fi$ are unknown 
and that we have some uniform approximations 
$\Fin$, $i=1,\ldots,d$.
The true margins $\Fi$ need not be continuous. 
In this case the representation~\eqref{eq:052} is not unique, 
but it is not an issue in our application, which is rather computational 
than statistical. 
As sketched in the Introduction, we consider the case where only univariate, 
asynchronous observations of the components $X_i$ are available, 
and the copula $C$ is set by expert judgement to compute the 
resulting distribution of the component sum. 
In practice, the choice of the copula $C$ aims at dependence 
characteristics that are known or assumed for the random vector $X$. 
This choice also depends on the ability to sample $C$ or any other 
multivariate distribution with continuous margins and copula $C$. 
\par
To keep the presentation simple, we assume that $\Fin$ 
are empirical distribution functions of some 
univariate samples $X_i\sampone,\ldots,X_i\sampn$ for $i=1,\ldots,d$:
\begin{align} \label{eq:014}
\Fin(t):=\onebyn\sumkonen 1\cubrfl{X_i\sampk \le t} 
%,\quad i=1,\ldots,d.
,\quad t\in\R.
\end{align}
These samples need not be i.i.d. We will only assume that 
$\norm{\Fin - \Fi}_\infty \to 0$, either $\Prob$-a.s.\ or in probability. 
Extensions to the general case will be given later on. 
\par
Let $X_i^{(1:n)} \le \ldots \le X_i^{(n:n)}$ 
denote the order statistics of the $i$-th component $X_i$ 
for $i=1,\ldots,d$,  
and   
let $\Prob_H$ denote the probability measure 
with distribution function $H$. 
The Iman--Conover method approximates $\Prob_F$ 
by the empirical measure of the following 
synthetic multivariate sample:  
\begin{align} \label{eq:047}
\Xtilde\sampj:=
\robrfl{X_1^{\robrfl{R_1\sampj:n}},\ldots,X_d^{\robrfl{R_d\sampj:n}}}
, \quad j=1,\ldots,n,
\end{align}
where $R_i\sampone,\ldots,R_i\sampn$ for $i=1,\ldots,d$ are the marginal ranks of a simulated i.i.d.\ sample $U\sampone,\ldots,U\sampn \sim C$: 
\[
R_i\sampj = \sumkonen 1\cubrfl{U_i\sampj \ge U_i\sampk}
,\quad i=1,\ldots,d,\ j=1,\ldots,n. 
\]
% \par
It is easy to verify \citep[cf.][Theorem 3.2]{Arbenz/Hummel/Mainik:2012} that 
the empirical distribution function of the synthetic sample~\eqref{eq:047}  
is equal to  
\begin{align} \label{eq:008}
\Fnast(x) := \Cnast\robr{\Fonen(x_1),\ldots,\Fdn(x_d)}, 
\end{align}
where $\Cnast$ is the rank based \emph{empirical copula} of  
$U\sampone,\ldots,U\sampn$: 
\begin{align}\label{eq:063}
\Cnast(u) := \onebyn \sumkonen 
1\cubrfl{\onebyn R_1\sampk \le u_1,\ldots, \onebyn R_d\sampk \le u_d}.
\end{align}
This links the convergence of the Iman--Conover method 
to the convergence of $\Prob_{\Fnast}$ to $\Prob_F$, 
and hence to the convergence of $\ProbCnast$ to $\ProbC$. 
\par
{\secondrevision%
\begin{remark}
\begin{enumerate}[(a)]
\item
The central application of the Iman--Conover method discussed in the present 
paper is the computation of the aggregated risk distribution. 
The most common risk aggregation function is the sum. 
In this case one must compute
% With the sum as most typical aggregation function, this means computing 
the probability distribution of the random variable 
$\sumioned X_i$ for $(X_1,\ldots,X_d)\sim F$ with $F$ defined 
in~\eqref{eq:052}. 
Iman--Conover involves two approximations: replacing 
the unknown margins $F_i$ by their empirical versions $\Fin$, 
and replacing the known (or treated as known) copula $C$ by its 
empirical version $\Cnast$. 
\item
Using $\Cnast$ may appear unnecessary because 
one can also compute the sum distribution for a random vector 
with margins $\Fin$ and exact copula $C$. However, computation of sum 
distributions from margins and copulas is quite difficult in practice.  
It involves numeric integration on non-rectangular sets, which cannot be 
reduced to taking the value of 
$C\robrfl{\Fonen(x_1),\ldots,\Fdn(x_n)}$ for a few points $x=(x_1,\ldots,x_d)$.
Implementations of this kind are exposed to the curse of 
dimensions. Monte-Carlo methods, which Iman--Conover belongs to, 
have the convergence rate of $1/\sqrtn$, and Quasi-Monte-Carlo 
methods using special sequences may even allow to achieve the rate $1/n$.  
According to 
\cite{Arbenz/Embrechts/Puccetti:2011}, 
explicit computation of sum distributions is outperformed by 
Monte-Carlo already for $d=4$. 
\item
Another motivation of the Iman--Conover method is its flexibility and  
algorithmic tractability. 
It only includes reordering of samples and works in the same way for any 
dimension. 
Moreover, sample reordering is compatible with 
hierarchical dependence structures that can be described as trees with 
univariate distributions in leaves and copulas in branching nodes
\citep[cf.][]{Arbenz/Hummel/Mainik:2012}.
In each branching node, the marginal distributions are aggregated according 
to the node's copula and the resulting aggregated (typically, sum) distribution 
is propagated to the next aggregation level.
As shown in~\cite{Arbenz/Hummel/Mainik:2012}, sample reordering can be 
implemented for a whole tree. 
The setting with one copula and margins discussed in the present paper is 
the basic element of such aggregation trees. 
The results presented here allow to prove the convergence of the 
aggregated (say, sum) distribution in every tree node, including 
the total sum.  
\end{enumerate}
\end{remark}
\par
Now let us return to the technical details of the Iman--Conover estimator 
for the aggregated sum distribution.
}%
As the random variables $U_i\sampk$ are continuously distributed, 
%The uniform margins of $C$ imply that 
they have no ties $\Prob$-a.s.
Thus $\ProbCnast$ consists 
$\Prob$-a.s.\ of $n$ atoms of size $1/n$.
Moreover, these $n$ atoms %of $\ProbCnast$ 
build a Latin hypercube 
on the $d$-variate grid $\cubr{\frac{1}{n},\frac{2}{n},\ldots,1}^d$, 
i.e., each section $\cubr{x\in\cubr{\frac{1}{n},\frac{2}{n},\ldots,1}^d: x_i=\frac{j}{n}}$ for 
$i=1,\ldots,d$ and $j=1,\ldots,n$  contains precisely one atom. 
Therefore the Iman--Conover method is also called Latin Hypercube Sampling with dependence.  
\par
For $X\sim F$, let $G$ denote the distribution function of the component sum: 
$G(t) := \Prob\robr{X_1+\ldots+X_d \le t}$, $t\in \R$. 
The relation between $G$ and $\ProbC$ can be expressed as follows.
\begin{lemma} \label{lem:5}
\begin{align} \label{eq:062}
\forall t\in\R \quad G(t) = \ProbC(\anbr{T(\At)}) 
\end{align}
where $\At:=\cubr{x\in\Rd: \sumioned x_i \le t}$, $T(x):=\robr{F_1(x_1),\ldots,F_d(x_d)}$, and 
\[
\anbr{B}:=\cup_{u\in B}[0,u]
\quad
\text{for } B\subset\uintd.
\]
%for $B\subset\uintd$.  
\end{lemma}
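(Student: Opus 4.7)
The plan is to rewrite $G(t)$ as a copula probability via the quantile construction, and then to identify the resulting set with $\anbr{T(\At)}$. First I would take $U=(U_1,\ldots,U_d)\sim C$ and set $Y_i:=\Fi\ginv(U_i)$, with $\Fi\ginv(u):=\inf\{x:\Fi(x)\ge u\}$. The Galois equivalence $\Fi\ginv(u)\le x \iff u\le \Fi(x)$, valid because $\Fi$ is right-continuous, shows that $Y=(Y_1,\ldots,Y_d)$ has joint distribution function $C(F_1(\cdot),\ldots,F_d(\cdot))=F$. This gives
\begin{align*}
G(t)=\Prob(Y_1+\cdots+Y_d\le t)=\ProbC(B_t), \qquad B_t:=\cubrfl{u\in\uintd:\sumioned \Fi\ginv(u_i)\le t},
\end{align*}
so the lemma reduces to the purely deterministic claim $B_t=\anbr{T(\At)}$.

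I would verify this claim by a pair of one-line set inclusions. For $\anbr{T(\At)}\subseteq B_t$: any $u\in\anbr{T(\At)}$ admits $x\in\At$ with $u_i\le \Fi(x_i)$ for every $i$; monotonicity of $\Fi\ginv$ combined with the standard inequality $\Fi\ginv(\Fi(x_i))\le x_i$ then yields $\Fi\ginv(u_i)\le x_i$, and summing over $i$ gives $u\in B_t$. For the reverse inclusion: given $u\in B_t$, I would set $x_i:=\Fi\ginv(u_i)$, so that $x\in\At$ by definition of $B_t$, while right-continuity of $\Fi$ gives $\Fi(x_i)\ge u_i$; hence $u\le T(x)\in T(\At)$ and therefore $u\in\anbr{T(\At)}$.

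I do not expect any substantial obstacle here; the real content is just the careful use of the one-sided inequalities for the quantile. The one subtlety worth flagging is that the margins $\Fi$ are not assumed continuous, so a direct approach based on pushing $X\sim F$ forward through $T$ and writing $G(t)=\Prob(T(X)\in \anbr{T(\At)})$ would be awkward: $T(X)$ need not be $C$-distributed, and $T$ is not invertible on flat pieces of $\Fi$. The quantile route bypasses these issues because it depends only on $\Fi\circ\Fi\ginv\ge\id$ and $\Fi\ginv\circ\Fi\le\id$, both of which hold in full generality.
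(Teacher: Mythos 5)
Your proof is correct and follows essentially the same route as the paper: represent $G(t)$ via the quantile transform $U\mapsto(\Fi\ginv(U_i))_i$ of a copula sample, reduce the identity to the set equality $\cubr{u:\sum_i\Fi\ginv(u_i)\le t}=\anbr{T(\At)}$, and establish the two inclusions via the standard one-sided inequalities $\Fi\ginv\circ\Fi\le\id$ and $\Fi\circ\Fi\ginv\ge\id$. The only cosmetic difference is that you make the Galois adjunction explicit to justify $T\ginv(U)\sim F$, where the paper simply cites it as well known.
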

The notation $[0,u]$ refers to the closed $d$-dimensional interval between $0$ and $u$: $[0,u]:=[0,u_1]\times\cdots\times[0,u_d]$.
\begin{proof}
Let $U\sim C$ and denote 
\[
T\ginv(u):=\robr{F_1\ginv(u_1),\ldots,F_d\ginv(u_d)}
\] 
for $u\in\uintd$, where $F_i\ginv(y):=\inf\cubr{t\in\R: \Fi(t)\ge y}$ is the quantile function of $F_i$. 
%$F_i\ginv(y):= \inf\cubr{z\in\R: F_i(z)\ge y}$. 
% 
It is well known that $T\ginv(U)\sim F$. 
Hence
\begin{align*} 
G(t) = \Prob_F(\At) = \Prob\robrfl{T\ginv(U)\in\At} = 
\ProbC\robrfl{\cubrfl{u\in\uintd : T\ginv(u) \in \At}},
\end{align*}
and it suffices to show that $v\in\anbr{T(\At)}$ is equivalent to $T\ginv(v)\in\At$. 
%$\anbr{T(\At)} = \cubrfl{u\in\uintd : T\ginv(u) \in \At}$. 
\par
If $T\ginv(v)\in\At$, 
%$v\in \cubr{u\in\uintd : T\ginv(u)\in\At}$.  
then $T\circ T\ginv(v)\in\anbr{T(\At)}$. 
Due to $F_i\circ F_i\ginv(v_i) \ge v_i$ for all $i$ 
this implies that $v\in\sqbr{0,T\circ T\ginv(v)} \subset \anbr{T(\At)}$. 
% $T\circ T\ginv(v) \ge v$ (componentwise). 
% This yields $v\in\anbr{T(\At)}$.  
\par
If $v\in\anbr{T(\At)}$, then $v\le T(x)$ (componentwise) for some $x\in\At$. 
%
% Let $v\in\anbr{T(\At)}$. If $v\in\cup_{u\in T(\At)}[0,u]$, then 
% $v\le T(x)$ for some $x\in\At$. 
Since $F_i\ginv\circ F(x_i) \le x_i$ for all $i$, this yields  
$T\ginv(v) \le x$. As the function $x\mapsto\sumioned x_i$ is 
componentwise non-decreasing, we obtain that $T\ginv(v)\in\At$. 
% If $v\in\anbr{T(\At)} \setminus \cup_{u\in T(\At)}[0,u]$, then 
% $T\ginv(v)\in\At$ follows from the fact that all $F_i\ginv$ are 
% left continuous. 
\end{proof}
\begin{remark} \label{rem:1}
\begin{enumerate}[(a)]
\item 
The measurability of $\anbr{T(\At)}$ follows from the equivalence of $v\in\anbr{T(\At)}$ and $T\ginv(v)\in\At$. 
\item 
The purpose of the operator $\anbr{\cdot}$ is to guarantee 
that for $u\in\anbr{B}$ and $v\in\uintd$ the componentwise ordering 
$v\le u$ implies $v\in\anbr{B}$. This immediately yields 
$\anbr{\anbr{B}} = \anbr{B}$. Consistently with \cite{Dudley:1999}, we will 
call $\anbr{B}$ the \emph{lower layer} of $B$. 
{\secondrevision%
This set class is also mentioned in the context of nonparametric regression 
\citep[cf.][and references therein]{Wright:1981}.}% 
\item \label{item:rem.1.c}
The sets $\anbr{T(\At)}$ are closed if the marginal distributions $F_i$ have 
bounded domains, but not necessarily in the general case.
If, for instance, $F_1=F_2$ are 
standard normal distributions, then $\anbr{T(A_0)}= \cubr{u\in[0,1]^2: u_1+u_2\le 1}\setminus\cubr{(0,1),(1,0)}$. This example with punctured corners is quite 
prototypical. 
It is easy to show that if a sequence $u\sampn$ in $\anbr{T(\At)}$ converges 
to $u\notin\anbr{T(\At)}$, then $u$ is on the boundary of $\uintd$. 
Indeed, if $u\in(0,1)^d$, then $u\sampn\in(0,1)^d$ for sufficiently large $n$.
This allows to construct a sequence $v\sampn \to u$ such that $v\sampn \le u\sampn$ and $v\sampn \le u$ for all $n$. For any $u\sampn\in\anbr{T(A_t)}$ there exists $x\sampn\in \At$ such that 
$T(x\sampn)\ge u\sampn$. As $T\ginv$ is non-decreasing and $F_i\ginv\circ F(x_i) \le x_i$ for all $i$, we have $x\sampn \ge T\ginv(u\sampn) \ge T\ginv(v\sampn)$ and hence $T\ginv(v\sampn)\in\At$ for all $n$. Since all $F_i\ginv$ are left continuous on $(0,1)$, we obtain $T\ginv(v\sampn) \to T\ginv(u)$ and $T\ginv(u)\in\At$. As $F_i\circ F_i\ginv(u_i)\ge u_i$ for all $i$, we obtain $u\in\anbr{T(\At)}$.
Thus $u\notin\anbr{T(\At)}$ is only possible for $u\in\bnd\uintd$.
\par
By construction, $\anbr{T(\At)}$ includes all points $u\in\bnd\uintd$ 
such that $u+\eps e_i\in\anbr{T(\At)}$ for some unit vector $e_i$, $i=1,\ldots,d$, and $\eps>0$. Thus the area where the set $\anbr{T(\At)}$ does not include its boundary points is very small. 
\end{enumerate}
\end{remark}
Let us now return to the estimation of the sum distribution 
$G(t)=\Prob_F(\At)$. 
The empirical distribution of the component sum in the 
synthetic sample~\eqref{eq:047} is nothing else than the 
empirical multivariate distribution of this sample evaluated 
at the sets $\At$:
\begin{align}
\Gnast(t)
&:= \label{eq:065}
\onebyn
\sumkonen 1\cubrfl{X_1^{(R_1^{(k)}:n)}+\ldots+X_d^{(R_d^{(k)}:n)} \le t}\\
&= \nonumber
\onebyn
\sumkonen 1\cubrfl{\robrfl{X_1^{(R_1^{(k)}:n)},\ldots,X_d^{(R_d^{(k)}:n)}} \in\At}. 
\end{align}
{\secondrevision%
Analogously to~\eqref{eq:062}, $\Gnast(t)=\ProbFnast(\At)$ can be written 
in terms of the empirical copula $\Cnast$ defined in~\eqref{eq:063} 
and, as next step, in terms of the 
empirical distribution $\Cn$ of the i.i.d.\ copula sample 
$U\sampone,\ldots,U\sampn$:
\[
\Cn(u) := \onebyn\sumkonen 1 \cubr{U\sampk \in[0,u]},
\quad u\in\uintd.
\]
Let $\Cin$ denote the margins of $\Cn$, and let $\Cin\ginv$ denote the 
corresponding quantile functions.  
To avoid technicalities, we consider $\Cin\ginv$ as mappings from $[0,1]$ to 
$[0,1]$:
\begin{align*} %\label{eq:011}
\Cin\ginv(u) := \inf \cubr{v\in[0,1]: \Cin(v) \ge u},
\quad u\in[0,1].  
\end{align*}
Denote $\tau_n(x):=\robr{\Fonen(x_1),\ldots,\Fdn(x_d)}$ and 
$\Tn:=\rho_n\ginv\circ\tau_n$, where  
$\rho_n\ginv:=\robr{\Conen\ginv(x_1),\ldots,\Cdn\ginv(x_d)}$.
% \Tn(x) := \cubr{\Conen\ginv\circ\Fonen(x_1),\ldots, \Cdn\ginv\circ\Fdn(x_d)}, 
Then we can state the following result. 
\begin{corollary} \label{cor:5}
\begin{align}\label{eq:012}
\forall t\in\R
\quad
\Gnast(t) 
= 
\ProbCnast\robr{\anbr{\tau_n(\At)}} 
\end{align}
and, with probability $1$,
\begin{align} \label{eq:013}
\forall t\in\R
\quad
\Gnast(t) 
= 
\ProbCn\robr{\anbr{\Tn(\At)}}.
\end{align}
\end{corollary}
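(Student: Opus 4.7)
The plan is to derive \eqref{eq:012} as a direct application of Lemma~\ref{lem:5} to the empirical objects $\Fnast$ and $\Cnast$, and then to obtain \eqref{eq:013} from \eqref{eq:012} by exploiting the fact that $\Cnast$ is the image measure of $\Cn$ under the marginal rank transform $\rho_n(u):=(\Conen(u_1),\ldots,\Cdn(u_d))$.

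For~\eqref{eq:012}, equation~\eqref{eq:008} already supplies the Sklar-type factorization $\Fnast(x)=\Cnast(\tau_n(x))$ with margins $\Fin$, and the proof of Lemma~\ref{lem:5} nowhere uses continuity of the margins: it only invokes $F_i\circ F_i\ginv\ge\id$, $F_i\ginv\circ F_i\le\id$, the monotonicity of $x\mapsto\sumioned x_i$, and the distributional identity $T\ginv(U)\sim F$ for $U\sim C$. All of these transfer verbatim to the pair $(\Fnast,\Cnast)$; in particular, $\tau_n\ginv$ maps the $k$-th atom $(R_1\sampk/n,\ldots,R_d\sampk/n)$ of $\Cnast$ onto $\Xtilde\sampk$, so the empirical counterpart of $T\ginv(U)\sim F$ holds by construction. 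Applying Lemma~\ref{lem:5} with the substitution $(F,C,F_i,T)\leftarrow(\Fnast,\Cnast,\Fin,\tau_n)$ then yields $\Gnast(t)=\ProbFnast(\At)=\ProbCnast(\anbr{\tau_n(\At)})$, which is~\eqref{eq:012}.

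For~\eqref{eq:013}, I would first observe that the $k$-th atom of $\Cnast$ sits at $(R_1\sampk/n,\ldots,R_d\sampk/n)=\rho_n(U\sampk)$, so $\ProbCnast$ is the pushforward of $\ProbCn$ under $\rho_n$. Rewriting~\eqref{eq:012} as a preimage measure then gives
\[
\Gnast(t)=\ProbCn\robrfl{\cubrfl{u\in\uintd:\rho_n(u)\in\anbr{\tau_n(\At)}}}.
\]
With probability $1$ the copula sample $U\sampone,\ldots,U\sampn$ has no coordinate ties, in which case the rank--quantile duality delivers, for every $k$, every $x\in\At$, and componentwise in $i$,
\[
\Cin(U_i\sampk)\le\Fin(x_i)\ \Longleftrightarrow\ R_i\sampk\le n\Fin(x_i)\ \Longleftrightarrow\ U_i\sampk\le\Cin\ginv(\Fin(x_i)).
\]
Taking unions over $x\in\At$, the condition $\rho_n(U\sampk)\in\anbr{\tau_n(\At)}$ becomes equivalent to $U\sampk\in\anbr{\rho_n\ginv(\tau_n(\At))}=\anbr{\Tn(\At)}$, and summing the resulting indicators over $k$ produces~\eqref{eq:013}.

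The main obstacle will be the bookkeeping of the rank--quantile duality at the discrete level. Because $\rho_n\ginv$ is piecewise constant, set-theoretic equality between $\cubrfl{u\in\uintd:\rho_n(u)\in\anbr{\tau_n(\At)}}$ and $\anbr{\Tn(\At)}$ need not hold in general; the argument has to exploit only that these two sets agree on the finite atom set $\cubr{U\sampone,\ldots,U\sampn}$, which is precisely where the ``with probability $1$'' qualifier in~\eqref{eq:013} and the left-continuity of $\Cin\ginv$ do the real work.
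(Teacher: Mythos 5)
The proposal is correct and follows essentially the same route as the paper: equation~\eqref{eq:012} is obtained from the Sklar-type factorization~\eqref{eq:008} and the set equivalence established inside the proof of Lemma~\ref{lem:5} (the paper reaches it via the identity $\Xtilde\sampk=\tau_n\ginv\circ\rho_n(U\sampk)$ and a per-atom application of that equivalence, you reach it by invoking Lemma~\ref{lem:5} directly after verifying that the pushforward property $\tau_n\ginv(V)\sim\Fnast$ for $V\sim\Cnast$ holds by construction), while equation~\eqref{eq:013} is obtained in both cases by passing from $\Cnast$ to $\Cn$ through $\rho_n$ and using the a.s.\ absence of ties so that the two candidate sets agree on the finite atom set; your pointwise rank--quantile duality is precisely what the paper packages as ``$\rho_n\ginv$ is componentwise a.s.\ strictly increasing on $\{1/n,\ldots,1\}^d$''.
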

\begin{proof}
It is easy to see that the synthetic sample~\eqref{eq:047} can be written as 
\[
\Xtilde\sampk = \tau_n\ginv\circ\rho_n\robrfl{U\sampk}
, \quad k=1,\ldots,n,
\]
where $\tau_n\ginv(x):=\robr{\Fonen\ginv(x_1),\ldots,\Fdn\ginv(x_d)}$ 
and $\rho_n(x):=\robr{\Conen(x_1),\ldots,\Cdn(x_d)}$. 
This yields 
\begin{align} \label{eq:064}
\Gnast(t)
=
\onebyn\sumkonen 1\cubrfl{\tau_n\ginv\circ\rho_n\robrfl{U\sampk} \in \At}. 
\end{align}
According to the proof of Lemma~\ref{lem:5}, 
$\tau_n\ginv(x)\in\At$ is equivalent to $x\in\anbr{\tau_n(\At)}$.  
Hence~\eqref{eq:064} implies
\[
\Gnast(t)
=
\onebyn\sumkonen 1\cubrfl{\rho_n\robrfl{U\sampk} \in \anbr{\tau_n\At}},
\] 
which is the same as~\eqref{eq:012} because $\Cnast$ is the empirical 
distribution function of $\rho_n(U\sampone),\ldots,\rho_n(U\sampn)$. 
\par
Being continuously distributed,
% Moreover, the continuously distributed 
% as the margins of $C$ are continuous, the 
% random variables 
$U_i\sampone,\ldots,U_i\sampn$ 
have different values $\Prob$-a.s.\ for each $i$.  
Hence the mapping $\rho_n\ginv$ is componentwise $\Prob$-a.s.\ 
strictly increasing on $\cubr{\onebyn,\ldots,1}^d$  
with probability $1$, and therefore 
\begin{align*}
1\cubrfl{\rho_n\robrfl{U\sampk} \in \anbr{\tau_n(\At)}} = 
1\cubrfl{\rho_n\ginv\circ\rho_n\robrfl{ U\sampk} \in \anbrfl{\rho_n\ginv\robr{\anbr{\tau_n(\At)}}}}
\quad
\Prob\text{-a.s.}
\end{align*}
Thus \eqref{eq:013} follows from 
$\rho_n\ginv\circ\rho_n(U\sampk) = U\sampk$
and 
$\anbr{\rho_n\ginv\robr{\anbr{\tau_n(\At)}}} = \anbr{\rho_n\ginv\circ\tau_n(\At)}$.
\end{proof}
\begin{remark}
Uniform consistency of $\Fin$ and $\Cin\ginv$ implies  
$\Tn\to T$ in $\linf(\Rd)$.
\end{remark}
}% 
\section{Complexity of the problem}\label{sec:2a}
The representation~\eqref{eq:013} translates the asymptotic normality of 
$\Gnast$ into a CLT for $\Cn$ uniformly on the random set sequence 
$\robr{\anbr{\Tn(\At)}: n\in \N}$. 
The canonical way to prove results of this kind is to establish a uniform CLT 
on the set class $\Tcald$ of all possible $\anbr{\Tn(\At)}$ and $\anbr{T(\At)}$. 
This is the natural set class to work with if $\Fi$ are unknown and estimated 
empirically. 
The index $d$ in the notation $\Tcald$ highlights the dimension. 
Since $F_i$ need not be continuous, the set of all $T$ includes all $\Tn$, 
so that $\Tcald$ is simply the collection of all possible $\anbr{T(\At)}$.
{\secondrevision
Furthermore, if each unknown margin $F_i$ has a positive density on entire 
$\R$, then the resulting empirical distributions $\Fin$ can take 
any value in the class of all possible stair  
functions on $\R$ with steps of size $1/n$ going from $0$ to $1$. 
In this case the class of all possible $\anbr{\Tn(\At)}$ is dense 
(w.r.t.\ Hausdorff metric) in the class of all possible $T(\At)$. 
Thus, even though there is only one limit transformation $T$ that really 
matters to us, considering the class of all possible 
$\anbr{T(\At)}$ does not add more complexity to the problem.}%
\par
It is also easy to see that even pointwise asymptotic normality of $\Gnast(t)$ 
in some $t=t_0$ would require a uniform CLT on $\Tcald$. 
Shifting the unknown margins $F_i$, one can easily generate all possible sets $T(\At)$  from a single $A_{t_0}$. 
Thus the complexity of the problem is the same for the uniform 
and for the pointwise asymptotic normality of $\Gnast$.
\par
There are various functional CLTs for empirical copulas 
\citep[cf.][and references therein]{Rueschendorf:1976,Deheuvels:1979,Fermanian/Radulovic/Wegkamp:2004,Segers:2012}.
However, empirical copula estimates are empirical measures evaluated on 
the set class $\Rcal_d$ of rectangle cells 
$(-\infty,a]:=(-\infty,a_1]\times\ldots\times(-\infty,a_d]$ for $a\in\Rd$. 
This set class is simple enough to be \emph{universally Donsker}. 
A set class $\Ccal$ is called $\Prob$-\emph{Donsker} if the empirical measure 
$\Pbb_n(B):=\frac{1}{n}\sumkonen 1_B(Y\sampk)$ of an i.i.d.\ sample $Y\sampone,\ldots,Y\sampn \sim \Prob$ 
satisfies
\begin{align} \label{eq:055}
\sqrt{n}\robrfl{\Pbb_n(B) -\Prob(B)} \weakconv \Gbb_\Prob(B)
\end{align}
as a mapping in $\linf(\Ccal)$, where $\Gbb_\Prob$ is the so-called 
Brownian bridge \enquote{with time} $P$. 
That is, $\Gbb_\Prob$ is a centred Gaussian process  
with index $B\in\Ccal$ and covariance structure 
\[
\cov(\Gbb_\Prob(A),\Gbb_\Prob(B)) = \Prob(A \cap B) - \Prob(A)\Prob(B).
\]
The Donsker property of $\Ccal$ is called \emph{universal} if  
it holds for any probability measure $\Prob$ on the sample space. 
\par
The symbol $\weakconv$ in~\eqref{eq:055} refers to the extended notion of 
weak convergence for non-measurable mappings in $\linf(\Ccal)$ 
as used in~\cite{van_der_Vaart/Wellner:1996}. 
See Remark~\ref{rem:6} for further details. 
\par
Sufficient conditions for a set class to be Donsker can be obtained from the 
\emph{entropy} of this set class.  
Entropy conditions can be formulated in terms of \emph{covering numbers}
or \emph{bracketing numbers} 
\citep[cf.][Sections 2.1 and 2.2]{van_der_Vaart/Wellner:1996}. 
%
% The advantage of covering entropy conditions is that they do not depend 
% on the underlying probability measure. 
Entropy bounds that do not depend on the underlying probability measure
are called \emph{uniform}. 
The most common sufficient criterion for uniform entropy bounds guaranteeing 
that a set class is universally Donsker is the  
%
% for uniform entropy of a set class 
% is the
%Sufficient covering entropy conditions can be derived from the 
\emph{\VC\ (VC)} property. % of a set class. 
A set class $\Ccal$ is VC if it does not \emph{shatter} any $n$-point set  
$\cubr{x\sampone,\ldots,x\sampn}$ for sufficiently large $n$. 
The set $\cubr{x\sampone,\ldots,x\sampn}$ is shattered by $\Ccal$ if every 
subset $A\subset\cubr{x\sampone,\ldots,x\sampn}$ can be obtained as $A=B\cap\cubr{x\sampone,\ldots,x\sampn}$ with some $B\in\Ccal$. The smallest $n$ such that no $n$-point set is shattered by $\Ccal$ is called \emph{VC-index} of $\Ccal$. 
\par
It is well known that the set class $\Rcal_d$ is VC with index $d+1$ 
\citep[cf.][Example 2.6.1]{van_der_Vaart/Wellner:1996}.
This yields asymptotic normality of $\Cn(u)$ uniformly in $u\in\uintd$.  
The asymptotic normality of empirical copulas follows then by 
functional Delta method. 
% along with other similar results for alternative 
% notions of empirical copulas. 
These functional CLTs allow to prove asymptotic normality for the estimators of the multivariate distribution function $F(x)$ that are derived from $\Cn$ or $\Cnast$.  
\par
Unfortunately, the problem for $\Gnast$ is much more difficult.
As shown above, a functional CLT for $\Gnast$ is closely related to a 
uniform CLT for $\Cn$ on the set class $\Tcald$. 
The complexity of $\Tcald$ is much higher than that of $\Rcal_d$. 
Lemma~\ref{lem:3} stated below implies that $\Tcald$ is not VC, 
and Remark~\ref{rem:2}(\ref{item:rem.2.b}) 
shows that the complexity of this set class is even so high that 
a uniform CLT on $\Tcal_d$ does not hold. 
\par
Let $\Hcald$ denote the collection of all lower layers in $\uintd$:
\[
\Hcald:=\cubrfl{B\subset\uintd : \anbr{B} = B},
\]
and denote $\Hcaldbar := \cubr{\overline{B} : B \in\Hcald}$. Analogously, 
denote $\Tcaldbar := \cubr{\overline{B} : B \in\Tcald}$. According to 
Remark~\ref{rem:1}(\ref{item:rem.1.c}), $B\in\Tcald$ implies that $\overline{B}\setminus B \subset\bnd\uintd$, 
which is a $\ProbC$-null set for any copula $C$. Thus, for empirical 
processes constructed from copula samples, uniform convergence on $\Tcald$ 
is equivalent to uniform convergence on $\Tcaldbar$.
\par 
It is obvious that $\Tcaldbar\subset\Hcaldbar$. 
The following result shows that for $d=2$ these set classes are almost 
identical. 
\begin{lemma} \label{lem:3}
If $B\in\Hcaltwobar$, then $B\cup\Lambda_2\in\Tcaltwobar$,  
where $\Lambda_2:=\cubr{u\in[0,1]^2: u_1u_2=0}$ is the union of 
lower faces of $[0,1]^2$.  
\end{lemma}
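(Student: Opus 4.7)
\quad The plan is to construct, for a given closed lower layer $B\subset[0,1]^2$, marginal distribution functions $F_1,F_2$ on $\R$ and a threshold $t\in\R$ for which the map $T(x):=(F_1(x_1),F_2(x_2))$ satisfies $\anbr{T(A_t)}=B\cup\Lambda_2$. Since $B\cup\Lambda_2$ is both closed (a finite union of closed sets) and itself a lower layer, this immediately yields $B\cup\Lambda_2\in\Tcaltwobar$. The guiding idea is to choose $F_1,F_2$ so that the line $x_1+x_2=t$ is mapped by $T$ onto the upper-right boundary of $B$, while the portion of $A_t$ outside $[0,1]^2$ is pushed by $T$ onto the axes $\Lambda_2$.

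The first step is to parameterize the upper-right boundary of $B$. Setting $g(u):=\sup\cubr{v\in[0,1]:(u,v)\in B}$ with $\sup\emptyset:=0$ produces a non-increasing function, and compactness of the horizontal slices $B\cap([0,1]\times[c,1])$ shows that $\cubr{u:g(u)\ge c}$ is closed for every $c$, so $g$ is upper semi-continuous and, being non-increasing, left-continuous. Filling the jumps of $g$ by vertical segments and its constant portions by horizontal segments---all of which belong to $B$ by closedness---yields a continuous monotone path $\gamma:[0,1]\to[0,1]^2$, $\gamma(s)=(a(s),b(s))$, with $a$ continuous non-decreasing, $b$ continuous non-increasing, and
\[
B=\bigcup_{s\in[0,1]}[0,a(s)]\times[0,b(s)].
\]

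The second step defines $F_1$ by $F_1(x):=a(x)$ on $[0,1]$, with a continuous strictly increasing extension from $a(1)$ to $1$ on $[1,2]$, and $F_1\equiv 0$ on $(-\infty,0)$, $F_1\equiv 1$ on $[2,\infty)$; symmetrically $F_2(y):=b(1-y)$ on $[0,1]$ with the analogous continuous rise to $1$ on $[1,2]$. Taking $t:=1$, the substitution $s_1:=x_1$, $s_2:=1-x_2$ identifies $A_1\cap[0,1]^2$ with $\cubr{0\le s_1\le s_2\le 1}$, giving
\[
T(A_1\cap[0,1]^2)=\cubr{(a(s_1),b(s_2)):0\le s_1\le s_2\le 1};
\]
monotonicity of $a$ collapses the associated lower layer to $\bigcup_s[0,a(s)]\times[0,b(s)]=B$. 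Points $x\in A_1$ with $x_1<0$ or $x_2<0$ force the corresponding $F_i(x_i)=0$, so $T(x)\in\Lambda_2$; the case $x_1>1$ together with $x_1+x_2\le 1$ forces $x_2<0$ and hence $T(x)\in\Lambda_2$, analogously for $x_2>1$. Conversely, letting $x_2\to-\infty$ while $x_1$ ranges over $[0,2]$ sweeps the whole bottom axis $[0,1]\times\cubr{0}$ into $T(A_1)$, and symmetrically the left axis. Combining these three pieces gives $T(A_1)\subset B\cup\Lambda_2$ and, after taking lower layers, $\anbr{T(A_1)}=B\cup\Lambda_2$.

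The main technical hurdle is producing the continuous monotone parameterization $\gamma$ uniformly in $B$, in particular when $g$ has many discontinuities or when $B$ fails to touch some edge of $[0,1]^2$; degenerate instances such as $B=\emptyset$ or $B\subset\cubr{(0,0)}$ are absorbed by placing all mass of $F_1,F_2$ far out, so that $T(A_t)$ reduces directly to $\Lambda_2$. All remaining verifications are routine manipulations with the lower-layer property and the componentwise monotonicity of $a$ and $b$.
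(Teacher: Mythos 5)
Your proposal is correct, but it takes a genuinely different and more laborious route than the paper. The paper fixes $F_1(t):=t1_{[0,1)}(t)+1_{[1,\infty)}(t)$ to be the uniform distribution function (so $T$ acts as the identity in the first coordinate) and encodes the entire shape of $B':=B\cup\Lambda_2$ in $F_2$ alone, via $F_2(t):=\sup\cubr{s\in[0,1]:(-t,s)\in B'}$ for $t\in[-1,0)$. With $t=0$ the image $T(\bnd A_0)$ is then literally the graph of $u_1\mapsto F_2(-u_1)$, and its lower layer is $B'$ by inspection. No auxiliary parameterization is needed; the only analytic fact used is that $B'$ being a closed lower layer makes this slice-supremum a right-continuous non-decreasing function, hence a distribution function. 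You instead spread the geometry across both margins, which forces you to build a \emph{continuous} monotone path $\gamma=(a,b)$ tracing the upper-right boundary of $B$ --- the very step you flag as the ``main technical hurdle.'' That step can be made rigorous (it is the standard arc-length-style parameterization of a monotone staircase) but it is exactly the kind of work the paper's asymmetric choice $F_1=\id$ eliminates: with $F_1$ the identity, the boundary parameter \emph{is} the first coordinate and no reparameterization is required, and $F_2$ is allowed to jump because it only needs to be right-continuous, not continuous. Your argument also uses $t=1$ rather than $t=0$, which is inessential. In short, your proof buys nothing extra here; the paper's construction is a strict simplification of your plan, obtained by realizing that the second margin alone can carry the whole burden.
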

\begin{proof}
Denote $B':=B\cup\Lambda_2$. 
It suffices to find probability distribution functions $F_1,F_2$ such that 
$\anbr{T(A_0)}=B'$ with $T(x)=(F_1(x_1),F_2(x_2))$. 
Denote 
\[
F_1(t):=t1_{[0,1)}(t) + 1_{[1,\infty)}(t)
\]
and
\[
F_2(t):=
\begin{cases}
0 & \text{if }t<-1\\
\sup\cubr{s\in[0,1]: (-t,s)\in B'} & \text{if }t\in[-1,0)\\
1 & \text{if }t\ge0
\end{cases}.
\]
Since $B'\in\Hcaltwobar$, the function $F_2$ is non-decreasing in $t$.
Indeed, if $(-t,s)\in B'$, then $(-t-\delta,s)\in B'$ for any $\delta\in(0,t]$, 
and hence $F_2(t+\delta)\ge F_2(t)$. 
The maximal value of $F_2$ is $1$, and $F_2$ is right continuous because $B'$ is 
closed. Thus $F_2$ is a probability distribution function. 
\par
As $F_1$ and $F_2$ are non-decreasing, we have 
\[
\anbr{T(A_0)} = \anbr{T(\bnd A_0)}
\]
where $\bnd A_0:= \cubr{(-t,t): t\in \R}$ is the boundary of $A_0$. Now observe 
that 
\begin{align*}
T(\bnd A_0) 
=
\cubrfl{(F_1(t), F_2(-t)) : t\in\R} 
=
\cubrfl{(t,F_2(-t)) : t\in[0,1]}.  
\end{align*}
Hence $\anbr{T(A_0)}$ is the area enclosed between the zero line and 
the graph of $F_2(-t)$ for $t\in[0,1]$. This is precisely $B'$. 
\end{proof}
\begin{remark} \label{rem:2}
\begin{enumerate}[(a)]
\item
Since $\Lambda_2$ is a $\ProbC$-null set for any copula $C$, 
the modification of $B$ into $B\cup\Lambda_2$ in 
Lemma~\ref{lem:3} has no influence on the uniform convergence of empirical 
processes obtained from copula samples. 
\item \label{item:rem.2.a}
The set classes $\Hcaldbar$ and $\Hcald$ are not VC. 
%Indeed, even $\Hcaldgrid$ is not VC. 
For instance, they shatter all sets 
$\cubr{u\in\cubr{0,\frac{1}{n},\ldots,1}^d: u_1+\ldots+u_d=1}$ for $n\in\N$. 
Any subset $B$ of this hyperplane in $\cubr{0,\frac{1}{n},\ldots,1}^d$ 
can be picked out 
by $\anbr{B}\in\Hcaldbar\subset\Hcald$. 
Similar arguments apply to the modified set class 
$\cubr{B\cup\Lambda_d: B\in\Hcaldbar}$. 
Hence Lemma~\ref{lem:3} implies that $\Tcaltwobar$ is 
not VC. This rules out the canonical usage of VC criteria in 
convergence proofs for $\Gnast$.
\item \label{item:rem.2.b}
The problem is even more difficult, and also more remarkable. 
In fact, the set classes $\Hcald$ and $\Hcaldbar$ for $d\ge2$ are not Donsker 
with respect to the Lebesgue measure on $\uintd$ \citep[cf.][Theorems 8.3.2, 12.4.1, and 12.4.2]{Dudley:1999}. 
Thus Lemma~\ref{lem:3} implies that a uniform CLT on $\Tcaltwobar$ does not hold in the most basic case, when $C$ is the independence copula.
Therefore one cannot prove asymptotic normality of $\Gnast$ via uniform CLT on $\Tcal_2$, and precise asymptotic variance of $\Gnast$ also seems out of reach. 
\item\label{item:rem.2.c}
The complexity issues are not specific to the estimator $\Gnast$ obtained by
plugging empirical margins $\Fin$ into the rank based empirical copula $\Cnast$. 
They also affect models generated by plugging $\Fin$ directly into the \enquote{exact} copula $C$. Top-down simulation of such models means marginal 
transformation of copula samples $U\sampone,\ldots,U\sampn\sim C$ by $\Fin\ginv$. The resulting estimate of the component sum distribution $G(t)$ can be written as $\ProbCn(\anbr{\tau_n(A_t)})$ where 
$\tau_n(x):=(\Fonen(x_1),\ldots,\Fdn(x_1))$. 
The sets $\anbr{\tau_n(\At)}$ feature the same stair shape as the sets 
$\anbr{\Tn(\At)}$. Asymptotic normality of $\ProbCn(\anbr{\tau_n(\At)})$ 
leads us again to a uniform CLT on $\Tcaldbar$. 
\item
It is not yet clear whether Lemma~\ref{lem:3} can be extended to 
$B\in\Hcaldbar$ for $d>2$.  
However, the complexity of $\Hcaldbar$ can only increase for greater $d$. 
It is easy to embed $\Hcaltwobar$ in $\Hcaldbar$ for $d>2$ by identifying 
$\Hcaltwobar$ with the following subclass:
\[
\Hcalbar_{d|1,2}\cubrfl{B\in\Hcaldbar: B= B'\times[0,1]^{d-2}, B'\in\Hcaltwobar}.
\]
Setting $F_i(t):=1_{[0,\infty)}(t)$ for $i=3,\ldots,d$ in the proof of 
Lemma~\ref{lem:3}, the result obtained there can be 
extended to $B\cup\Lambda_d\in\Tcaldbar$ for $B\in\Hcalbar_{d|1,2}$.
%  in the sense that the set class inclusion 
% $\cubr{B\cup\Lambda_d:B\in\Hcalbar_{d|1,2}} \subset \Tcaldbar$. 
This allows to extend the conclusions in~(\ref{item:rem.2.a},\ref{item:rem.2.b},\ref{item:rem.2.c}) to all dimensions $d>2$. 
\end{enumerate}
\end{remark}
% %
\par %
\begin{remark}
\begin{enumerate}[(a)]
\item
The results of this section can be summarized as follows: 
The true target set class $\HcalT:=\cubr{\anbr{T(\At)} : t\in\R}$ is simple 
(it will be shown in Remark~\ref{rem:8} that $\HcalT$ is VC with index 2), 
but unknown. 
Replacing these unknown margins by the empirical ones, we obtain random 
elements of the set class $\Tcald$, which is too complex for a uniform CLT. 
This is the reason why the convergence proofs presented below sacrifice 
precise asymptotic variance.  
The resulting loss of precision can be considered as the price one is forced to be pay for using 
empirical margins $\Fin$ instead of the true ones. 
\item
Similar issues can also arise when the margins are known, but the copula is not. 
The implicit use of transformations $\rho_n$ in $\Cnast$ (cf.\ proof of Corollary~\ref{cor:5}) 
entails deformations of target sets $B\in\HcalT$ that are very similar to the ones caused by $\Tn$ or $\tau_n$. 
Thus, depending on the application, loss of precision may also be caused by the use of an empirical copula. 
In particular, a uniform CLT for $\Cnast$ on the set class $\HcalT$ is still an open problem, 
and the foregoing results suggest a plausible explanation why this problem is so hard. 
\item
A deeper reason behind these complexity issues 
is the typical
% non-rectangular 
shape of the target sets $B\in\HcalT$. 
If the margins or the copula are estimated empirically,  
corresponding random transformations of $B$ 
($\tau_n$ and $\rho_n\ginv$ in Corollary~\ref{cor:5})  
significantly increase the complexity of the problem. 
Depending on the application, the resulting loss of  
precision can be attributed to empirical margins, 
empirical copulas, or both. 
\end{enumerate}
% Thus, despite the simplicity of our true, but unknown target set class 
% $\HcalT:=\cubr{\anbr{T(\At)} : t\in\R}$ (it will be shown in Remark~\ref{rem:8} that $\HcalT$ is VC with index 2), 
% the usage of empirical margins $\Fin$ in the construction of the 
% estimator $\Gnast$ takes us out of the area covered by 
% classic convergence results for empirical copulas. 
% %
% %
% The convergence proofs presented below use approximations, thus 
% sacrificing the precise asymptotic variance.  
% In some sense, this loss of precision is 
% the price one is forced to pay for the empirical margins $\Fin$. 
\end{remark}
\section{Convergence results}\label{sec:3}
The major problem studied in this paper 
is the uniform convergence of 
{\secondrevision%
the Iman--Conover estimator 
$\Gnast$ introduced in~\eqref{eq:065}.  }%  
Strong consistency in $\linf(\R)$ is established in Theorem~\ref{thm:1}.   
A sufficient condition for the convergence rate to be $n^{-1/2}$ is  
given in Theorem~\ref{thm:2}. 
%This section provides results on uniform convergence of $\Gnast$ to $G$. 
These results are stated below and followed by some corollaries, remarks, and 
auxiliary results needed in the proofs. 
The technical proofs of the auxiliary results are provided in Section~\ref{subsec:3.2}.
\par
\begin{theorem} \label{thm:1}
{\secondrevision%
Let $G(t):=\Prob(X_1+\ldots+X_d \le t)$ for $X\sim F$ 
as defined in~\eqref{eq:052}, 
and let $\Gnast$ be the Iman--Conover estimator of $G$ 
introduced in~\eqref{eq:065}.  }%  
Assume that  the functions $\Fin$ defined in~\eqref{eq:014} satisfy 
\begin{align} \label{eq:028}
\norm{\Fin - \Fi}_\infty \to 0 \ \Prob\text{-a.s.}
,\quad i=1,\ldots,d.
\end{align}
If the copula $C$ of $F$ is Lebesgue absolutely continuous, 
% and 
% \begin{align}\label{eq:018}
% \lim_{\delta\to 0} \sup_{t\in\R} \ProbC(\Udelta(\Bt)) = 0,  
% \end{align}
then $\norm{\Gnast - G}_\infty \to 0$ $\Prob$-a.s.
\end{theorem}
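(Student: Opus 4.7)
The plan is to exploit the representations $\Gnast(t)=\ProbCn(\anbr{\Tn(\At)})$ from Corollary~\ref{cor:5} and $G(t)=\ProbC(\anbr{T(\At)})$ from Lemma~\ref{lem:5}, write $B_n(t):=\anbr{\Tn(\At)}$ and $B(t):=\anbr{T(\At)}$ (both elements of $\Hcald$), and split
\[
|\Gnast(t)-G(t)|\le|\ProbCn(B_n(t))-\ProbC(B_n(t))|+\ProbC\robr{B_n(t)\symdiff B(t)}.
\]
The strategy is then to bound the first term uniformly by a Glivenko--Cantelli statement over the whole class $\Hcald$ of lower layers, and the second term uniformly using the uniform closeness $\Tn\to T$ together with Lebesgue absolute continuity of $\ProbC$.

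\emph{First term.} I would establish that $\Hcald$ is $\ProbC$-Glivenko--Cantelli whenever $\ProbC$ is Lebesgue absolutely continuous, despite $\Hcald$ not being VC (Remark~\ref{rem:2}(\ref{item:rem.2.a})). The construction proceeds via finite Lebesgue brackets: partition $[0,1]^{d-1}$ into a grid of side $\delta$ and, for each $L\in\Hcald$, represent $L$ by its monotone staircase $h(u')=\sup\cubr{u_d:(u',u_d)\in L}$ (non-increasing in $u'$ by the lower-layer property) and bracket $h$ by its ceiling $h^+$ and floor $h^-$ on the grid. This yields finitely many bracketing lower layers $L^-\subset L\subset L^+$ whose Lebesgue gap satisfies $\lambda(L^+\setminus L^-)=O(\delta)$ by the tube estimate below. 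Absolute continuity of $\ProbC$ converts $O(\delta)$ Lebesgue bounds into $\ProbC$-bounds tending to zero with $\delta$, hence finite bracketing numbers and the desired GC property.

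\emph{Second term.} Let $\eps_n:=\norm{\Tn-T}_\infty$, which tends to $0$ a.s.\ by the remark after Corollary~\ref{cor:5} (using the assumption \eqref{eq:028} together with Glivenko--Cantelli applied to the uniform margins $\Cin$, giving $\norm{\Cin\ginv-\id}_\infty\to0$ a.s.). If $u\in B_n(t)$, then $u\le\Tn(x)\le T(x)+\eps_n\mathbf{1}$ for some $x\in\At$, so $(u-\eps_n\mathbf{1})\vee 0\in B(t)$, showing $B_n(t)$ lies in the $L^\infty$-$\eps_n$-neighborhood of $B(t)$; the reverse inclusion is analogous. Hence $B_n(t)\symdiff B(t)$ is contained in the $\eps_n$-tube around $\bnd B(t)$. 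For any lower layer with staircase $h$, this tube is contained in $\cubr{(u',u_d): h(u'+\eps_n\mathbf{1})-\eps_n\le u_d\le h(u'-\eps_n\mathbf{1})+\eps_n}$, and integrating over $u'\in[0,1]^{d-1}$ gives Lebesgue mass $\le 2\eps_n+O(\eps_n)$, since the two shifted integrals of $h$ cancel up to boundary contributions of order $\eps_n$. This bound is uniform over all $L\in\Hcald$, so $\sup_t\lambda(B_n(t)\symdiff B(t))\le C_d\eps_n$. Absolute continuity $\ProbC\ll\lambda$ together with the standard $\delta$-$\eta$ argument then yields $\sup_t\ProbC(B_n(t)\symdiff B(t))\to 0$ a.s.

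Combining both bounds gives uniform strong consistency. The main obstacle is the first step: the class $\Hcald$ has infinite VC dimension and, as emphasized in Remark~\ref{rem:2}, fails to be Donsker even under Lebesgue, so one genuinely needs to invoke absolute continuity of $\ProbC$ to obtain the GC property by bracketing. This is also the reason why the assumption on $C$ appears in the theorem, and why a CLT is not accessible by the same route.
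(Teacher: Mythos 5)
Your proposal is correct, and it takes a genuinely different route from the paper's proof, though it rests on the same three ingredients (the monotone tube/volume estimate, absolute continuity to turn Lebesgue bounds into $\ProbC$-bounds, and a bracketing Glivenko--Cantelli argument). The paper decomposes the error as $\ProbCn(\anbr{\Tn(\At)}\symdiff\anbr{T(\At)})+\abs{\ProbCn(\anbr{T(\At)})-\ProbC(\anbr{T(\At)})}$, which puts the symmetric difference under the \emph{empirical} measure; this forces a further split through the tube class $\Dcaldeltazer$ and its Glivenko--Cantelli property (Lemma~\ref{lem:4}(\ref{item:lem.4.b})), while the remaining term is easy because the fixed monotone class $\HcalT$ is VC. You instead decompose as $\abs{\ProbCn(B_n(t))-\ProbC(B_n(t))}+\ProbC(B_n(t)\symdiff B(t))$, placing the symmetric difference under the exact measure $\ProbC$, so that the Lebesgue tube estimate (essentially Lemma~\ref{lem:1}) plus $\ProbC\ll\lambda$ gives uniform control at once. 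The price is that the first term now requires a Glivenko--Cantelli law over the \emph{full} class $\Hcald$ of lower layers, since $B_n(t)$ is a random element of it; you supply this through finite Lebesgue bracketing of monotone staircases (the $L_1(\lambda)$ bracketing number of $\Hcald$ is finite for each $\eps$, growing like $\exp(c\,\eps^{-(d-1)})$, which is precisely why the uniform CLT fails, consistently with the Dudley references) and then transfer the $\lambda$-brackets to $\ProbC$-brackets by truncating the copula density. Your route is slightly more direct and makes the obstruction to asymptotic normality more transparent; the paper's route deliberately restricts attention to $\HcalT$ and $\Dcaldeltazer$ because it reuses Lemma~\ref{lem:4}(\ref{item:lem.4.c}) under the entropy condition~\eqref{eq:025} to get the $O_\Prob(n^{-1/2})$ rate of Theorem~\ref{thm:2}, where Glivenko--Cantelli for all of $\Hcald$ is no longer sufficient and a Donsker property for $\Hcald$ is not available.
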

\par
%The result on the convergence rate of $\Gnast$ needs additional notation.  
As discussed in Remark~\ref{rem:2}(\ref{item:rem.2.b}), a CLT for $\Gnast$ seems out of reach, 
so that the convergence rate is established as an $O_\Prob(n^{-1/2})$ bound. 
This notation is related to tightness:   
$Z_n=O_\Prob(1)$ means that $Z_n$ is tight, and 
$Z_n=O_\Prob(a_n)$ is equivalent to $a_n^{-1} Z_n = O_\Prob(1)$.
% That is, $Z_n=O_\Prob(1)$ is tightness, i.e., 
% that for any $\eps>0$ there exists $M>0$ such that 
% $\Prob\robr{\abs{Z} > M } \le \eps$ for all $n$.
% Accordingly, $Z_n=O_\Prob(a_n)$ means $a_n^{-1} Z_n = O_\Prob(1)$. 
In particular, if $\abs{Z_n} \le \abs{Y_n}$ and $Y_n\weakconv Y$, then 
$Z_n=O_\Prob(1)$. 
%
%
% This allows to establish convergence of $\Gnast$, but the convergence rate 
% can only be formulated as an $O_\Prob(n^{-1/2})$ statement.
% \par
% The key step in the proof of this convergence rate is to verify that 
% \begin{align} \label{eq:051} 
% \sqrtn\robrfl{\ProbCn(U_{Y_n}(\Bt)) - \ProbC(U_{Y_n}(\Bt))} \weakconv 0
% \end{align}
% where $Y_n=o_\Prob(1)$ and
\par 
The regularity assumptions also need some additional notation. 
In the following, let $\Bt$ denote 
% To circumvent the technical difficulties outlined above,
% we enclose the set difference 
% $\anbr{\Tn(\At)}\symdiff\anbr{T(\At)}$ in a neigbourhood 
% of 
the \enquote{upper} boundary 
%$\Bt$ 
of $\anbr{T(\At)}$:
\begin{align*} %\label{eq:050}
\Bt:=\cubrfl{x \in \overline{\anbr{T(\At)}} : \forall \eps>0 \quad x+\robr{\eps,\ldots,\eps} \notin \overline{\anbr{T(\At)}}}
\quad
t\in\R, 
\end{align*}
and let $\Udelta(\Bt)$ denote the closed $\delta$-neighbourhood of $\Bt$ in Euclidean 
distance:
\begin{align} \label{eq:017}
\Udelta(\Bt) := \cubrfl{u \in \uintd: \absfl{u-v} \le \delta \text{ for some } v\in\Bt}. 
\end{align}
%with $\abs{u-v}$ being the Euclidean distance in $\Rd$. 
% To obtain a convergence rate for $\Gnast$, we 
% need an additional condition 
One of the regularity assumptions in Theorem~\ref{thm:2} 
specifies the probability mass that 
the copula $C$ assigns to $\Udelta(\Bt)$. 
The other one involves the Lebesgue density $c$ of $C$. 
For $\eps\in(0,1/2)$, we denote 
\[
K(\eps) := \esssup \cubrfl{c(u) : u\in\uintepsd}.  
\]
The growth of $K(\eps)$ for $\eps\to 0$ specifies the behaviour of $c$ 
near the boundary of $\uintd$. 
\par
\begin{theorem} \label{thm:2}
{\secondrevision%
Let $G(t):=\Prob(X_1+\ldots+X_d \le t)$ for $X\sim F$ 
as defined in~\eqref{eq:052}, 
and let $\Gnast$ be the Iman--Conover estimator of $G$ 
introduced in~\eqref{eq:065}.  }% 
Assume that the functions $\Fin$ defined in~\eqref{eq:014} satisfy 
\begin{align} \label{eq:040}
\norm{\Fin - \Fi}_\infty  = O_\Prob(n^{-1/2})
,\quad i=1,\ldots,d,
\end{align}
and that the copula $C$ of $F$ is absolutely continuous and satisfies
\begin{align} \label{eq:024}
\sup_{t \in \R} \ProbC(\Udelta(\Bt)) = O(\delta)
\end{align}
for $\delta \to 0$ and $\Udelta(\Bt)$ defined in~\eqref{eq:017}. 
Further, assume that 
\begin{align} \label{eq:025}
\int_0^{1/2} \sqrt{ \log K \robrfl{\eps^2}} \dm \eps < \infty.
\end{align}
Then 
$\norm{\Gnast - G}_\infty = O_\Prob(n^{-1/2})$. 
\end{theorem}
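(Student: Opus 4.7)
The plan is to decompose the estimation error using Lemma~\ref{lem:5} and Corollary~\ref{cor:5} and then bound the two resulting pieces separately. With probability one, for every $t\in\R$,
\begin{align*}
\Gnast(t) - G(t)
&= \sqbrfl{\ProbCn(\anbr{\Tn(\At)}) - \ProbC(\anbr{\Tn(\At)})} \\
&\quad {}+ \sqbrfl{\ProbC(\anbr{\Tn(\At)}) - \ProbC(\anbr{T(\At)})} ;
\end{align*}
call these the \emph{empirical process term} (I) and the \emph{set-perturbation term} (II).

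For (II), I would first establish $\norm{\Tn - T}_\infty = O_\Prob(n^{-1/2})$. Assumption~\eqref{eq:040} provides this rate for $\tau_n - T$, while the DKW inequality applied to the i.i.d.\ copula sample $U\sampone,\ldots,U\sampn \sim C$ gives $\norm{\Cin - \id}_\infty = O_\Prob(n^{-1/2})$ and hence $\norm{\Cin\ginv - \id}_\infty = O_\Prob(n^{-1/2})$ by the usual inversion argument; composition then yields the rate for $\Tn$. Next, using the equivalence $v\in\anbr{T(\At)} \Leftrightarrow T\ginv(v)\in\At$ from the proof of Lemma~\ref{lem:5}, I would argue that whenever $\norm{\Tn - T}_\infty \le \delta$ the symmetric difference $\anbr{\Tn(\At)} \symdiff \anbr{T(\At)}$ is contained in the Euclidean neighbourhood $U_{\delta\sqrt{d}}(\Bt)$. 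Assumption~\eqref{eq:024} then bounds its $\ProbC$-mass by $O(\delta)$ uniformly in $t$, giving $\sup_{t\in\R}\abs{(\text{II})} = O_\Prob(n^{-1/2})$.

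For (I), the canonical Donsker route fails because $\Tcaldbar$ is too complex (Remark~\ref{rem:2}(\ref{item:rem.2.b})), so I would split the domain into the interior $\uintepsd$ and the boundary strip $\uintd\setminus\uintepsd$ for $\eps>0$ to be optimised. Since every margin of $C$ is uniform, $\ProbC(\uintd\setminus\uintepsd) \le 2d\eps$, and the DKW inequality applied to the marginals of $\Cn$ yields $\sup_{B\in\Tcaldbar}\abs{\ProbCn(B\setminus\uintepsd) - \ProbC(B\setminus\uintepsd)} = O(\eps) + O_\Prob(n^{-1/2})$. On the interior the density of $C$ is bounded by $K(\eps)$, which allows one to bound the $L_2(\ProbC)$ bracketing numbers of $\cubrfl{B\cap\uintepsd : B\in\Tcaldbar}$ by those of lower layers for Lebesgue measure on $\uintepsd$, rescaled by $\sqrt{K(\eps)}$. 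A standard bracketing maximal inequality \citep[cf.][Theorem~2.14.2]{van_der_Vaart/Wellner:1996} then controls the interior contribution, and~\eqref{eq:025} is precisely what keeps the resulting entropy integral finite; the substitution $\eps\leftrightarrow\eps^2$ inside the integrand encodes the relation $\norm{1_A - 1_B}_{L_2(\ProbC)}^2 = \ProbC(A\symdiff B)$ between $L_2$-brackets of indicator classes and $L_1$-brackets of the underlying sets. Choosing $\eps=\eps_n \sim n^{-1/2}$ balances the two sources of error and yields $\sup_{t\in\R}\abs{(\text{I})} = O_\Prob(n^{-1/2})$. Combining (I) and (II) proves the theorem.

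The hardest part will be the interior bound for (I): because $\Tcaldbar$ is not Donsker even with respect to Lebesgue measure on $\uintd$, no genuine functional CLT is available and one must settle for tightness at rate $n^{-1/2}$. The interplay between the boundary-mass $O(\eps_n)$ and the bracketing penalty (which depends on $K(\eps_n^2)$) dictates the precise form of assumption~\eqref{eq:025}, and is the reason the theorem's conclusion is only an $O_\Prob(n^{-1/2})$ bound rather than a CLT with identifiable asymptotic variance.
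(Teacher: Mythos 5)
Your decomposition of the error into an empirical-process term (I) and a set-perturbation term (II) is a natural first attempt, and your treatment of (II) essentially matches the paper (including the $O_\Prob(n^{-1/2})$ rate for $\|\Tn - T\|_\infty$ and the reduction to $\ProbC(U_\delta(\Bt)) = O(\delta)$ via~\eqref{eq:024}). But the argument for (I) has a genuine gap: you are trying to control $\sup_t |\ProbCn\anbr{\Tn(\At)} - \ProbC\anbr{\Tn(\At)}|$ by a uniform bound over the class $\Tcaldbar$, i.e.\ over an essentially dense sub-collection of lower layers, and the interior/boundary split does not rescue this. The obstruction identified in Remark~\ref{rem:2}(\ref{item:rem.2.b}) is \emph{not} that the density is unbounded near $\bnd\uintd$: it is that the bracketing entropy of $\Hcal_2$ (closed lower layers) is already too large for the Lebesgue measure on the whole cube. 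For $d=2$ the $L_1(\lambda)$ bracketing entropy of lower layers is of order $\eps^{-1}$, so the $L_2$ entropy at scale $\eps$ is of order $\eps^{-2}$ and $\int_0 \eps^{-1}\,\dm\eps$ diverges. Restricting to $\uintepsd$ and rescaling by $\sqrt{K(\eps)}$ only multiplies this infinite integral by a constant; it does not make it finite. Assumption~\eqref{eq:025} governs the growth of $K$ near the boundary and has no influence on the lower-layer entropy in the interior — so it is not ``precisely what keeps the entropy integral finite'' for your class, and Theorem~2.14.2 gives no usable bound. Choosing $\eps_n \sim n^{-1/2}$ does not repair this, because the maximal inequality is vacuous when the entropy integral diverges.

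The paper sidesteps this by a different decomposition:
\begin{align*}
\Gnast(t) - G(t)
&= \sqbrfl{\ProbCn\anbr{\Tn(\At)} - \ProbCn\anbr{T(\At)}}
 + \sqbrfl{\ProbCn\anbr{T(\At)} - \ProbC\anbr{T(\At)}},
\end{align*}
so that the empirical process $\ProbCn - \ProbC$ is evaluated only on the \emph{fixed} one-parameter class $\HcalT = \cubr{\anbr{T(\At)} : t\in\R}$, which is VC of index $2$ and hence universally Donsker (Lemma~\ref{lem:4}(\ref{item:lem.4.a})). The remaining first bracket is bounded in absolute value by $\ProbCn(U_{Y_n}(\Bt))$ with $Y_n = \|\Tn - T\|_\infty$, which the paper writes as $\sqbr{\ProbCn - \ProbC}(U_{Y_n}(\Bt)) + \ProbC(U_{Y_n}(\Bt))$: the second piece is $O_\Prob(n^{-1/2})$ by~\eqref{eq:024} and $Y_n = O_\Prob(n^{-1/2})$, and the first piece is $O_\Prob(n^{-1/2})$ because the two-parameter class $\Dcaldeltazer = \cubr{\Udelta(\Bt): \delta\in[0,\delta_0], t\in\R}$ has bracketing number $O(\eps^{-2}K(\eps/8d))$ — polynomial in $1/\eps$ up to the $K$-factor — and is $\ProbC$-Donsker by Lemma~\ref{lem:4}(\ref{item:lem.4.c}); this is where~\eqref{eq:025} actually enters. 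The crucial structural point you are missing is that one never needs uniform convergence of $\ProbCn$ over the rich random class $\Tcaldbar$; the argument confines the uniform CLTs to the two small, well-parametrized families $\HcalT$ and $\Dcaldeltazer$.
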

\begin{remark}
The shapes of the sets $\Bt$ strongly depend on the marginal 
distributions $F_i$. 
Two examples of set families $\cubr{\Bt : t \in \R}$ are given in 
Figure~\ref{fig:3}. 
The verification of~\eqref{eq:024} for some copula families is 
discussed in Section~\ref{sec:4}. The examples presented there 
suggest that this condition is non-trivial, 
and that it depends on the interplay between the copula $C$ 
and the true, unknown margins $\Fi$. 
%The resulting uncertainty is further increased by the fact that $\Fi$ are unknown.  
\par
\begin{figure} %[h!]
\centering
\includegraphics[width=\textwidth]{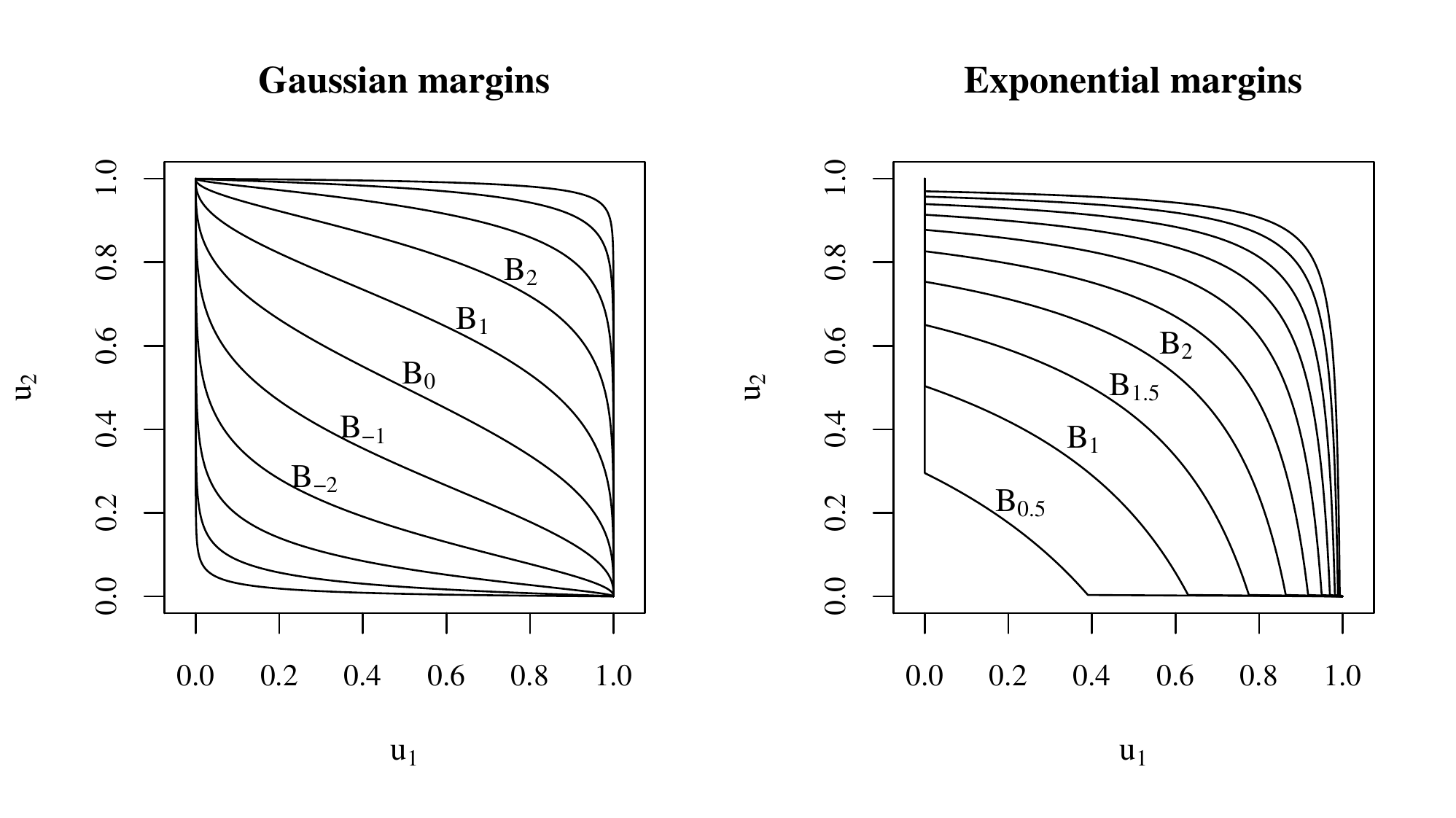}
\caption{Set families $\cubr{B_t:t\in\R}$.  
Left hand side:  $X_1\sim\Ncal(0,1)$ and $X_2\sim\Ncal(0, 1/4)$.
Right hand side: $X_1\sim\ExpDistr(1)$ and  $X_2\sim\ExpDistr(0.7)$.} 
\label{fig:3}
\end{figure}
\end{remark}
\par
We proceed with an auxiliary result that gives us an upper bound for 
the volume of  $\Udelta(\Bt)$. 
It follows from the componentwise monotonicity of the transformation $T$. 
The general idea behind this result is that the \enquote{surface area} 
of $\Bt$ is
bounded by the sum of its $d$ projections on the lower faces of the 
unit square $\uintd$, and the \enquote{thickness} of $\Udelta(\Bt)$
is roughly $2\delta$. The proof is given in Section~\ref{subsec:3.2}.
\begin{lemma}
\label{lem:1}
Let $\lambda$ denote the Lebesgue measure on $\uintd$. Then 
$\lambda(\Udelta(\Bt)) \le 2d\delta$. 
\end{lemma}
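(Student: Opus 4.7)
The plan is to reduce the problem to a one-dimensional bound via Fubini and the monotonicity of the lower-layer boundary. Since $|u-v|_\infty \le |u-v|_2$, we have $\Udelta(\Bt) \subseteq U^\infty_\delta(\Bt) := \{u \in \uintd : \exists v \in \Bt,\ \max_j |u_j-v_j| \le \delta\}$, so it suffices to show $\lambda(U^\infty_\delta(\Bt)) \le 2d\delta$. Fix $i \in \{1,\ldots,d\}$ and define
\[
\phi_i(u_{-i}) := \sup\{s \in [0,1] : (u_{-i},s) \in \overline{\anbr{T(\At)}}\}\quad(\sup\emptyset := 0),
\]
extended by $1$ on the "low" side (some coordinate $<0$) and by $0$ on the "high" side (some coordinate $>1$). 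The lower-layer property of $\anbr{T(\At)}$ makes $\phi_i$ non-increasing in each of its $d-1$ coordinates, and any $v \in \Bt$ clearly satisfies $v_i \le \phi_i(v_{-i})$.

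Next I would show that $v_i \ge \phi_i^\star(v_{-i})$ for every $v \in \Bt$, where $\phi_i^\star(w) := \lim_{\eps\to 0^+} \phi_i(w+\eps\mathbf 1)$ is the diagonal right-limit (the limit exists by monotonicity). Indeed, if $v_i < \phi_i^\star(v_{-i})$, then for small $\eps>0$ one has $\phi_i(v_{-i}+\eps\mathbf 1) > v_i + \eps/2$, i.e., $(v_{-i}+\eps\mathbf 1,\, v_i+\eps/2) \in \overline{\anbr{T(\At)}}$; but then $v + (\eps/2)\mathbf 1$ is componentwise below this point, so by the lower-layer property $v + (\eps/2)\mathbf 1 \in \overline{\anbr{T(\At)}}$, contradicting $v \in \Bt$. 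Thus $v_i \in [\phi_i^\star(v_{-i}),\, \phi_i(v_{-i})]$.

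Applying Fubini in direction $i$ and using the non-increasing character of $\phi_i$ and $\phi_i^\star$, the fiber $F_i(u_{-i}) := \{u_i : (u_{-i},u_i) \in U^\infty_\delta(\Bt)\}$ satisfies
\[
\lambda_1(F_i(u_{-i})) \le \phi_i(u_{-i}-\delta\mathbf 1) - \phi_i^\star(u_{-i}+\delta\mathbf 1) + 2\delta.
\]
The diagonal-discontinuity set $\{w : \phi_i^\star(w) < \phi_i(w)\}$ has $(d-1)$-dimensional Lebesgue measure zero: along any diagonal ray $\{w+t\mathbf 1\}$ the restriction of $\phi_i$ is monotone with countably many jumps, and Fubini yields the claim. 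Consequently $\phi_i^\star = \phi_i$ almost everywhere and the integrand equals $\phi_i(u_{-i}-\delta\mathbf 1) - \phi_i(u_{-i}+\delta\mathbf 1) + 2\delta$ a.e.

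Finally, the constant $2\delta$ contributes $2\delta$ when integrated over $[0,1]^{d-1}$. For the $\phi_i$-difference I would telescope the shift from $-\delta\mathbf 1$ to $+\delta\mathbf 1$ as a sum of $d-1$ single-coordinate shifts by $2\delta$; by Fubini each summand reduces to the one-dimensional estimate $\int_0^1 [\psi(s-\delta)-\psi(s+\delta)]\,ds \le 2\delta$, valid for any non-increasing $\psi:\R \to [0,1]$ with the same boundary extensions (the left-hand side evaluates to $\delta + \int_0^\delta \psi - \int_{1-\delta}^1 \psi \le 2\delta$). Summing yields $\int [\phi_i(u_{-i}-\delta\mathbf 1)-\phi_i(u_{-i}+\delta\mathbf 1)]\,du_{-i} \le 2(d-1)\delta$, hence $\lambda(\Udelta(\Bt)) \le \lambda(U^\infty_\delta(\Bt)) \le 2d\delta$. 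The main subtlety is establishing the lower bound $v_i \ge \phi_i^\star(v_{-i})$ and the a.e.\ equality $\phi_i^\star = \phi_i$, which together absorb the "wall" structure of $\Bt$ into a set of measure zero.
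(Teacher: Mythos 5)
Your proof is correct, and it takes a recognizably different technical route from the paper's even though both ultimately rest on monotonicity and a coordinate-by-coordinate Fubini bound. The paper works set-theoretically: it surrounds $\anbr{T(\At)}$ by a diagonally shifted lower layer $\Wdeltat^{(0)}$, peels off $d$ slabs $\Vdeltat^{(i)}$ of width $2\delta$ by successive single-axis shifts $\Wdeltat^{(i)}=(\Wdeltat^{(i-1)}-2\delta e_i)\cap\uintd$, shows $\Udelta(\Bt)\subset\bigcup_i\Vdeltat^{(i)}$, and absorbs the ``vertical wall'' degeneracies into the observation that lower layers have Lebesgue-null boundaries (proved by an $O(\eps^{1-d})$ cube covering). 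You instead pass to the larger $\ell^\infty$ neighbourhood, parameterize the upper boundary by a non-increasing height function $\phi_i(u_{-i})$, sandwich $v_i$ between the diagonal right-limit $\phi_i^\star(v_{-i})$ and $\phi_i(v_{-i})$ for $v\in\Bt$, bound the fiber length by $\phi_i(u_{-i}-\delta\mathbf 1)-\phi_i^\star(u_{-i}+\delta\mathbf 1)+2\delta$, and then telescope across $d-1$ single-coordinate $2\delta$-shifts using the one-dimensional estimate for monotone $\psi:\R\to[0,1]$. The work the paper puts into the null-boundary lemma you put into showing $\phi_i^\star=\phi_i$ a.e.\ (the diagonal-jump set is null by countability along rays plus Fubini); these are genuinely the same geometric fact expressed in two ways. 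Your version is more analytic and elementary in flavour, avoiding the $\anbr{\cdot}$ set calculus; the paper's is more compact once one accepts the lower-layer formalism. Both yield the same constant $2d\delta$ from the same source: $2\delta$ of ``thickness'' in the distinguished coordinate plus $2(d-1)\delta$ from shifting the other $d-1$ coordinates.
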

\par
The next lemma provides the Glivenko--Cantelli and Donsker properties for two 
set classes involved in the proofs of Theorems~\ref{thm:1} and~\ref{thm:2}. 
The Donsker property is defined in~\eqref{eq:055}.  
A set class $\Ccal$ is called \emph{$\Prob$-Glivenko--Cantelli}  
if the empirical measure 
$\Pbb_n(B):=\frac{1}{n}\sumkonen 1_B(Y\sampk)$ 
of an i.i.d.\ sample $Y\sampone,\ldots,Y\sampn \sim \Prob$ 
satisfies
\begin{align} \label{eq:016}
\norm{\Pbb_n - \Prob}_\Ccal
:=
\sup_{B\in\Ccal}\abs{\Pbb_n(B) - \Prob(B)} \asconv 0 
\end{align}
This notation emphasizes that the convergence  
also depends on the true distribution that is sampled to construct $\Pbb_n$. 
\begin{remark} \label{rem:6}
One technical aspect of~\eqref{eq:016} and~\eqref{eq:055} 
needs an additional comment. 
%In~\eqref{eq:016} and~\eqref{eq:055}, 
These statements regard 
$\Pbb_n$ and $\Prob$ 
as mappings from the probability space $(\Omega,\Acal,\Prob)$ 
to $\linf(\Ccal)$.
However, $\Pbb_n$ need not be measurable with respect to 
the Borel $\sigma$-field on $\linf(\Ccal)$ 
\citep[cf.][Chapter 18]{Billingsley:1968}. 
This issue can be solved by extended versions of almost sure 
and weak convergence as presented in   
\citet{van_der_Vaart/Wellner:1996}. 
In the following, $\asconv$ and $\weakconv$ are understood 
according to that monograph. 
In case of measurability these extended notions 
coincide with the standard ones.  
\end{remark}
\begin{lemma} \label{lem:4}
\begin{enumerate}[(a)]
\item \label{item:lem.4.a}
The set class 
\[
\HcalT:=\cubr{\anbr{T(\At)}:t\in\R}
\] 
for a fixed $T(x)=\robr{F_1(x_1),\ldots,F_d(x_d)}$  %
is universally Glivenko--Cantelli and Donsker.
\item \label{item:lem.4.b}
If $C$ is Lebesgue absolutely continuous, then the set class  
\begin{align*} %\label{eq:033}
\Dcaldeltazer:=\cubrfl{\Udelta(\Bt):\delta\in[0,\delta_0], t\in \R}
\end{align*}
is $\ProbC$-Glivenko--Cantelli for %any copula $C$ and 
any $\delta_0>0$.
\item \label{item:lem.4.c}
If $C$ is Lebesgue absolutely continuous and satisfies~\eqref{eq:025}, then 
$\Dcaldeltazer$ is $\ProbC$-Donsker for any $\delta_0>0$.
\end{enumerate}
\end{lemma}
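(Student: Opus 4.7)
For part (a), the key observation is that $\HcalT$ is \emph{linearly ordered} by set inclusion: whenever $t_1 \le t_2$, the set $A_{t_1} \subset A_{t_2}$, so $T(A_{t_1}) \subset T(A_{t_2})$, and hence $\anbr{T(A_{t_1})} \subset \anbr{T(A_{t_2})}$. A linearly ordered family cannot shatter any two-point set $\{x_1,x_2\}$, since picking out $\{x_1\}$ and $\{x_2\}$ separately would require two incomparable sets. Thus $\HcalT$ is VC with index~2, and standard results in \cite{van_der_Vaart/Wellner:1996} (Sections~2.6 and~2.8) give both the universal Glivenko--Cantelli and the universal Donsker property; the mild measurability conditions are satisfied because $\HcalT$ is monotonically indexed by the single real parameter $t$.

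For parts~(b) and~(c) the plan is to control the richer two-parameter class $\Dcaldeltazer$ via bracketing numbers, in $L^1(\ProbC)$ for~(b) and $L^2(\ProbC)$ for~(c). The construction splits $\uintd$ into a core $\uintepsd$, on which the density satisfies $c \le K(\eps)$ by definition, and a boundary shell $\uintd\setminus\uintepsd$ whose total $\ProbC$-mass vanishes as $\eps\to 0$ by absolute continuity of $C$. On the core I would discretize both parameters on a grid: $\delta\in[0,\delta_0]$ into finitely many levels, using the monotonicity $\delta\mapsto\Udelta(\Bt)$; and $t\in\R$ into finitely many intervals using that $G(t)=\ProbC(\anbr{T(\At)})$ is a non-decreasing function from $0$ to $1$, so that finitely many $t$-values suffice to track the family to any prescribed $\ProbC$-accuracy. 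For each grid point $(t,\delta)$, Lemma~\ref{lem:1} combined with the density bound gives $\ProbC(\Udelta(\Bt)\cap\uintepsd) \le 2dK(\eps)\delta$, from which matching inner and outer brackets of controlled $\ProbC$-mass can be built (by perturbing $(t,\delta)$ by a single grid step and using the componentwise-monotone bound $\Udelta(\Bt) \subset \anbr{T(\At)+(\delta,\ldots,\delta)}$ implicit in the proof of Lemma~\ref{lem:1}).

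Finiteness of the resulting $L^1(\ProbC)$-bracketing numbers at every $\eta>0$ then gives~(b) via the standard bracketing-entropy Glivenko--Cantelli theorem. For~(c) I would match the $L^2(\ProbC)$-bracketing entropy to the integrand in~\eqref{eq:025} through a change of variables linking the target bracket width $\eta$ to the boundary parameter~$\eps$; balancing the core contribution (scaling like $\sqrt{K(\eps)\delta_0}$) against the shell contribution $\sqrt{\ProbC(\uintd\setminus\uintepsd)}$ suggests the scaling $\eps\asymp\eta^2$, under which $\sqrt{\log \Nbr(\eta)}$ is controlled by $\sqrt{\log K(\eps^2)}$ up to constant factors, and the entropy integral then converges by assumption~\eqref{eq:025}. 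The main obstacle is executing this matching precisely: one must count brackets carefully enough that $\log \Nbr(\eta)$ is controlled by $\log K(\eps(\eta))$, and show that the boundary shell (which may have small $\ProbC$-mass but non-negligible $L^2$-bracket contribution) is absorbed compatibly with the $L^2$ accounting rather than only with the cruder $L^1$ accounting used for~(b).
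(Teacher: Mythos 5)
Your part~(a) via the VC route is correct, and the paper itself acknowledges it as valid in Remark~\ref{rem:8}; the paper instead proves~(a) by bracketing because the VC property of $\Dcaldeltazer$ is not available and the bracketing argument for~(a) foreshadows the method used in~(b) and~(c). Either route works for~(a) once one verifies $\Prob'$-measurability.

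For parts~(b) and~(c), your overall strategy is the same as the paper's: a core/shell split of $\uintd$, a grid in $(\delta,t)$, Lemma~\ref{lem:1} combined with the density bound $K(\eps)$ on the core to control bracket widths, and the balance $\eps\asymp\eta^2$. Two points deserve correction or sharpening. First, the ``main obstacle'' you identify --- that the shell might require a separate $L^2$ accounting beyond the $L^1$ accounting --- does not arise. For set classes, the $L^2$ bracketing distance is exactly the square root of the $L^1$ bracketing distance, since $\norm{1_A-1_B}_{L_2(\Prob)}^2 = \Prob(A\symdiff B)$. This gives $\Nbr(\eps,\Ccal,L_2(\Prob)) = \Nbr(\eps^2,\Ccal,L_1(\Prob))$, so the entire argument runs in $L^1$ and the entropy integral in~\eqref{eq:048} becomes~\eqref{eq:021}; the shell contribution is absorbed automatically once its $L^1$ mass is $O(\eps)$, which the uniform copula margins give directly. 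Second, and this is a genuine gap: your count of the $t$-grid based on ``$G(t)$ is non-decreasing from~$0$ to~$1$, so finitely many $t$-values suffice'' is not sufficient. The brackets must cover the collars $\Udelta(\Bt)$, and the quantity you need to control is $\ProbC$ of $W_{t_0,t_1}\setminus V_{t_0,t_1}$ (using the paper's notation), i.e.\ the mass swept out by neighbourhoods of $\Bt$ as $t$ ranges over an interval. This is not a difference of $G$-values. The paper resolves this via disjoint exhaustion: it constructs sets $S_k$ and $S'_k$ that are pairwise disjoint in $k$, so $\sum_k\ProbC(S_k)\le 1$ and $\sum_k\ProbC(S'_k)\le 1$, and then shows $W'_{t_k,t_{k+1}}\setminus V_{t_k,t_{k+1}}\subset S_k\cup S'_k$ with $\ProbC(S_k)+\ProbC(S'_k)\ge\eps/2$ at each step, forcing the $t$-sequence to terminate after $O(1/\eps)$ steps. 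Without some argument of this type, the finiteness of the $t$-grid at a given bracket size is not established.
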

The proof of this auxiliary result is given in Section~\ref{subsec:3.2}. 
Now we proceed with the proofs of Theorems~\ref{thm:1} and~\ref{thm:2}. 
\par
\begin{proof}[Proof of Theorem~\ref{thm:1}]
For the sake of simplicity, we will write $\mu\anbr{B}$ 
instead of $\mu(\anbr{B})$ for any measure $\mu$ on $\uintd$.
According to~\eqref{eq:013}, we have to show that 
$\ProbCn\anbr{\Tn(\At)} \to \ProbC\anbr{T(\At)}$ uniformly in $t\in\R$. 
It is easy to see that
\begin{align}
\mylefteqn
\absfl{\ProbCn\anbr{\Tn(\At)} - \ProbC\anbr{T(\At)}} \nonumber\\
&\le \label{eq:015}
\ProbCn \robrfl{\anbr{\Tn(\At)} \symdiff \anbr{T(\At)}} 
+
\absfl{\ProbCn\anbr{T(\At)} - \ProbC\anbr{T(\At)}},
\end{align}
where $\symdiff$ denotes the symmetric difference: 
$A \symdiff B:= (A\setminus B) \cup (B\setminus A)$. 
According to Lemma~\ref{lem:4}(\ref{item:lem.4.a}), the set class 
$\HcalT$ is $\ProbC$-Glivenko--Cantelli. 
Hence the second term in~\eqref{eq:015} converges to $0$ $\Prob$-a.s.\ 
uniformly in $t\in\R$.
%
% Furthermore, we need to show that the random transformation 
% $\Tn$ converges to $T$ in a way that makes the first term 
% in~\eqref{eq:015} vanish uniformly in $t$.  
% \par
%
\par
Now consider the first term in~\eqref{eq:015} and denote 
\begin{align} \label{eq:059}
Y_n:=\norm{\Tn (x) - T(x)}_\infty.
\end{align}
As the transformations $\Tn$ and $T$ are componentwise non-decreasing, 
$Y_n$ is a measurable random variable. 
Furthermore, symmetry arguments give us 
$\norm{\Cin\ginv - \id_{[0,1]}}_\infty = \norm{\Cin - \id_{[0,1]}}_\infty$, where
$\id_{[0,1]}(u):=u$ for $u\in[0,1]$. 
Hence~\eqref{eq:028} and the classic Glivenko-Cantelli theorem for $\Cin$ 
yield $\Tn\asconv T$ in $\linf(\Rd)$. 
This implies that $Y_n\asconv 0$. % $\Prob$-a.s.
\par
It is also easy to see that
\begin{align} \label{eq:046}
\anbr{\Tn(\At)} \symdiff \anbr{T(\At)} 
\subset 
U_{Y_n} (\Bt),
\end{align}
where $\Udelta(\Bt)$ is the set introduced in \eqref{eq:017}.
%
% This implies that for any $\delta>0$ there exists $n_0$ such that for $n \ge n_0$ and all $t\in\R$
% \begin{align*} %\label{eq:019}
% \ProbCn \robrfl{\anbr{\Tn(\At)} \symdiff \anbr{T(\At)} }
% \le 
% \ProbCn \robrfl{ \Udelta(\Bt)} 
% \quad \Prob\text{-a.s.}
% \end{align*}
Moreover, for any $\delta>0$ we have    
%Furthermore, we have
\begin{align} \label{eq:053}
\ProbCn\robrfl{\Udelta(\Bt)} 
\le 
\ProbC\robrfl{\Udelta(\Bt)} + 
\absfl{\ProbCn (\Udelta(\Bt)) - \ProbC (\Udelta(\Bt))}.
\end{align}
As $Y_n\asconv 0$, 
it suffices to show that for $\delta\to0$ both terms 
on the right hand side of~\eqref{eq:053}
vanish with probability $1$ uniformly in $t$. 
In particular, for the second term it suffices to show that 
for some $\delta_0>0$
\[
\lim_{n\toinf}
\sup_{t\in\R,\delta\in[0,\delta_0]} 
\absfl{\ProbCn (\Udelta(\Bt)) - \ProbC (\Udelta(\Bt))}
=0 
\quad \Prob\text{-a.s.}
\]
This follows from Lemma~\ref{lem:4}(\ref{item:lem.4.b}). 
\par
The first term on the right hand side of~\eqref{eq:053} vanishes 
due to the absolute continuity of the copula $C$. 
Indeed, let $\eps>0$. Since the density $c$ of $C$ is non-negative 
and $\int c(u)\,\dm\lambda(u)=1$, 
there exists $M>0$ such that $\int_{\cubr{c>M}} c(u)\,\dm\lambda(u) < \eps/2$. 
Then, for $\delta \le \eps/(4dM)$, Lemma~\ref{lem:1} yields
\begin{align*}
\ProbC\robrfl{\Udelta(\Bt)} 
& \le 
\ProbCn \robrfl{\Udelta(\Bt) \cap \cubr{c\le M}} + \frac{\eps}{2}\\
& \le M \lambda(\Udelta(\Bt)) + \frac{\eps}{2}\\
& \le M2d\delta + \frac{\eps}{2}\ = \eps.
\end{align*}
That is, $\ProbC(\Udelta(\Bt))\to 0$ for $\delta\to 0$
\end{proof}
\par
\begin{proof}[Proof of Theorem~\ref{thm:2}]
According to~\eqref{eq:015} and \eqref{eq:046}, we have that   
\begin{align}
\mylefteqn
\sqrtn \absfl{\Gnast(t) - G(t)} \nonumber \\
&\le \label{eq:020}
\sqrtn
\ProbCn \robrfl{U_{Y_n}(\Bt)} +
\sqrtn
\absfl{ \ProbCn \anbr{T(\At)} - \ProbC \anbr{T(\At)}}.
\end{align}
The second term in~\eqref{eq:020} is $O_\Prob(1)$ uniformly in $t\in\R$ 
due to Lemma~\ref{lem:4}(\ref{item:lem.4.a}).
\par
Now consider the first term in~\eqref{eq:020} and observe that  
\begin{align}
\nonumber
\mylefteqn 
\sqrtn\ProbCn \robrfl{ U_{Y_n}(\Bt)}\\
&= 
\label{eq:022}
\sqrtn \robrfl{ \ProbCn(U_{Y_n}(\Bt)) - \ProbC(U_{Y_n}(\Bt)) }
%\phantom{=}{}
+
%\label{eq:023}
\sqrtn \ProbC \robrfl{U_{Y_n}(\Bt)}.
\end{align}
Applying the classic Donsker Theorem 
\citep[cf.][Theorem 2.5.7]{van_der_Vaart/Wellner:1996} 
to $\Cin$, we obtain that 
$\norm{\Cin\ginv - \id_{[0,1]}}_\infty = \norm{\Cin - \id_{[0,1]}}_\infty$ 
is $O_\Prob(n^{-1/2})$. 
Hence assumption~\eqref{eq:040} yields $Y_n=O_\Prob(n^{-1/2})$, 
and assumption~\eqref{eq:024} 
implies that the second term in~\eqref{eq:022} is $O_\Prob(1)$. 
\par
Let $Z_n$ denote the the first term in \eqref{eq:022}. 
%It suffices to show that $Z_n$ is weakly convergent. 
As $Y_n = o_\Prob(1)$, we have 
% a uniform CLT for $\ProbCn$ on the set class 
% \begin{align} \label{eq:033a}
% \Dcaldeltazer:=\{\Udelta(\Bt):\delta\in[0,\delta_0], t\in \R\}
% \end{align}
% for some $\delta_0>0$ would guarantee that 
\begin{align*} %\label{eq:058} 
Z_n 
= 1\cubr{Y_n\le \delta_0}\sqrtn \robrfl{ \ProbCn(U_{Y_n}(\Bt)) - \ProbC(U_{Y_n}(\Bt)) }
+ o_\Prob(1). 
\end{align*}
for any $\delta_0>0$. 
% the first term in~\eqref{eq:022} 
% is $o_P(1)$. %converges in probability to $0$. 
Now assumption~\eqref{eq:025} and Lemma~\ref{lem:4}(\ref{item:lem.4.c}) imply 
that $Z_n$ is weakly convergent, and hence $O_\Prob(1)$. 
%%%%
\end{proof}
\par
The following corollary allows to replace the empirical marginal distributions 
$\Fin$ in Theorems~\ref{thm:1} and~\ref{thm:2} by any other consistent 
approximations of the true, unknown margins $\Fi$.
\begin{corollary} \label{cor:4} 
Let $\Fin$, $i=1,\ldots,d$, $n\in\N$, be arbitrary distribution 
functions on $\R$, and let $\Gnast(t) := \Prob_{\Fnast}(\At)$ with $\Fnast(x):=\Cnast(\Fonen(x_1),\ldots,\Fdn(x_d))$.
\begin{enumerate}[(a)]
\item \label{item:cor.4.a}%{former cor:1}
If $\Fin$ satisfy~\eqref{eq:028} and $C$ is absolutely continuous,
% and the copula $C$ satisfies~\eqref{eq:018}, 
then 
$\norm{\Gnast - G}_\infty \to 0$ $\Prob$-a.s.\ 
%for
%$\Gnast(t) := \Prob_{\Fnast}(\At)$ with $\Fnast(x):=\Cnast(\Fonen(x_1),\ldots,\Fdn(x_d))$.
\item \label{item:cor.4.b}
%\begin{corollary} \label{cor:2}
If $\Fin$ satisfy~\eqref{eq:040} 
%$\sqrt{n}\norm{\Fin - \Fi}=O_\Prob(1)$ 
and $C$ satisfies~\eqref{eq:024} and \eqref{eq:025}, then 
$\norm{\Gnast - G}_\infty = O_\Prob(n^{-1/2})$. 
% for
% $\Gnast(t) := \Prob_{\Fnast}(\At)$ with $\Fnast(x):=\Cnast(\Fonen(x_1),\ldots,\Fdn(x_d))$. 
%\end{corollary}
\end{enumerate} 
\end{corollary}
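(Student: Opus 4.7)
The plan is to observe that in the proofs of Theorems~\ref{thm:1} and~\ref{thm:2} the specific form of $\Fin$ as empirical distribution functions of univariate samples is used at a single, easily isolated place: the bound on $Y_n := \norm{\Tn - T}_\infty$ from~\eqref{eq:059}. Once this control is recovered for arbitrary $\Fin$ satisfying~\eqref{eq:028} or~\eqref{eq:040}, the remaining arguments transfer verbatim.

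I would first verify that the representation $\Gnast(t) = \ProbCn\anbr{\Tn(\At)}$ from Corollary~\ref{cor:5} still holds with the generalised definition of $\Gnast$ in Corollary~\ref{cor:4}. The identity $\Gnast(t) = \ProbCnast\anbr{\tau_n(\At)}$ follows by applying Lemma~\ref{lem:5} directly to $\Fnast = \Cnast(\Fonen,\ldots,\Fdn)$, an application that uses no structure of $\Fin$ beyond their being distribution functions on $\R$. The subsequent passage to $\ProbCn\anbr{\Tn(\At)}$ in the proof of Corollary~\ref{cor:5} rests only on the almost-sure strict componentwise monotonicity of $\rho_n\ginv$ on $\cubr{\onebyn,\ldots,1}^d$ and on $\rho_n\ginv\circ\rho_n(U\sampk) = U\sampk$, both of which are statements about the i.i.d.\ copula sample alone.

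Next, I bound $Y_n$. For each $i \le d$ and $x_i \in \R$, the triangle inequality gives
\[
|\Cin\ginv(\Fin(x_i)) - F_i(x_i)| \le \norm{\Cin\ginv - \id_{[0,1]}}_\infty + \norm{\Fin - F_i}_\infty,
\]
and, as in the proof of Theorem~\ref{thm:1}, $\norm{\Cin\ginv - \id_{[0,1]}}_\infty = \norm{\Cin - \id_{[0,1]}}_\infty$ by a symmetry argument. Since $\Cin$ is the empirical distribution function of an i.i.d.\ uniform sample produced from the copula sample alone, and is therefore independent of the choice of $\Fin$, the classical Glivenko--Cantelli and Donsker theorems give $\norm{\Cin - \id_{[0,1]}}_\infty \asconv 0$ and $\norm{\Cin - \id_{[0,1]}}_\infty = O_\Prob(n^{-1/2})$, respectively. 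Combined with~\eqref{eq:028} these bounds yield $Y_n \asconv 0$ for part~(\ref{item:cor.4.a}), and combined with~\eqref{eq:040} they yield $Y_n = O_\Prob(n^{-1/2})$ for part~(\ref{item:cor.4.b}). With these two ingredients in place, the decomposition~\eqref{eq:015}, the containment~\eqref{eq:046}, and the Glivenko--Cantelli and Donsker statements of Lemma~\ref{lem:4} for $\HcalT$ and $\Dcaldeltazer$ all refer only to the copula sample and to the true margins $F_i$, so the remainder of the proofs of Theorems~\ref{thm:1} and~\ref{thm:2} carries over word for word. The only point that requires active verification — and the one I would flag as the potential obstacle, although I do not actually expect it to bite — is that the derivation of~\eqref{eq:013} in Corollary~\ref{cor:5} does not covertly rely on $\Fin$ being empirical.
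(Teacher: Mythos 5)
There is a genuine gap, and it is exactly at the point you flagged and then waved away: the representation $\Gnast(t)=\ProbCn\anbr{\Tn(\At)}$ from Corollary~\ref{cor:5} does \emph{not} hold for arbitrary distribution functions $\Fin$. To see why, note that $\ProbCnast\anbr{\tau_n(\At)}$ counts the indices $k$ with $\Cin(U_i\sampk)\le\Fin(x_i)$ for all $i$ and some $x\in\At$, i.e.\ $R_i\sampk\le\lfloor n\Fin(x_i)\rfloor$, while $\ProbCn\anbr{\Tn(\At)}$ counts those with $U_i\sampk\le\Cin\ginv(\Fin(x_i))$, i.e.\ $R_i\sampk\le\lceil n\Fin(x_i)\rceil$. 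When $\Fin$ is of the form~\eqref{eq:014} these two thresholds agree because $n\Fin(x_i)$ is always an integer; for general $\Fin$ they can differ, and the two counts need not be equal. So the passage from~\eqref{eq:012} to~\eqref{eq:013} silently uses that $\Fin$ is $\cubr{0,1/n,\ldots,1}$-valued, and your remaining argument rests on an identity that fails.

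This is precisely why the paper's proof of Corollary~\ref{cor:4} does not simply rerun Theorems~\ref{thm:1} and~\ref{thm:2} with the new $\Fin$. Instead it replaces each $\Fin$ by the nearest $\cubr{0,1/n,\ldots,1}$-valued step function $\Ftildein$ (so $\norm{\Fin-\Ftildein}_\infty\le 1/n$), applies the theorems to the resulting proxy $\Gtilden$, and then bounds $\abs{\Gnast(t)-\Gtilden(t)}$ via the symmetric-difference/$U$-neighbourhood mechanism already used in the proofs. The $1/n$ perturbation is absorbed into the same neighbourhood estimate because it is of smaller order than the convergence rates in play. Your control of $Y_n$ via the triangle inequality plus the symmetry $\norm{\Cin\ginv-\id}_\infty=\norm{\Cin-\id}_\infty$ is fine and is indeed what Theorems~\ref{thm:1} and~\ref{thm:2} need, but you must first reduce to a $\cubr{0,1/n,\ldots,1}$-valued $\Ftildein$ before you are entitled to invoke those theorems. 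A repair of your argument therefore ends up reproducing the paper's approximation step rather than bypassing it.
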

\begin{proof}
Part~(\ref{item:cor.4.a}). 
For each $\Fin$ there is an approximation 
$\Ftildein:\R\to\cubr{0,\frac{1}{n},\ldots,1}$  
that minimizes $\norm{\Fin - \Ftildein}_\infty$. 
It is obvious that $\norm{\Fin - \Ftildein}_\infty \le 1/n$. 
Hence the estimator  
$\Gtilden(t):=\Cnast(\anbr{\Ttilden(A_t)})$
with $\Ttilden(A) := \robr{\Conen\ginv\circ\Ftilde_{1,n}(x_1),\ldots,\Cdn\ginv\Ftilde_{d,n}(x_d)}$ 
satisfies the assumptions of Theorem~\ref{thm:1}, and therefore  
$\norm{\Gtilden-G}_\infty \asconv 0$.
Moreover, %$\Ttilden(\Rd)\subset \Dnzer$ yields 
\begin{align}
\abs{\Gnast(t) - \Gtilden(t)} 
&\le\nonumber
\ProbCn\robr{\anbr{\Ttilden(A_t)} \symdiff \anbr{\Tn(A_t)}}\\
&\le\label{eq:032}
\ProbCn\robr{U_{\norm{\Ttilden - \Tn}_\infty +\norm{\Tn-T}_\infty}(\Bt)}.
\end{align}
As $\norm{\Ttilden-\Tn}_\infty\asconv 0$ and $\norm{\Tn-T}_\infty\asconv 0$,  
the term \eqref{eq:032}  vanishes with probability $1$ uniformly in $t\in\R$ 
analogously to the first term in~\eqref{eq:053}.
This yields $\norm{\Gnast-G}_\infty\asconv 0$. 
\par
Part~(\ref{item:cor.4.b}). 
If $\Fin$ satisfy~\eqref{eq:040}, then so do $\Ftilde_{i,n}$.  
Hence Theorem~\ref{thm:2} yields $\norm{\Gtilden - G}_\infty =O_\Prob(n^{-1/2})$.
Furthermore, \eqref{eq:032} implies that  
\begin{align} \label{eq:034}
\abs{\Gtilden(t) - \Gnast(t)} \le \abs{\ProbCn (U_{\Ytilden} (\Bt)) - \ProbC(U_{\Ytilden} (\Bt))} + \ProbC(U_{\Ytilden} (\Bt))
\end{align}
for $\Ytilden:=\norm{\Ttilden-T}_\infty + \norm{\Tn-T}_\infty$. 
As $\Ytilden=O_\Prob(n^{-1/2})$, assumption~\eqref{eq:024} implies that 
$\sup_{t\in\R}\ProbC(U_{\Ytilden}(\Bt)) = O_\Prob(n^{-1/2})$. 
The first term on the right hand side of \eqref{eq:034} is $O_\Prob(n^{-1/2})$ uniformly in $t\in\R$ due to Lemma~\ref{lem:4}(\ref{item:lem.4.c}). 
\end{proof}
\par
\begin{remark} \label{rem:9}
\begin{enumerate}[(a)]
\item
Compared to $\Gnast$, the multivariate distribution function 
$\Fnast(x)$ obtained by plugging $\Fin$ into $\Cnast$ is much easier 
to handle. 
The deeper reason here is that $\Fnast(x)$
can be written as the empirical measure 
$\ProbCn$ indexed with random elements of the rectangle set class $\Rcal_d$. 
In particular, if $\Fin$ are defined according to~\eqref{eq:014},  
then, analogously to~\eqref{eq:013}, we have 
\begin{align}
\Fnast(x)
&=\nonumber
\Cnast(\Fonen(x_1),\ldots,\Fdn(x_d)) \\
%&=\ProbCn\robrfl{(-\infty,\Conen\ginv\circ\Fonen(x_1)]\times\ldots\times(-\infty,\Cdn\ginv\circ\Fdn(x_d)]}
&=\label{eq:060}
\Cn(\Conen\ginv\circ\Fonen(x_1),\ldots,\Cdn\ginv\circ\Fdn(x_d))
=\ProbCn(\anbr{\Tn(x)}). 
\end{align}
As mentioned above, $\Rcal_d$ is VC, and hence universally Donsker and 
Glivenko--Cantelli. Thus, due to $\anbr{\Tn(x)}\in\Rcal_d$, we can apply standard results to $\Fnast$. 
\item
To prove strong consistency of $\Fnast$, recall that 
any copula is a Lipschitz function with Lipschitz constant $1$ 
\citep[cf.][Theorem 2.2.4]{Nelsen:2006}. 
Therefore \eqref{eq:060} yields 
\begin{align}\label{eq:031}
\norm{\Fnast - F}_\infty \le \norm{\Cn-C}_\infty 
+
%\sumioned \norm{\Fin -\Fi}_\infty.
Y_n.
\end{align}
As mentioned below~\eqref{eq:059}, assumption~\eqref{eq:028} implies that 
$Y_n\asconv 0$. 
Hence $\norm{\Fnast - F}_\infty\to 0$ $\Prob$-a.s.\ due to 
the classic Glivenko--Cantelli Theorem for empirical distribution functions.  
%\citep[cf.][Example 2.4.2]{van_der_Vaart/Wellner:1996}. 
The extension to general $\Fin$ is analogous to 
Corollary~\ref{cor:4}(\ref{item:cor.4.a}).
\item
In the proof of Theorem~\ref{thm:2} it is shown that 
assumption~\eqref{eq:040} entails $Y_n=O_\Prob(n^{-1/2})$. 
Hence the  $O_\Prob(n^{-1/2})$
%Compared to $\Gnast$, the $O_\Prob(n^{-1/2})$ 
convergence rate for $\Fnast$   
follows from~\eqref{eq:031} and the classic Donsker Theorem for 
empirical distribution functions. 
%\citep[cf.][Example 2.5.7]{van_der_Vaart/Wellner:1996}.
The extension to general $\Fin$ is analogous to 
Corollary~\ref{cor:4}(\ref{item:cor.4.b}). 
\item
If $\Fin$ satisfy a functional CLT, then the functional Delta method 
yields a functional CLT for $\Fnast$, with precise asymptotic variance -- 
% Thus, applying the functional Delta method 
see \citet[Lemma 3.9.28]{van_der_Vaart/Wellner:1996} 
and \citet{Segers:2012} for further details.
% functional CLT for $\Fnast$ with precise asymptotic variance.  
Unfortunately, this does not imply a functional CLT for $\Gnast$, as $\Gnast$ 
is obtained by indexing $\ProbCn$ with a totally different set class. 
%
% This allows to derive a functional CLT with precise asymptotic variance 
% for the empirical copulas $\Cnast$ and $\Cn$,  
% cf.\ \citet{Segers:2012} and references therein. 
% Unfortunately, this does not imply a functional CLT for $\Gnast$. 
\end{enumerate}
\end{remark}

\par
\begin{remark} \label{rem:7}
Theorems~\ref{thm:1} and~\ref{thm:2}, along with all their extensions and 
corollaries, also apply to multivariate 
models generated by plugging empirical margins $\Fin$ directly into 
the copula $C$. According to Remark~\ref{rem:2}(\ref{item:rem.2.c}), 
the resulting estimator of $G(t)$ can be written as $\ProbCn(\anbr{\tau_n(\At)})$ 
with $\tau_n(x)=\robr{\Fonen(x_1),\ldots,\Fdn(x_d)}$.
Since $\Gnast=\ProbCn(\anbr{\Tn(\At)})$, extension of convergence results 
to $\ProbCn(\anbr{\tau_n(\At)})$ is straightforward.
A closer look at the proof of Corollary~\ref{cor:5} suggests that same is true for the convergence 
of $\ProbCn(\anbr{\rho_n\ginv\circ T(\At)})$ uniformly in $t\in\R$, 
where $T(x)=\robr{F_1(x_1),\ldots,F_d(x_d)}$ and $\rho_n(x)=\robr{\Conen(x_1),\ldots,\Cdn(x_d)}$. 
This setting corresponds to the combination of exact margins with the 
empirical copula $\Cnast$. % (cf.\ proof of Corollary~\ref{cor:5}).  
\end{remark}
The final result in this section generalizes  
all foregoing results
%the results obtained so far 
to a broader class of aggregation functions. 
%that includes the component sum as a special case. 
Revising the proofs above, it is easy to see that the only property of 
the component sum used there is that it is componentwise non-decreasing. 
This immediately yields the following extension. 
\begin{corollary} \label{cor:3}
Let a function $\Psi:\Rd\to\R$ satisfy 
\[
\Psi(x) \le \Psi(y)
\quad
\text{if } x_i\le y_i \text { for } i=1,\ldots,d.
\]
Then all results stated above for the sum distribution $G$ also 
hold for the distribution function $G_\Psi$ of the aggregated 
random variable $\Psi(X)$. 
In particular, the estimator  
$\GPsinast(t):=\ProbFnast\robr{\cubr{x\in\Rd : \Psi(x) \le t}}$ 
%constucted analogously to~\eqref{eq:012} 
converges 
$\Prob$-a.s.\ to $G_\Psi$ in $\linf(\R)$ under the assumptions of 
Theorem~\ref{thm:1} and has convergence rate $O_\Prob(n^{-1/2})$ 
under the assumptions of Theorem~\ref{thm:2}. 
% Analogues of Corollaries~\ref{cor:1} and \ref{cor:2} for $\GPsinast$ 
% are also true.     
\end{corollary}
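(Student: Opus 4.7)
The plan is to observe that every step in the derivations leading to Theorems~\ref{thm:1} and~\ref{thm:2} uses the aggregation function $x\mapsto\sumioned x_i$ only through its componentwise monotonicity, and thus goes through verbatim once $\At$ is replaced by $\At^\Psi := \cubr{x\in\Rd: \Psi(x)\le t}$. So the strategy is to re-run the chain Lemma~\ref{lem:5} $\to$ Corollary~\ref{cor:5} $\to$ Lemma~\ref{lem:1} $\to$ Lemma~\ref{lem:4}(\ref{item:lem.4.a}) $\to$ proof of Theorems~\ref{thm:1}, \ref{thm:2} with $\At^\Psi$, $B_t^\Psi$, $\HcalT^\Psi:=\cubr{\anbr{T(\At^\Psi)}:t\in\R}$ in place of the corresponding objects for the sum, and to check at each step that the monotonicity of $\Psi$ suffices.

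First I would recast Lemma~\ref{lem:5}: the sets $\At^\Psi$ are still \emph{decreasing}, in the sense that $y\in\At^\Psi$ and $x\le y$ componentwise implies $x\in\At^\Psi$. This is exactly the property used in the proof of Lemma~\ref{lem:5} when deducing $T\ginv(v)\in\At$ from $T\ginv(v)\le x\in\At$. Consequently the equivalence $v\in\anbr{T(\At^\Psi)} \Leftrightarrow T\ginv(v)\in\At^\Psi$ still holds, yielding $G_\Psi(t)=\ProbC(\anbr{T(\At^\Psi)})$. The proof of Corollary~\ref{cor:5} uses no further structure of $\At$, so it extends to give, with probability $1$,
\[
\GPsinast(t)=\ProbCn(\anbr{\Tn(\At^\Psi)}),\qquad t\in\R.
\]

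Next I would verify that the auxiliary results in Section~\ref{sec:3} remain valid with $\At^\Psi$, $B_t^\Psi$, $\HcalT^\Psi$. Since $\Psi$ is componentwise non-decreasing, $\At^\Psi\subset A_{t'}^\Psi$ for $t\le t'$, so $\HcalT^\Psi$ is totally ordered by inclusion and therefore VC with index~$2$, and thus universally Glivenko--Cantelli and Donsker exactly as in Lemma~\ref{lem:4}(\ref{item:lem.4.a}); Lemma~\ref{lem:4}(\ref{item:lem.4.b},\ref{item:lem.4.c}) already addresses arbitrary lower layers in its $\delta$-neighbourhoods and hence applies to $\Udelta(B_t^\Psi)$. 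Finally, Lemma~\ref{lem:1} is a projection argument for lower layers and does not use the sum structure at all, so $\lambda(\Udelta(B_t^\Psi))\le 2d\delta$ remains valid. With these analogues in hand, the proofs of Theorems~\ref{thm:1} and~\ref{thm:2}, as well as those of Corollary~\ref{cor:4}, reproduce line by line: one simply writes the decomposition~\eqref{eq:015} and the bound~\eqref{eq:046} with $\Bt$ replaced by $B_t^\Psi$.

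The main obstacle is essentially bookkeeping rather than mathematical: assumption~\eqref{eq:024} must now be interpreted with respect to the new upper boundaries $B_t^\Psi$, and one has to note that these sets are still reasonable $(d-1)$-dimensional pieces of $\bnd\anbr{T(\At^\Psi)}$ because $\At^\Psi$ is a decreasing set. As soon as this is observed, the componentwise monotonicity of $\Psi$ is everything that enters, and all convergence statements transfer to $\GPsinast$ without modification.
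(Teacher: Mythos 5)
Your proposal is correct and is exactly the argument the paper has in mind: the paper prefaces Corollary~\ref{cor:3} with the one-sentence observation that componentwise monotonicity of the aggregation function is the only property used in the preceding proofs, and gives no further proof. Your writeup simply makes that inspection explicit step by step, so it matches the paper's approach.
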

\begin{remark} \label{rem:5}
% The results derived above are not restricted to the distribution 
% function of the component sum. The only property of the sum used in the 
% proofs is that it is componentwise non-decreasing.  
% Hence all convergence results stated above for the sum distribution 
% function $G$ also hold for the distribution function $G_\Psi$ of 
% any aggregated random variable $\Psi(X)$ if the function 
% $\Psi:\Rd \to \R$ satisfies
% \[
% \Psi(x) \le \Psi(y)
% \quad
% \text{if } x_i\le y_i \text { for } i=1,\ldots,d.
% \]
% In this case the estimator  
% $\GPsinast(t):=\ProbFnast\robr{\cubr{x\in\Rd : \Psi(x) \le t}}$ 
% %constucted analogously to~\eqref{eq:012} 
% converges 
% $\Prob$-a.s.\ to $G_\Psi$ in $\linf(\R)$ under the assumptions of 
% Theorem~\ref{thm:1} and has convergence rate $O_\Prob(n^{-1/2})$ 
% under the assumptions of Theorem~\ref{thm:2}. 
% \par
\begin{enumerate}[(a)]
\item
It depends on the aggregation function $\Psi$ whether the generalization 
stated above is advantageous. 
In some special cases even stronger results are possible. 
If, for instance, $\Psi(x)=\max\cubr{x_1,\ldots,x_d}$, then  
the convergence of $\GPsinast$ in $\linf$ is 
related to the uniform convergence of the empirical measure $\ProbCn$ 
on the rectangle set class $\Rcal_d\cap\uintd$. 
%$\cubr{[0,u]: u\in\uintd}$. 
As the latter set class is VC, 
%\citep[cf.][Example 2.6.1]{van_der_Vaart/Wellner:1996}, 
one can derive a Donsker Theorem for $\GPsinast$ 
with a precise asymptotic variance. 
\item
{\secondrevision%
Another remarkable example is the \emph{Kendall process}, which is  
obtained by taking the joint distribution function $F$ 
as aggregating function $\Psi$. 
The resulting aggregated distribution function is $H(t):=\Prob(F(X)\le t)$. 
Using the notation from above, this means $H:=G_\Psi$ for $\Psi=F$. 
The aggregated distribution function $H(t)$ 
can be estimated by the empirical distribution 
$H_n:=n\inv\sumjonen 1\cubr{\Fnast(\Xtilde\sampj) \le t}$,
where $\Xtilde\sampj$ are the Iman--Conover synthetic  
variables defined in~\eqref{eq:047} and $\Fnast$ is their empirical distribution function (cf.~\eqref{eq:008}).  
If the margins $\Fi$ are continuous, then $F(X)$ has the same distribution as 
$C(U)$ for $U\sim C$. 
Moreover, $\Fnast(\Xtilde)$ can always be written as $\Cnast(U)$ for $U\sim C$. 
Thus the distribution of the process $\sqrtn(H_n(t)-H(t))$ 
does not depend on the margins $\Fi$. 
In this case asymptotic normality is also available
\citep[cf.][]{van_der_Vaart/Wellner:2007,Ghoudi/Remillard:1998,Barbe/Genest/Ghoudi/Remillard:1996}.%
}% 
\item
In the general case, however, extensions of 
Theorems~\ref{thm:1} and \ref{thm:2} indeed go beyond available convergence results
for empirical multivariate distribution functions.
\end{enumerate}
\end{remark}

\subsection{Proofs of auxiliary results} \label{subsec:3.2}
\begin{proof}[Proof of Lemma~\ref{lem:1}]
Denote 
\[
\Wdeltat\upzer := \anbr{ \anbr{T(\At)} + \delta(1,\ldots,1)} \cap \uintd 
\]
and, subsequently,
\[
\Wdeltat\upi := \robrfl{\Wdeltat\upiminusone - 2 \delta e_i} \cap \uintd
,\quad i=1,\ldots,d. 
\]
The notation $A+x$ for $A\subset\Rd$ and $x\in\Rd$ represents a shift 
of the set $A$, i.e., $A+x:=\cubr{a+x : a \in A}$. 
Further, denote 
\[
\Vdeltat\upi := \overline{\Wdeltat\upiminusone \setminus \Wdeltat\upi}
,\quad i=1,\ldots,d. 
\]
The boundaries of $\Vdeltat\upi$ are Lebesgue null sets, because 
any $\anbr{A}$ for $A\subset\uintd$ is Lebesgue-boundary-less.  
Indeed, the construction of $\anbr{A}$ guarantees that if $u\in\anbr{A}$ and 
$v\in\uintd$, then 
$v \le u$ (componentwise)
%$v_i \le u_i$ for all $i$ 
%
implies $v\in\anbr{A}$. 
Analogously, if $u\in\uintd\setminus\anbr{A}$ and
$v\in\uintd$ with 
$v\ge u$,
%$v_i\ge u_i$ for all $i$, 
%
then $v\in\uintd\setminus A$. 
This monotonicity property allows to cover the boundary $\bnd\anbr{A}$
by $O(\eps^{1-d})$ $d$-dimensional cubes with edge length $\eps$ for any 
$\eps>0$. The total volume of this coverage is $O(\eps)$, so that sending 
$\eps\to 0$  we obtain $\lambda(\bnd\anbr{A}) = 0$. 
This implies that all sets $\Wdeltat\upj$ and $\Vdeltat\upi$ are 
Lebesgue-boundary-less. 
\par
It is obvious that $\Udelta(\Bt)\subset\overline{\Wdeltat\upzer\setminus\Wdeltat\upd}$. 
Moreover, the construction of $\Vdeltat\upi$ entails that 
\[
\overline{\Wdeltat\upzer\setminus\Wdeltat\upd}
=
\bigcup_{i=1}^d \Vdeltat\upi
\]
and $\lambda(\Vdeltat\upi)\le 2\delta$ for all $i$. 
This yields
$\lambda{\Udelta(\Bt)} 
%\le \lambda\robr{\overline{\Wdeltat\upzer\setminus\Wdeltat\upd}} 
\le 2d\delta$. 
\end{proof}
\begin{proof}[Proof of Lemma~\ref{lem:4}]
\par
According to \citet[Theorem 2.4.1]{van_der_Vaart/Wellner:1996}, 
a set class $\Ccal$ is  $\Prob$-Glivenko--Cantelli if the 
\emph{bracketing number} 
$\Nbr(\eps,\Ccal,L_1(\Prob))$ is finite for any $\eps>0$. 
The number $\Nbr(\eps,\HcalT,L_1(\Prob))$ is the minimal amount of 
so-called \emph{$\eps$-brackets} $[V,W]$ needed to cover $\Ccal$. 
An $\eps$-bracket $[V,W]$ with respect to $L_1(\Prob)$ is a pair of sets 
satisfying $V \subset W$ and $\Prob(W \setminus V) \le \eps$. 
A set class $\Ccal$ is covered by brackets 
$[V_i,W_i]$, $i=1,\ldots,N$, if each $A\in\Ccal$ satisfies 
$V_i \subset A \subset W_i$ for some $i$. 
The criterion cited above is stated in terms of function classes, 
but it easily applies to 
set classes by identifying sets with their indicator functions. 
\par
A sufficient condition for $\Ccal$ to be $\Prob$-Donsker is 
\begin{align} \label{eq:048}
\int_0^\infty \sqrt{\log \Nbr \robrfl{ \eps, \Ccal, L_2(\Prob) }} \dm \eps 
<
\infty
\end{align}
\citep[cf.][Section 2.5.2]{van_der_Vaart/Wellner:1996}.
The distance of two sets $A$ and $B$ in $L_2(\Prob)$ is related to their 
distance in $L_1(\Prob)$ via
\[
d_{L_2(\Prob)} (A,B) 
= 
\norm{ 1_A -1_B }_{L_2(\Prob)} = d_{L_1(\Prob)} ^{1/2}(A,B).
\]
Hence the $L_2(\Prob)$ bracketing entropy condition~\eqref{eq:048} 
is equivalent to 
\begin{align}\label{eq:021}
\int_0^\infty \sqrt{\log \Nbr \robrfl{ \eps^2, \Ccal, L_1(\Prob) }} \dm \eps 
<
\infty.
\end{align}
%Now let us consider the set classes $\HcalT$ and $\Dcaldeltazer$.
\par
Part~(\ref{item:lem.4.a}). 
The set class $\HcalT$ is the collection of all $\anbr{T(\At)}$ 
for $t\in\R$ with a fixed $T$. 
Since the sets $\At$ are increasing in $t$, and $T$ is componentwise non-decreasing, 
we have $\anbr{T(\At)} \subset \anbr{T(A_s)}$ for $t\le s$. 
Consequently, $\HcalT$
can be covered by $O(1/\eps)$ brackets of size $\eps$ 
with respect to $L_1(\Prob')$ for any probability measure $\Prob'$ 
on $\uintd$.
The brackets $[V_i,W_i]$ can be chosen as 
$V_i=\anbr{T(A_{t_i})}$, $W_i=\cup_{t<t_{i+1}}\anbr{T(A_t)}$ 
with an appropriate finite sequence $t_1 < \ldots < t_N$. 
If $\Prob'\anbr{T(\At)}$ has jumps, then it may be difficult to choose $t_i$ 
such that $\Prob'(W_i\setminus V_i) = \eps$ for all $i$. 
In this case 
% one can integrate all times $t'$ where  $\Prob'\anbr{T(\At)}$ 
% has jumps greater than $\eps/2$ into the sequence $t_1,\ldots,t_N$.
% Then 
we may have $\Prob'(W_i\setminus V_i) < \epsilon$ for some $i$, 
but the total number of brackets is still $O(1/\eps)$.  
Thus we have 
\begin{align*} %\label{eq:061}
\Nbr\robr{\eps,\HcalT,L_1(\Prob')}=O(1/\eps),
\end{align*}
and $\HcalT$ is universally Glivenko--Cantelli. 
% \[
% \lim_{n\to\infty}
% \sup_{t\in\R}
% \left|
% \ProbCn\anbr{T(\At)} - \ProbC\anbr{T(\At)}
% \right|
% = 0 \quad \Prob\text{-a.s.}
% \]
% As the sets $\anbr{T(\At)}$ are increasing in $t$ and $C$ is absolutely 
% continuous, $O(1/\eps^2)$ brackets of size $\eps^2$ cover $\HcalT$ in 
% $L_1(\ProbC)$. 
% The technical difficulties related to the discontinuities of $T$ 
% can be handled analogously to the proof of Theorem~\ref{thm:1}.
If the maximal bracket size is $\eps^2$, then one needs $O(1/\eps^2)$ brackets to cover $\HcalT$. This is sufficient for~\eqref{eq:021}, and hence $\HcalT$ is 
universally Donsker.
\begin{remark} \label{rem:8}
It is also easy to show that the set class $\HcalT$ is VC with index $2$. 
As the sets $\anbr{T(\At)}$ are increasing in $t$, they cannot shatter any 
two-point set. Let $u\sampone,u^{(2)}\in\uintd$, and let $t_1,t_2\in\R$ be such 
that for $k=1,2$ $x\sampk\in\anbr{T(\At)}$ is equivalent to $t\ge t_k$. 
Without loss of generality let $t_1\le t_2$. Then $\HcalT$ cannot pick out $\cubr{u^{(2)}}$, and hence $\HcalT$ is VC. 
From here, universal Glivenko--Cantelli and Donsker properties follow 
if we verify \emph{$\Prob'$-measurability} of $\HcalT$ for any 
probability measure $\Prob'$ on $\uintd$ 
\citep[cf.][Definition 2.3.3]{van_der_Vaart/Wellner:1996}. 
This can also be done. 
\par
However, since the VC property of $\Dcaldeltazer$ remains elusive, 
the proofs of Parts~(\ref{item:lem.4.b}) and~(\ref{item:lem.4.c}) 
are based on bracketing entropy.
This is the reason why the proof of Part~(\ref{item:lem.4.a}) presented above  
is also based on bracketing. 
It gives a short preview of the ideas used below.
\end{remark}
\par
Part~(\ref{item:lem.4.c}). 
% That is, it suffices to show that the set class 
% $\Dcaldeltazer$ is $\ProbC$-Donsker. 
% We will use the bracketing entropy criterion~\eqref{eq:021} 
% with $\HcalT$ replaced by $\Dcaldeltazer$. 
% It is the complexity of the set class $\Dcaldeltazer$ 
% that makes working with bracketing entropy necessary in this paper. 
% Although $\Dcaldeltazer$ is simpler and has more structure than 
% $\Hcal$, the VC property for $\Dcaldeltazer$ remains elusive. 
% Thus explicit construction of covering $\eps$-brackets is the only 
% way to prove that $\Dcaldeltazer$ is $\ProbC$-Donsker. 
%\par 
Fix $\eps> 0$. 
As $[\emptyset, \uintd]$ is a bracket of size $1$ in $L_1(\ProbC)$ covering 
any subset of $\uintd$, we can assume that $\eps <1$. 
For this $\eps$ we define  
\[
\gamma := \frac{\eps}{8 d K(\eps/8d)}
\] 
and 
\[
\delta_1 := (\delta_0 - \gamma)\pospart. 
\]
Given an arbitrary $t_0\in\R$ and $t_1\ge t_0$, consider the following sets:
\begin{align*}
W_{t_0,t_1}
&:=\bigcup_{t\in[t_0,t_1)} U_{\delta_0}(\Bt),\\
W'_{t_0,t_1}
&:=\bigcup_{t\in[t_0,t_1)} \Udeltaone(\Bt),\\
V_{t_0,t_1} 
&:=\bigcap_{t\in[t_0,t_1)} \Udeltaone(\Bt).
\end{align*}
The bracket $[V_{t_0,t_1}, W_{t_0,t_1}]$ covers all $\Udelta(\Bt)$ 
for $\delta\in[\delta_0,\delta_1]$ and $t\in[t_0,t_1)$. 
The size of this bracket in $L_1(\ProbC)$ equals
\[
\ProbC (W_{t_0,t_1} \setminus W'_{t_0,t_1}) 
+ 
\ProbC (W'_{t_0,t_1} \setminus V_{t_0,t_1}).
\]
Denote $J_\eta:=[\eta, 1-\eta]^d$ for $\eta\in[0,1/2)$, and let 
$\lambda$ be the Lebesgue measure on $J_0=\uintd$. Then  
\begin{align} \label{eq:049}
\ProbC(W_{t_0,t_1} \setminus W'_{t_0,t_1}) 
\le 
\ProbC \robrfl{J_0 \setminus J_{\eps/8d} }
+ 
\lambda(W_{t_0,t_1} \setminus W'_{t_0,t_1}) K(\eps/8d). 
\end{align}
As $C$ is a copula and has uniform marginal distributions, 
the first term on the right hand side satisfies
\begin{align*}
\ProbC \robrfl{ J_0 \setminus J_{\eps/8d} }
&\le
\sumioned \ProbC 
\robrfl{ \cubrfl{ u \in \uintd : u_i \notin [\eps/8d, 1-\eps/8d] } } %\\
%&
=
\eps/4.
\end{align*}
To obtain an upper bound for $\lambda\robr{W_{t_0,t_1} \setminus W'_{t_0,t_1}}$, 
observe that 
\[
W_{t_0,t_1} \setminus W'_{t_0,t_1} 
%= 
\subset
\robrfl{
\anbr{U_{\delta_0}(B_{t_1})} \setminus \anbr{U_{\delta_1}(B_{t_1})}}
\cup
\robrfl{
\anbrup{U_{\delta_0}(B_{t_0})} \setminus \anbrup{U_{\delta_1}(B_{t_0})}}
\]
where $\anbrup{B}:=\cup_{u\in B}[u,(1,\ldots,1)]$ is the 
upper layer of $B$ in $\uintd$. 
Moreover, analogously to Lemma~\ref{lem:1}, one can obtain that
\[
\lambda\robr{\anbr{U_{\delta_0}(B_{t_1})} \setminus \anbr{U_{\delta_1}(B_{t_1})}} 
\le 
%2
d(\delta_0-\delta_1) 
\le 
%2
d\gamma 
= 
%\frac{\eps}{4K(\eps/8d)}. 
\frac{\eps}{8K(\eps/8d)}. 
\]
Same bound holds for $\lambda(\anbrup{U_{\delta_0}(B_{t_0})} \setminus \anbrup{U_{\delta_1}(B_{t_0})})$. 
Thus~\eqref{eq:049} yields 
\begin{align} \label{eq:026}
\ProbC(W_{t_0,t_1} \setminus W'_{t_0,t_1}) 
\le 
%\frac{3\eps}{8} < 
\frac{\eps}{2}. 
\end{align}
Consequently, as $\ProbC$ is absolutely continuous, 
we can choose $t_1>t_0$ such that either $t_1<\infty$ and 
$\ProbC(W_{t_0,t_1} \setminus V_{t_0,t_1}) = \eps$ 
or 
$\ProbC(W_{t_0,t_1} \setminus V_{t_0,t_1}) < \eps$ and $t_1 = \infty$. 
If $t_1 < \infty$, then~\eqref{eq:026} implies that 
\begin{align}\label{eq:030}
\ProbC(W'_{t_0,t_1} \setminus V_{t_0,t_1}) \ge \frac{\eps}{2}.
\end{align}
Proceeding in the same way as above, we obtain an increasing sequence 
$t_0 < t_1 < t_2 < \ldots$ that eventually terminates at $\infty$. 
Technical difficulties related to possible jumps of 
$\ProbC(W'_{t_i,t} \setminus V_{t_i,t})$ for $t>t_i$ can be handled 
analogously to the proof of Part~(\ref{item:lem.4.a}). 
A similar construction yields a decreasing sequence $t_0>t_{-1}>\ldots$ 
that eventually terminates at $-\infty$.
\par
We still have to show that the sequence $t_k$ is always finite, 
i.e., that $t_k$ indeed assumes $\pm\infty$ for some $k$. 
Consider the sets 
$S_k  := W'_{t_k,t_{k+1}}\setminus \Udeltaone(B_{t_{k+1}})$ and 
$S'_k := W'_{t_k,t_{k+1}}\setminus \Udeltaone(B_{t_k})$.  
As $S_k$ are disjoint for different $k$, 
we have $\sum_{k}\ProbC(S_k) \le 1$ and, analogously,
$\sum_{k}\ProbC(S'_k) \le 1$. 
It is obvious that 
\[
S_k\cup S'_k = W'_{t_k,t_{k+1}} \setminus (\Udeltaone(B_{t_k})\cap\Udeltaone(B_{t_{k+1}})).
\] 
Furthermore, monotonicity of $T$ implies that 
\[
\Udeltaone(B_{t_k})\cap\Udeltaone(B_{t_{k+1}})
=
\bigcap_{t\in[t_k,t_{k+1}]}\Udeltaone(\Bt) 
\subset
V_{t_k,t_{k+1}}.
\]
This immediately yields 
%\begin{align*}
$W'_{t_k, t_{k+1}} \setminus V_{t_k,t_{k+1}} \subset S_k \cup S'_k$. 
%\end{align*}
Applying~\eqref{eq:030}, we obtain that 
$\ProbC(S_k) + \ProbC(S'_k) \ge \eps/2$ 
if $t_k$ and $t_{k+1}$ are finite. 
%
% Furthermore, we have that 
% \begin{align} \label{eq:029}
% W'_{t_k, t_{k+1}} \setminus V_{t_k,t_{k+1}} \subset S_k \cup S'_k. 
% \end{align}
% For notational simplicity, we verify~\eqref{eq:029} for $k=0$.  
% Note first that  
% $S_0\cup S'_0 = W'_{t_0,t_1} \setminus (\Udeltaone(B_{t_0})\cap\Udeltaone(B_{t_1}))$. 
% Furthermore, by construction of $V_{t_0,t_1}$ we have 
% \[
% V_{t_0,t_1} 
% = 
% \bigcap_{t\in[t_0,t_1)}\Udeltaone(\Bt)
% \subset 
% \bigcap_{t\in[t_0,t_1]}\Udeltaone(\Bt)
% \ldotp
% \]  
% Finally, monotonicity of $T$ entails 
% \[
% \bigcap_{t\in[t_0,t_1]}\Udeltaone(\Bt)
% =
% \Udeltaone(B_{t_0})\cap\Udeltaone(B_{t_1}), 
% \]
% which immediately yields~\eqref{eq:029}.
% Applying~\eqref{eq:030}, we obtain that 
% $\ProbC(S_k) + \ProbC(S'_k) \ge \eps/2$. 
% if $t_k$ and $t_{k+1}$ are finite. 
%
As the sum $\sum_k \robr{\ProbC(S_k) + \ProbC(S'_k)}$ is bounded by $2$, 
the length of the sequence $(t_k)$ is bounded by $4\ceilbr{1/\eps}+4$. 
Possible discontinuities of $\ProbC(W'_{t_k,t}\setminus V_{t_k,t})$ may increase this number at most $\ceilbr{2/\eps}$ additional steps (cf.\ proof of Part~(\ref{item:lem.4.a})). 
\par
Thus we have shown that the set class 
$\cubr{\Udelta(\Bt): t\in\R,\delta\in[\delta_1,\delta_0]}$ 
can be covered by  
$6\ceilbr{1/\eps}+4$ brackets of size $\eps$ in $L_1(\ProbC)$. 
Defining $\delta_k:=(\delta_{k-1}-\gamma)\pospart$ for $k=2,3,\ldots$, 
we reach $0$ after 
$\ceilbr{ \delta_0/\gamma } = \ceilbr{ 8 \delta_0 d K(\eps/8d) / \eps }$ steps.
As the arguments above apply to any interval $[\delta_k,\delta_{k-1}]$, 
we obtain a coverage for the set class $\Dcaldeltazer$ 
and hence an upper bound for the bracketing number:
\[
\Nbr(\eps, \Dcaldeltazer, L_1(\ProbC)) = O(\eps^{-2} K(\eps/8d)).
\]
According to~\eqref{eq:021}, we need to verify that 
\begin{align}
\label{eq:027}
\int_0^1 \sqrt{\log(K(\eps^2/4d) \eps^{-4})} \dm \eps < \infty.
\end{align}
Changing the upper integral bound from $\infty$ to $1$ is 
justified by the fact that any set class can be 
covered by a single bracket of size $1$.
\par
As $\int_0^1\sqrt{\log(1/\eps)} \dm \eps < \infty$ and 
\[
\sqrt{\log(K(\eps^2/4d)) \eps^{-4}} 
\le
\sqrt{\log K(\eps^2/4d)} + 2\sqrt{\log(1/\eps)},
\]
the integrability condition~\eqref{eq:027} follows from the 
assumption~\eqref{eq:025}.
\par
Part~(\ref{item:lem.4.b}). According to proof of part~(\ref{item:lem.4.c}),
$\Nbr\robr{\eps,\Dcaldeltazer,L_1(\ProbC)}<\infty$ for any $\eps$. 
Assumption~\eqref{eq:025} is needed only to verify~\eqref{eq:021}.
\end{proof}

\section{Examples} % 
\label{sec:4}
In this section we discuss the regularity 
assumptions of Theorems~\ref{thm:1} and \ref{thm:2}. 
The main results are stated in Propositions~\ref{prop:1} and \ref{prop:2}, 
verifying all regularity assumptions for bivariate Clayton copulas and 
bivariate Gauss copulas with correlation parameter $\rho\ge 0$. 
The case $\rho<0$ is treated in Proposition~\ref{prop:2}(\ref{item:prop.2.c}), 
which guarantees 
the convergence rate $O_\Prob(n^{-1/2}\sqrt{\log n})$. This is almost as good 
as $O_\Prob(n^{-1/2})$. 
\par
We start the discussion with a remark covering the mild 
integrability condition~\eqref{eq:025} and 
copulas with bounded densities. 
\begin{remark} \label{rem:3}
\begin{enumerate}[(a)]
\item \label{item:rem.3.a}
It is easy to see that $K(\eps)=O(\exp(\eps^{-1 + \eta}))$ for $\eps\to0$ 
with some $\eta>0$ implies~\eqref{eq:025}. In particular, any polynomial bound 
$K(\eps) = O(\eps^{-k})$ for $k>0$ is sufficient.
\item \label{item:rem.3.b}
An immediate consequence of Lemma~\ref{lem:1} is that 
all copulas with bounded densities satisfy all regularity 
conditions of Theorems~\ref{thm:1} and \ref{thm:2}. 
A particularly important copula example with a bounded density is the 
\emph{independence copula} $C(u)= \prod_{i=1}^{d}u_i$.  
The Iman--Conover method with independence copula is a standard tool 
in applications with empirically margins based on real data. 
It is applied to generate multivariate samples with margins that are 
close to independent or to remove spurious correlations from multivariate 
data sets. 
\end{enumerate}
\end{remark}
%
%
% An immediate consequence of Lemma~\ref{lem:1} is that 
% all copulas with bounded densities satisfy the regularity 
% conditions of Theorems~\ref{thm:1} and \ref{thm:2}. 
% %
% \begin{corollary}\label{cor:1}
% If the copula density is bounded, then conditions~\eqref{eq:018}, 
% \eqref{eq:025}, and~\eqref{eq:024} are satisfied. 
% \end{corollary}
\par
Unfortunately, many popular copulas, such as Gauss, Clayton, Gumbel, 
or $t$ copulas, have unbounded densities. 
In particular, a bounded copula density implies that all tail dependence 
coefficients are zero. 
Thus applications related to dependence of rare events demand a deeper study 
of copulas with unbounded densities. 
The present paper provides two bivariate examples: the Clayton and the 
Gauss copula. 
\par
The bivariate Clayton copula with parameter $\theta\in(0,\infty)$ is defined as
\[
C_\theta(u_1,u_2) = \robrfl{u_1\powminustheta + u_2\powminustheta -1}^{-1/\theta}
\]
The density $c_\theta$ can be obtained by differentiation:
\begin{align}
c_\theta(u_1,u_2) 
&= \nonumber
\partial_{u_2}\partial_{u_1} C_\theta(u_1,u_2) \\
&= \label{eq:057}
\robrfl{u_1\powminustheta + u_2\powminustheta - 1}^{-2-1/\theta}(\theta+1)(u_1u_2)^{-\theta -1}.
\end{align}
The next result states that this copula family satisfies all 
regularity assumptions of Theorem~\ref{thm:2}. 
\begin{proposition} \label{prop:1}
Any bivariate Clayton copula $C_\theta$ with $\theta\in(0,\infty)$ 
%satisfies ~\eqref{eq:018}, 
satisfies~\eqref{eq:024} and \eqref{eq:025}.
\end{proposition}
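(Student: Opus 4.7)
The plan is to verify conditions \eqref{eq:025} and \eqref{eq:024} separately. Condition \eqref{eq:025} is essentially immediate from the closed form \eqref{eq:057}, while \eqref{eq:024} requires a delicate geometric decomposition of $U_\delta(B_t)$ exploiting both the explicit form of $c_\theta$ and the monotonicity of $B_t$.

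For \eqref{eq:025}, I would observe that on $[\eps, 1-\eps]^2$ every $u_i^{-\theta} \geq 1$, so the base $u_1^{-\theta}+u_2^{-\theta}-1 \geq 1$ and the first factor in \eqref{eq:057} is bounded by $1$. The remaining factor is at most $(\theta+1)\eps^{-2\theta-2}$, giving $K(\eps) \leq (\theta+1)\eps^{-2\theta-2}$. Since this is polynomial in $1/\eps$, \eqref{eq:025} follows from Remark~\ref{rem:3}(\ref{item:rem.3.a}).

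For \eqref{eq:024}, the first step is the refined pointwise bound
\[
c_\theta(u_1, u_2) \leq (\theta+1) \min(u_1, u_2)^\theta \max(u_1, u_2)^{-\theta-1},
\]
which follows from $u_1^{-\theta} + u_2^{-\theta} - 1 \geq \min(u_1, u_2)^{-\theta}$ in \eqref{eq:057}. In particular $c_\theta \leq (\theta+1)/\max(u_1, u_2)$ is singular only along the diagonal near the origin, and the additional factor $(\min/\max)^\theta$ provides decay off the diagonal. For fixed $t$, let $m^\ast := \inf_{v \in B_t}\max(v_1, v_2)$ denote the closest $L^\infty$-approach of $B_t$ to the origin. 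Since $B_t$ is a non-increasing curve in $[0,1]^2$, it touches the diagonal at most once; the maximum of $\min(v_1, v_2)$ along $B_t$ equals $m^\ast$ and is attained at this near-diagonal approach, and hence $\min(u_1, u_2) \leq m^\ast + \delta$ for every $u \in U_\delta(B_t)$. I would then decompose $U_\delta(B_t)$ across dyadic annuli $A_k := \{u : \max(u_1, u_2) \in [2^{-k}, 2^{-k+1})\}$, setting $K := \min\{\lceil \log_2(1/m^\ast) \rceil, \lceil \log_2(1/\delta)\rceil\}$. For each $k \leq K$, the estimates above force $\min/\max \leq C \cdot 2^{k-K}$ on $A_k \cap U_\delta(B_t)$ and therefore $c_\theta \leq C \cdot 2^{\theta(k - K)} \cdot 2^k$; combined with $\lambda(A_k \cap U_\delta(B_t)) \leq C\delta \cdot 2^{-k}$, which follows from the monotone-curve total-variation argument behind Lemma~\ref{lem:1}, the contribution of $A_k$ is $O(\delta \cdot 2^{\theta(k-K)})$, a geometric series summing to $O(\delta)$. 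For $k > K$, either $A_k \cap U_\delta(B_t) = \emptyset$ (when $2^{-k+1} < m^\ast - \delta$, since any $u \in A_k$ would then have $\max(u) + \delta < m^\ast \leq \max(v)$ for all $v \in B_t$) or $\ProbCtheta(A_k) \leq C_\theta(2^{-k+1}, 2^{-k+1}) = O(2^{-k})$, whose geometric tail contributes $O(\delta)$.

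The main obstacle is closing the $k \leq K$ sum to the precise rate $O(\delta)$ rather than $O(\delta \log(1/\delta))$. A naive application of $c_\theta \leq (\theta+1)/\max$ with Lemma~\ref{lem:1} yields $O(\delta)$ per annulus, and therefore $O(\delta \log(1/\delta))$ in total since as many as $\Theta(\log(1/\delta))$ annuli can intersect the tube. The decisive improvement comes from the extra factor $(\min/\max)^\theta$: because $B_t$ is monotone, it cannot hug the diagonal except at the single near-diagonal approach $m^\ast$, and $\min$ is forced to be much smaller than $\max$ on the rest of the curve. Quantifying this separation as $2^{k-K}$ converts the offending logarithmic sum into a convergent geometric series and yields the desired $O(\delta)$ bound.
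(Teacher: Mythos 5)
Your verification of \eqref{eq:025} matches the paper essentially verbatim: bound the first factor of \eqref{eq:057} by $1$ and the rest by $(\theta+1)\eps^{-2\theta-2}$, then invoke Remark~\ref{rem:3}(\ref{item:rem.3.a}).

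Your verification of \eqref{eq:024}, however, takes a genuinely different route from the paper's. The paper shows that, for fixed $u_2$, the density $c_\theta(\cdot,u_2)$ is unimodal with a non-decreasing ridge $\uast_1(u_2)$; it then slices $\Vdeltat\upone$ into horizontal segments of length $\le 2\delta$ and \emph{shifts} each segment toward the ridge, mapping $\Vdeltat\upone\cap D_\pm\upone$ into a strip $[\vbarast_1-2\delta,\vbarast_1]\times[0,1]$ (resp.\ its analogue) of copula mass exactly $2\delta$, yielding $\ProbCtheta(\Udelta(\Bt))\le 8\delta$. You instead derive the pointwise bound $c_\theta\le(\theta+1)\min(u)^\theta\max(u)^{-\theta-1}$ (correct, since $u_1^{-\theta}+u_2^{-\theta}-1\ge\min(u)^{-\theta}$) and sum over dyadic annuli, using $\min(u)\le m^\ast+\delta$ on the tube together with $\lambda\bigl(A_k\cap\Udelta(\Bt)\bigr)=O(\delta2^{-k})$ to turn the $(\min/\max)^\theta$ off-diagonal decay into a convergent geometric series. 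Your method trades the paper's sharp constant for an ``explicit-estimate'' argument; the paper's method requires identifying and controlling the ridge curve, while yours only needs a pointwise density majorant, which might be more portable to other copula families where the ridge curve is less tractable.

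There is, however, a gap in your $k>K$ endgame. When $\delta<m^\ast$ you take $K=\lceil\log_2(1/m^\ast)\rceil$, so $2^{-K}\le m^\ast<2^{-K+1}$, and the nonempty annuli with $k>K$ satisfy $m^\ast-\delta\le 2^{-k+1}\le 2^{-K}\le m^\ast$. This window is nonempty whenever $\delta>m^\ast-2^{-K}$, which can occur even when $\delta\ll m^\ast$ (take $m^\ast$ just above a dyadic value). In that case there is at least one nonempty $A_k$ with $k>K$, and your crude bound $\ProbCtheta(A_k)\le C_\theta(2^{-k+1},2^{-k+1})=O(2^{-k})=O(m^\ast)$ is \emph{not} $O(\delta)$, so the claim that ``the geometric tail contributes $O(\delta)$'' fails for that regime. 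The crude bound $\ProbCtheta(A_k)$ gives $O(\delta)$ only when $\delta\ge m^\ast$. The fix is straightforward: on any nonempty annulus with $k>K$ you still have $\max(u)\ge m^\ast-\delta$ on the tube (else $u\notin\Udelta(\Bt)$), hence $\min/\max=O(1)$ and $c_\theta=O(1/m^\ast)$, which combined with the same volume bound $\lambda(A_k\cap\Udelta(\Bt))=O(\delta 2^{-k})=O(\delta m^\ast)$ gives an $O(\delta)$ contribution per annulus, and there are only $O(1)$ such annuli. With that patch the argument closes.
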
 
\begin{proof}
To verify~\eqref{eq:025}, it suffices to show that $K(\eps)$ is polynomial 
(cf.\ Remark~\ref{rem:3}(\ref{item:rem.3.a})). 
The density is given in~\eqref{eq:057}. 
It is easy to see that $\robr{u_1\powminustheta + u_2\powminustheta -1}^{-2 -1/\theta}\le 1$ for $u\in(0,1)^2$. 
Hence the order of magnitude of $K(\eps)$ is determined by $\sup_{u\in(\eps,1-\eps)^2}(u_1u_2)^{-\theta -1}$, which is clearly polynomial. 
\par
%As~\eqref{eq:024} implies \eqref{eq:018}, it suffices to verify the stronger condition 
To verify~\eqref{eq:024}, recall
%Recall 
that the proof of Lemma~\ref{lem:1} used the following coverage of 
the set $\Udelta(\Bt)$:
\[
\Udelta(\Bt) \subset \bigcup_{i=1}^{d} \Vdeltat\upi.
\]
Hence, for $d=2$, we have    
\begin{align}
\label{eq:039}
\ProbCtheta(\Udelta(\Bt)) \le \ProbCtheta(\Vdeltat\upone) + \ProbCtheta(\Vdeltat\uptwo). 
\end{align}
The arguments that yield $O(\delta)$ bounds for $\ProbCtheta(\Vdeltat\upi)$ are 
symmetric in $i=1,2$, so that it suffices to consider $\Vdeltat\upone$. 
For $u_2\in(0,1)$, denote 
\[
\ubar_1 = \ubar_1(u_2) := \sup\cubrfl{u_1 : (u_1,u_2)\in \Vdeltat\upone}
\]
and 
\[
\ulbar_1 = \ulbar_1(u_2) := \inf\cubrfl{u_1 : (u_1,u_2)\in \Vdeltat\upone}.
\]
It is easy to see that 
\[
\ProbCtheta\robrfl{\Vdeltat\upone} 
= \int_0^1 \int_{\ulbar_1}^{\ubar_1} c_\theta(u_1, u_2) \dm u_1 \dm u_2.
\]
Moreover, the construction of $\Vdeltat\upone$ implies that 
\[
\forall u_2\in(0,1) 
\quad
\ubar_1(u_2) - \ulbar_1(u_2) \le 2\delta 
\quad
\lambda\text{-a.s.}
\]
Partial differentiation of $\log c_\theta$ yields
\[
\partial_{u_i} \log c_\theta(u_1,u_2) 
=
\frac{(2\theta + 1)u_i^{-\theta -1}}{u_1^{-\theta} + u_2^{-\theta} -1} 
-
\frac{\theta + 1}{u_i}
,\quad i=1,2. 
\]
Hence $\partial_{u_1}\log c_\theta(u_1,u_2) = 0$ is equivalent to 
\[
\frac{(2\theta + 1)u_1^{-\theta}}{u_1^{-\theta} + u_2^{-\theta} -1} 
=
\theta + 1,
\]
and for fixed $u_2\in(0,1)$ the copula density $c_\theta(u_1,u_2)$ attains 
its maximum at 
\[
\uast_1 =\uast_1(u_2) := \min\cubrfl{\robrfl{\frac{\theta +1}{\theta}\robrfl{u_2^{-\theta} - 1}}^{-1/\theta} , 1}.
\]
Furthermore, $c_\theta(u_1,u_2)$ is increasing in $u_1$ for $u_1<\uast_1$ and 
decreasing in $u_1$ for $u_1>\uast_1$. Let $\Dplus\upone$ and $\Dminus\upone$ 
denote the corresponding sub-domains of $(0,1)^2$: 
\[
\Dplus\upone:=\cubrfl{u\in(0,1)^2 : u_1< \uast_1}
,\quad
\Dminus\upone:=\cubrfl{u\in(0,1)^2 : u_1> \uast_1}
.
\] 
An exemplary plot of the function $\uast_i$ with resulting sets 
$\Dplus\upone$, $\Dminus\upone$ is given in Figure~\ref{fig:1}. 
Note that the function $u_2\mapsto \uast_1(u_2)$ 
is non-decreasing for any $\theta>0$.   
\par
\begin{figure} %[h!]
\centering
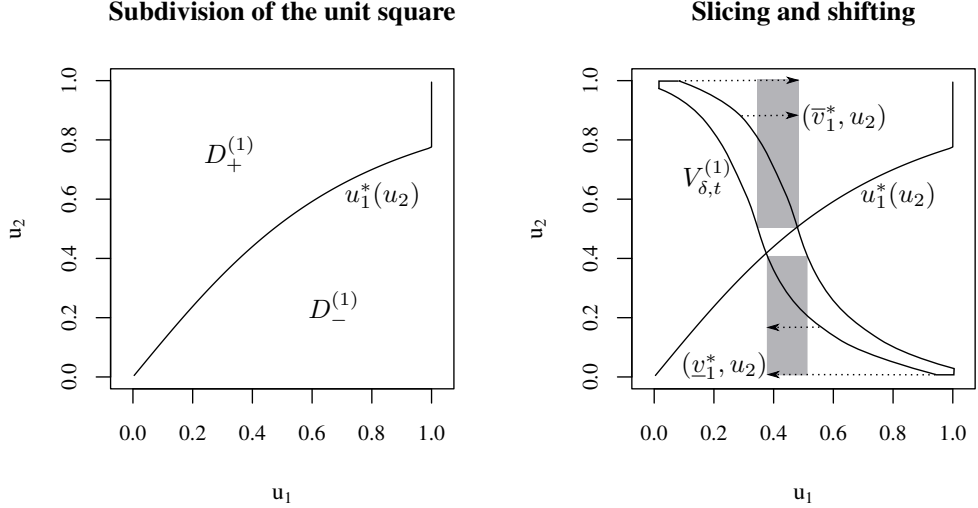
\caption{Bivariate Clayton copula: the curve $u_2\mapsto(\uast_1(u_2),u2)$ subdividing the unit square (left) and the resulting slicing and shifting argument (right).}
% parameter value: $\theta=2$?
\label{fig:1}
\end{figure}
\par
We will show that $\ProbCtheta\robr{\Vdeltat\upone \cap \Dplus\upone}$ and 
$\ProbCtheta\robr{\Vdeltat\upone \cap \Dminus\upone}$ are bounded by $2\delta$. 
Denote
\[
\Iplus\upone := \pi_2\robrfl{\Vdeltat\upone \cap \Dplus\upone}
\quad\text{and}\quad
\Iminus\upone := \pi_2\robrfl{\Vdeltat\upone \cap \Dminus\upone}, 
\]
where $\pi_2$ is the projection on the second coordinate: 
$\pi_2((u_1,u_2)):=u_2$. 
Further denote 
$S_\ast\upone:=\cubr{u\in\Vdeltat\upone : u_1=\uast_1(u_2)}$ and 
$\Iast:=\pi_2(S_\ast\upone)$. 
It is easy to see that $\Iast\upone = \overline{\Iplus\upone} \cap \overline{\Iminus\upone}$. 
Hence we can write
\begin{align}
\label{eq:037}
\ProbCtheta\robrfl{\Vdeltat\upone \cap \Dplus\upone} 
=
\int_{\Iplus\upone \setminus \Iast\upone} 
\int_{\ulbar_1}^{\ubar_1} c_\theta (u)\, \dm u_1 \dm u_2 
+
\int_{\Iast\upone} 
\int_{\ulbar_1}^{\uast_1} c_\theta (u)\, \dm u_1 \dm u_2 
\end{align}
Denote $\vbarast_1 := \sup\cubr{u_1: (u_1,u_2)\in\Vdeltat\upone \cap \Dplus\upone}$. 
This definition implies that 
$\ubar_1(u_2) \le \vbarast_1$ for all  
$u_2\in \Iplus\upone \setminus \Iast\upone$.
Moreover, it is easy to see that 
$\vbarast_1 = \sup\cubr{\uast_1(u_2) : u_2 \in \Iast\upone}$.
As $\uast_1(u_2)$ is non-decreasing, this yields 
$\vbarast_1 \le \uast_1(u_2)$ 
for $u_2\in\Iplus\upone \setminus \Iast\upone$. This gives us
\[
\forall u_2\in \Iplus\upone \setminus \Iast\upone
\quad
\ubar_1(u_2) \le \vbarast_1 \le \uast_1(u_2)   
\]
and, as a consequence,  
$\Vdeltat\upone \cap \Dplus\upone \subset [0,\vbarast_1] \times (0,1)$. 
%This is illustrated in Figure~\ref{fig:1}.
If $\vbarast_1 \le 2\delta$, then the uniform margins of the copula 
$C_\theta$ immediately yield
$\ProbCtheta\robr{\Vdeltat\upone \cap \Dplus\upone} \le 2\delta$.
Hence, without loss of generality, we assume that $\vbarast_1 > 2\delta$. 
\par 
As $c_\theta$ is increasing in $u_1$ on $\Dplus\upone$ and 
$\ubar_1 - \ulbar_1 \le 2 \delta$ $\lambda$-a.s.,  we obtain that 
\begin{align}
\label{eq:035}
\forall u_2 \in \Iplus\upone \setminus \Iast\upone 
\quad
\int_{\ulbar_1}^{\ubar_1} c_\theta (u)\, \dm u_1 
\le
\int_{\vbarast_1-2\delta}^{\vbarast_1} c_\theta (u)\, \dm u_1
\quad
\lambda\text{-a.s.} 
\end{align}
Moreover, it is easy to see that if 
%$u\in[0,1]^2$ with 
$u_2\in\Iast\upone$ and 
$u_1\in[\ulbar_1,\uast_1]$, 
then   
%one always has 
$u_1\ge\vbarast_1 - 2\delta$. This yields
\begin{align}
\label{eq:036}
\forall u_2\in \Iast\upone
\quad
\int_{\ulbar_1}^{\uast_1} c_\theta (u) \,\dm u_1
\le
\int_{\vbarast_1-2\delta}^{\vbarast_1} c_\theta (u) \,\dm u_1
.
\end{align}
Combining~\eqref{eq:035}, \eqref{eq:036}, and \eqref{eq:037}, we obtain that
\begin{align}
\label{eq:038}
\ProbCtheta \robrfl{\Vdeltat\upone \cap \Dplus\upone}
\le 
\ProbCtheta \robrfl{\sqbrfl{\vbarast_1 - 2\delta, \vbarast_1} \times [0,1]} = 2\delta.
\end{align}
The latter equality is due to the uniform margins of the copula $C$.
\par
The proof of~\eqref{eq:035} formalizes the idea of slicing the 
set $\Vdeltat\upone$ along $u_1$ for every 
$u_2\in\Iplus\upone \setminus \Iast\upone$ 
and shifting each slice $[\ulbar_1,\ubar_1]\times\cubr{u_2}$ upwards 
along $u_1$ until this slice touches the point $(\vbarast_1,u_2)$  
as illustrated in Figure~\ref{fig:1}.
Since $c_\theta$ is increasing in $u_1$ on $\Dplus\upone$, the transformed 
set has a larger probability under $\ProbCtheta$. 
\par
Using the fact that $c_\theta$ is decreasing in $u_1$ on $\Dminus\upone$, one 
easily obtains the following analogue to~\eqref{eq:035}:
\[
\forall u_2 \in \Iminus\upone \setminus \Iast\upone 
\quad
\int_{\ulbar_1}^{\ubar_1} c_\theta (u)\, \dm u_1 
\le
\int_{\vlbarast_1}^{\vlbarast_1+2\delta} c_\theta (u)\, \dm u_1
\quad
\lambda\text{-a.s.,} 
\]
where 
$\vlbarast_1:=\inf\cubr{u_1: (u_1,u_2)\in\Vdeltat\upone \cap D_{-}\upone}$. 
This result is obtained by slicing $\Vdeltat\upone$ along $u_1$ for 
all $u_2\in\Iminus\upone\setminus\Iast\upone$ and shifting each slice 
$[\ulbar_1,\ubar_1]\times\cubr{u_2}$ downwards along $u_1$ until 
it touches the point $(\vlbarast_1,u_2)$, cf.\ Figure~\ref{fig:1}. 
\par
Similarly to~\eqref{eq:036}, we have that 
\[
\forall u_2\in \Iast\upone
\quad
\int_{\uast_1}^{\ubar_1} c_\theta (u) \,\dm u_1
\le
\int_{\vlbarast_1}^{\vlbarast_1 + 2\delta} c_\theta (u) \,\dm u_1
.
\]
Hence, analogously to~\eqref{eq:038}, we obtain that 
\[
\ProbCtheta\robrfl{\Vdeltat\upone \cap \Dminus\upone} \le 2\delta,
\]
and, consequently, $\ProbCtheta \robr{\Vdeltat\upone} \le 4\delta$. 
\par
As the Clayton copula is symmetric in $u_1$ and $u_2$, we also have 
$\ProbCtheta\robr{\Vdeltat\uptwo} \le 4\delta$. Thus~\eqref{eq:039} 
yields $\ProbCtheta\robr{\Udelta(\Bt)}  \le 8\delta$, 
and condition~\eqref{eq:024} is satisfied.
%\hfill\BOX
\end{proof}
\par
The next example is the bivariate Gauss copula $C_\rho$, defined 
as the copula of a bivariate normal 
distribution with correlation parameter $\rho\in(-1,1)$. 
The parameter value $\rho=0$ yields the independence copula, which is
the uniform distribution on the unit square $(0,1)^2$. 
In this case all regularity conditions are satisfied (cf.\ Remark~\ref{rem:3}(\ref{item:rem.3.b})). 
\par
Let $\Phi$ denote the distribution function of the univariate standard 
normal distribution, and for $u_1,u_2\in(0,1)$ let 
$q_i=q_i(u_i):=\Phi\inv(u_i)$ denote the corresponding standard normal 
quantiles. 
Further, let 
$\Sigma=\robrfl{\begin{array}{cc} 1 & \rho \\ \rho & 1 \end{array}}$ 
denote the $2\times2$ correlation matrix corresponding to $\rho$. 
Then the bivariate Gauss copula $C_\rho$ for $\rho\ne 0$ can be written as
\[
C_\rho(u_1,u_2)
= 
\int_{-\infty}^{q_1}\int_{-\infty}^{q_2} 
\det(\Sigma)^{-1/2} \exp\robrfl{-\half x\tr \Sigma\inv x} \dm x_1 \dm x_2,
\]
where $x$ is considered as a bivariate column vector and $x\tr$ is the
transposed of $x$. 
The copula density $c_\rho$ can be obtained by differentiation:
\begin{align*}
c_\rho(u_1,u_2)
&=
\partial_{u_2}\partial_{u_1}C_\rho(u_1,u_2)\\
&=
\det(\Sigma)^{-1/2} \exp\robrfl{ \half q\tr \robrfl{I - \Sigma\inv} q},
\end{align*}
where $q=(q_1,q_2)\tr$ and $I$ is the identity matrix. 
\par
%
% The convergence results for the Gauss copula are subject to the next 
% Proposition. While the case $\rho>0$ is analogous to the Clayton copula, 
% the case $\rho<0$ turns out to be more difficult. For $\rho<0$, only 
% a slightly weaker convergence result could be established. 
% It remains open whether the convergence rate $O_\Prob(n^{-1/2})$ holds 
% for $\rho<0$.
% 
\par
\begin{proposition} \label{prop:2}
\begin{enumerate}[(a)]
\item \label{item:prop.2.a}
The bivariate Gauss copula $C_\rho$ always satisfies~\eqref{eq:025}. 
\item \label{item:prop.2.b}
If $\rho\ge0$, then $C_\rho$ 
%satisfies~\eqref{eq:018} and
satisfies~\eqref{eq:024}. 
\item \label{item:prop.2.c}
If $\rho<0$, then 
%$C_\rho$ satisfies~\eqref{eq:018}. The resulting 
the 
estimate $\Gnast$ of the sum distribution $G$ satisfies
\begin{align} \label{eq:045}
% \forall \eta>0 
% \quad
% \normfl{\Gnast - G}_\infty  = O_\Prob\robrfl{n^{-1/2 + \eta}}. 
\normfl{\Gnast - G}_\infty  = O_\Prob\robrfl{n^{-1/2}\sqrt{\log n}}. 
\end{align}
\end{enumerate}
\end{proposition}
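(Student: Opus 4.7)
The plan treats the three parts separately.

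For part (\ref{item:prop.2.a}), I would bound $K(\eps)$ using standard Gaussian quantile estimates. On $J_\eps=[\eps,1-\eps]^2$, Mills' ratio gives $|\Phi\inv(\eps)|=O(\sqrt{\log(1/\eps)})$, so the quadratic form $\tfrac{1}{2}q\tr(I-\Sigma\inv)q$ appearing in the log-density is $O(\log(1/\eps))$. Hence $K(\eps)=O(\eps^{-k_\rho})$ polynomially, and \eqref{eq:025} follows from Remark~\ref{rem:3}(\ref{item:rem.3.a}).

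For part (\ref{item:prop.2.b}), the case $\rho=0$ is the independence copula, already covered by Remark~\ref{rem:3}(\ref{item:rem.3.b}). For $\rho>0$, I would adapt the slicing-and-shifting argument of Proposition~\ref{prop:1}. Direct differentiation gives $\partial_{u_1}\log c_\rho = \rho(q_2-\rho q_1)/[(1-\rho^2)\phi(q_1)]$, so for each fixed $u_2$ the density $c_\rho(\cdot,u_2)$ is unimodal with maximum at $\uast_1(u_2)=\Phi(\Phi\inv(u_2)/\rho)$. For $\rho>0$, this function is non-decreasing in $u_2$, which is precisely the structural property used in the Clayton proof. The slicing argument then transcribes almost verbatim: partition $\Vdeltat\upone$ along $\Dplus\upone, \Dminus\upone, S_\ast\upone$, shift $u_1$-slices upward to $\vbarast_1$ on $\Dplus\upone$ and downward to $\vlbarast_1$ on $\Dminus\upone$, and use the uniform marginal of $C_\rho$ to conclude $\ProbCrho(\Vdeltat\upone)\le 4\delta$. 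Symmetry in the two coordinates gives $\ProbCrho(\Vdeltat\uptwo)\le 4\delta$, and \eqref{eq:039} implies \eqref{eq:024}.

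For part (\ref{item:prop.2.c}), the monotonicity of $\uast_1$ reverses when $\rho<0$, breaking the comparison between slices. The plan is to replace \eqref{eq:024} by the weaker bound
\begin{equation*}
\sup_{t\in\R}\ProbCrho(\Udelta(\Bt)) = O\!\left(\delta\sqrt{\log(1/\delta)}\right),
\end{equation*}
and feed it into the proof of Theorem~\ref{thm:2}. Since \eqref{eq:025} still holds by (\ref{item:prop.2.a}), Lemma~\ref{lem:4}(\ref{item:lem.4.c}) still yields the Donsker property, so the empirical-process remainder $Z_n$ stays $O_\Prob(1)$. With $Y_n=O_\Prob(n^{-1/2})$, the weaker bound turns $\sqrt{n}\ProbCrho(U_{Y_n}(\Bt))$ into $O_\Prob(\sqrt{\log n})$, giving \eqref{eq:045}. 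To establish the bound itself, decompose $\Udelta(\Bt)$ into the central piece $\Udelta(\Bt)\cap J_\eta$ (for an $\eta$ depending on $\delta$) and the boundary pieces near the two bad corners $(0,1)$ and $(1,0)$. Lemma~\ref{lem:1} controls the central piece by $K(\eta)\cdot 4\delta$. Near $(0,1)$, the explicit density satisfies a pointwise bound of the form $c_\rho(u)\le C[u_1(1-u_2)]^{-\beta_\rho}$ with $\beta_\rho=2|\rho|/(1+|\rho|)<1$; since the exponent in $c_\rho$ decays exponentially as one moves transverse to the ridge $u_1+u_2=1$, a careful integration of this singular bound over the $\delta$-strip around the monotone curve $\Bt$ produces the required logarithmic factor.

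The main obstacle is the refinement in (\ref{item:prop.2.c}). A naive corner-cutting with polynomial $K(\eta)$ and an optimized $\eta$ would yield only $\ProbCrho(\Udelta(\Bt))=O(\delta^{1/(k_\rho+1)})$, which is too weak. Extracting the extra $\sqrt{\log(1/\delta)}$ factor requires exploiting the one-dimensional concentration of $c_\rho$ along the anti-diagonal in the two bad corners and the delicate balance between the transverse width of $\Udelta$ and the tangential behavior of $\Bt$.
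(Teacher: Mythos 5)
Parts~(a) and~(b) of your plan are essentially the paper's own proof: part~(a) uses the same Mills-ratio quantile bound to show that $K(\eps)$ grows only polynomially, and part~(b) transcribes the slicing-and-shifting argument of Proposition~\ref{prop:1} to the Gauss ridge $\uast_1(u_2)=\Phi(\Phi\inv(u_2)/\rho)$, which is increasing when $\rho>0$.

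In part~(c) your overall skeleton is correct (establish a weaker bound
$\sup_t\ProbCrho(\Udelta(\Bt))=O(\delta\sqrt{\log(1/\delta)})$ and substitute it for
\eqref{eq:024} in the proof of Theorem~\ref{thm:2}), but the mechanism you propose for
proving that bound does not deliver the $\sqrt{\log}$ factor. A pointwise estimate of the
form $c_\rho(u)\le C\bigl(u_1(1-u_2)\bigr)^{-\beta_\rho}$ --- even with the correct exponent
$\beta_\rho=|\rho|/(1+|\rho|)<1/2$ (your $2|\rho|/(1+|\rho|)$ is off by a factor of two, and
exceeds $1/2$ once $|\rho|\ge 1/3$) --- is saturated only on the ridge and carries no
information about the transversal decay. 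Integrating that product bound over a $\delta$-strip
along a monotone curve through the corner gives at best $O(\delta\log(1/\delta))$, which
would feed into Theorem~\ref{thm:2} only as $O_\Prob(n^{-1/2}\log n)$, a factor
$\sqrt{\log n}$ too weak. You acknowledge that closing this gap requires exploiting the
one-dimensional concentration along the ridge, but you do not actually produce the argument,
and this is the entire difficulty in part~(c).

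The paper closes it differently, and more sharply. For $\rho<0$ the slice-to-the-edge
shift of Proposition~\ref{prop:1} fails because $\uast_1$ is decreasing; instead, one shifts
each $u_1$-slice $[\ulbar_1,\ubar_1]\times\cubr{u_2}$ along $u_1$ until it contains
$\uast_1(u_2)$. Because $c_\rho(\cdot,u_2)$ is unimodal with maximum at $\uast_1(u_2)$, this
shift only increases mass, so $\ProbCrho(\Vdeltat\upone)\le\ProbCrho(U^\ast)$ with
$U^\ast=\cubr{u:\abs{u_1-\uast_1(u_2)}<2\delta}$. Splitting off the $O(\delta)$ boundary
piece and bounding the remainder by $4\delta\int c_\rho(\uast_1(u_2),u_2)\,\dm u_2$, the
key observation is that on the ridge the density equals
$\sqrt{1-\rho^2}\exp\robr{\tfrac12(\Phi\inv(u_2))^2}$, so the Gaussian substitution
$u_2=\Phi(t)$ makes the integrand a \emph{constant} and the integral reduces to the length of
the $t$-interval, which is $\tfrac{2}{|\rho|}\abs{\Phi\inv(2\delta)}\sim\sqrt{\log(1/\delta)}$.
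This is where the $\sqrt{\cdot}$ comes from, and it is an exact feature of the Gaussian that
the coordinate-wise pointwise bound throws away.
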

\begin{proof}
As mentioned above, the case $\rho=0$ is trivial. Hence we assume $\rho\ne0$. 
\par
Part~(\ref{item:prop.2.a})
%The verification of~\eqref{eq:025} for $c_\rho$ is straightforward. 
It is easy to verify that $\half q\tr (I-\Sigma\inv) q \le M q\tr q$ 
for any fixed $\rho\in(-1,0)\cup(0,1)$ with some constant $M=M(\rho)>0$. 
Hence, for $\eps\to 0$, 
\[
K(\eps) = O\robrfl{\exp\robrfl{M \absfl{\Phi\inv(\eps)}^2 }}. 
\]
Let $\phi$ denote the standard normal density:  
$\phi(t) := (2\pi)^{-1/2} \exp(-\half t^2)$. 
It is obvious that 
%Since $\phi^{\prime}(t) = - t \phi(t)$, we have that 
$
\Phi(t) 
= 
\int_{-\infty}^t \phi(s) \,\dm s < \int_{-\infty}^t -s \phi(s) \,\dm s 
= 
\phi(t)
$
for $t < -1$.
This yields  
\begin{align} \label{eq:043}
\absfl{\Phi\inv(\eps)}
<
\absfl{\phi\inv (\eps)} 
= 
\sqrt{-2 \log (\sqrt{2\pi}\eps)}
\end{align}
for sufficiently small $\eps>0$. 
Hence we obtain that 
\[
K(\eps)
=
O\robrfl{\exp\robrfl{-2 M \log\robrfl{\sqrt{2\pi}\eps}}}
=
O\robrfl{\eps^{-2M}}
,\quad
\eps \to 0. 
\]
%which is a polynomial bound for $K(\eps)$. 
Thus condition~\eqref{eq:025} follows from Remark~\ref{rem:3}(\ref{item:rem.3.a}). 
\par
Part~(\ref{item:prop.2.b}). 
%Recall that ~\eqref{eq:024} implies~\eqref{eq:018}. 
The verification of~\eqref{eq:024} for $\rho>0$ is analogous to 
Proposition~\ref{prop:1}. 
Due to 
\begin{align} \label{eq:041}
I-\Sigma\inv 
= 
\frac{1}{1-\rho^2}
\robrfl{\begin{array}{cc} -\rho^2 & \rho \\ \rho &-\rho^2\end{array}},
\end{align}
partial differentiation of $\log c_\rho$ in $u_1$ yields
\[
\partial_{u_1} \log c_\rho(u_1,u_2) 
=
\frac{1}{(1-\rho^2)\phi(q_1)}\robr{-\rho^2 q_1 + \rho q_2}. 
\]
%where $\phi(t)$ is the density of the standard normal distribution. 
Hence $c_\rho(u)=0$ is equivalent to 
\begin{align} \label{eq:042}
u_1 = \uast_1 (u_2) := \Phi\robrfl{\Phi\inv(u_2) / \rho}.
\end{align}
Moreover, it is easy to see that $c_\rho(u)$ with $\rho>0$ 
is increasing in $u_1$ if $u_1<\uast_1$ and decreasing in 
$u_1$ if $u_1>\uast_1$. 
This is illustrated in Figure~\ref{fig:2}.
As $\uast_1(u_2)$ is increasing, the slicing and shifting argument used in 
the proof of Proposition~\ref{prop:1} applies here, and we obtain that 
$\ProbCrho(\Vdeltat\upone) \le 4\delta$. 
Due to the symmetry of $c_\rho(u)$ in $u_1$ and $u_2$, 
partial differentiation in $u_2$ and the slicing and shifting method 
yield $\ProbCrho(\Vdeltat\uptwo) \le 4\delta$. 
Hence~\eqref{eq:039} gives us $\ProbCrho(\Udelta(\Bt)) \le 8 \delta$. 
\par
\begin{figure} %[h!]
\centering
\includegraphics[width=\textwidth]{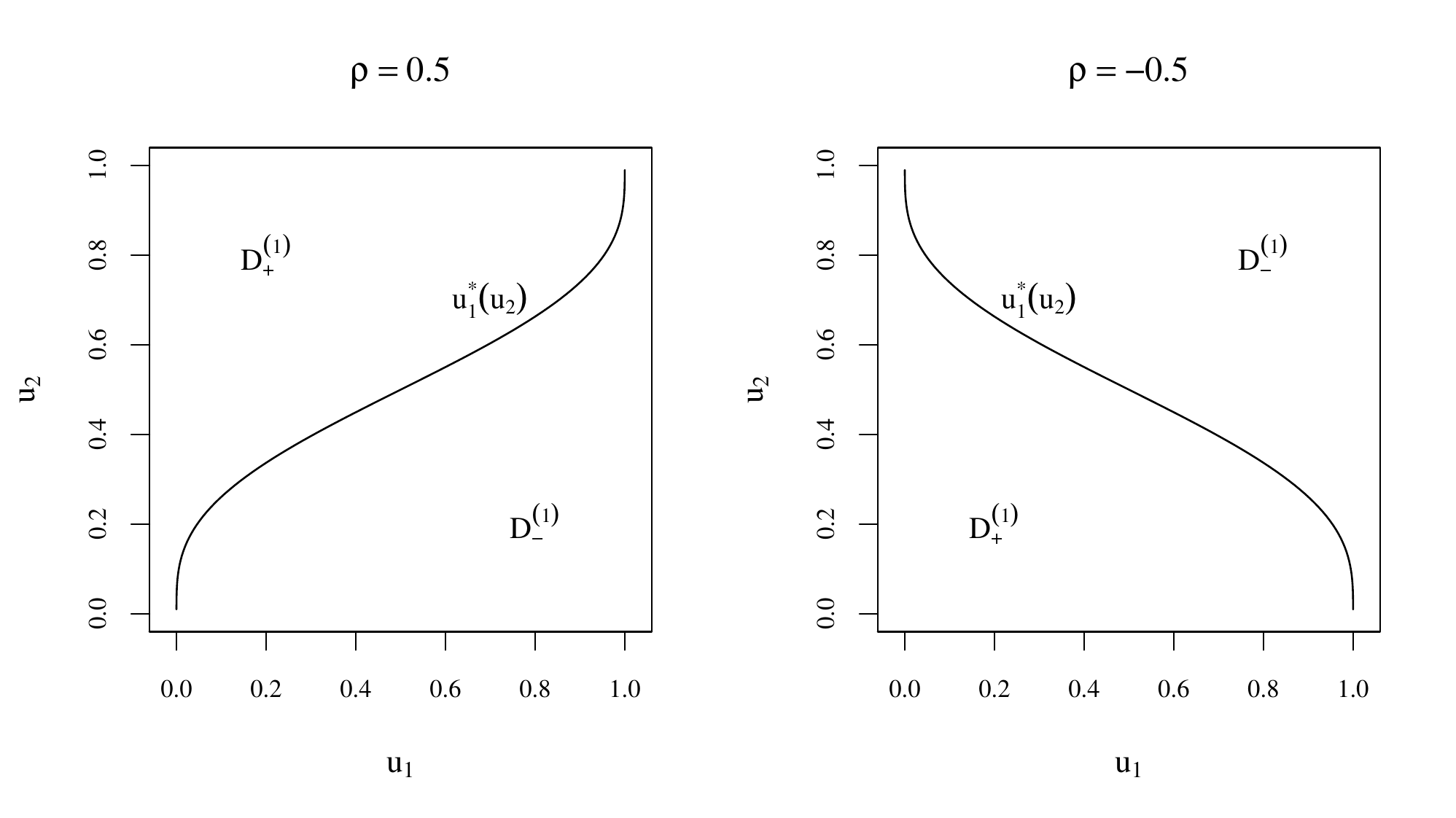}
\caption{The curves $u_2\mapsto(\uast_1(u_2),u_2)$ for the bivariate Gauss copula with correlation $\rho=\pm0.5$.}
\label{fig:2}
\end{figure}
\par
Part~(\ref{item:prop.2.c}). 
The situation for $\rho<0$ is different. 
The copula density $c_\rho(u)$ is still increasing in $u_1$ for $u_1\le\uast_1$ 
and decreasing in $u_1$ for $u_1>\uast_1$, 
but the function $\uast_1(u_2)$ is decreasing (cf.\ Figure~\ref{fig:2}). 
Thus~\eqref{eq:035} does not hold here. 
Instead of shifting each slice $[\ulbar_1,\ubar_1]\times\cubrfl{u_2}$ 
as in the proof of Proposition~\ref{prop:1}, one can shift it until 
it touches the point $(\uast_1,u_2)$. If $\uast_1\in[\ulbar_1,\ubar_1]$, 
no shift is needed. 
That is, we replace the interval $[\ulbar_1,\ubar_1]$ by the interval 
$[\ulbar_1,\ubar_1] + \Delta$, where 
\[
\Delta 
= 
\Delta(u_2)
:= 
\begin{cases}
\uast_1 - \ubar_1 & \text{if } \ubar_1 < \uast_1,\\
\uast_1 - \ulbar_1 & \text{if } \ulbar_1 > \uast_1,\\
0 & \text{else.}
\end{cases}
\]
It is easy to see that 
\[
\forall u_2 \in (0,1)
\quad
\int_{\ulbar_1}^{\ubar_1} c_\rho(u) \, \dm u_1 
\le
\int_{\ulbar_1 +\Delta}^{\ubar_1 + \Delta} c_\rho(u) \, \dm u_1. 
\]
Integrating over $u_2$, we obtain that 
\[
\ProbCrho\robrfl{\Vdeltat\upone} 
\le 
\ProbCrho\robrfl{U^\ast},
\]
where $U^\ast:= \cubr{u\in(0,1)^2: \abs{u_1 - \uast_1(u_2)} < 2\delta}$. 
It is easy to see that  
\[
\cubr{u \in U^\ast : \uast_1(u_2) \notin (2\delta,1-2\delta)}   
\subset 
\robr{[0,4\delta]\cup[1-4\delta,1]}\times[0,1].
\] 
Hence, as $C_\rho$ is a copula and has uniform margins, we obtain that 
\[
\ProbCrho\robrfl{\cubrfl{u \in U^\ast : \uast_1(u_2) \notin(2\delta,1-2\delta)}} \le 8\delta.
\]
Denote the remaining part of $U^\ast$ by $U_0^\ast$: 
\[
U_0^\ast:={\cubr{u\in U^\ast:\uast_1(u_2) \in (2\delta,1-2\delta)}}. 
\]
As $\uast_1(u_2)$ maximizes $c_\rho$ for fixed $u_2$, we have that 
\[
\ProbCrho\robrfl{U_0^\ast} 
\le 
4\delta
\int_{{\uast_1}\inv(1-2\delta)}^{{\uast_1}\inv(2\delta)}
c_\rho(\uast_1(u_2), u_2) \, \dm u_2. 
\]
Applying~\eqref{eq:041} and~\eqref{eq:042} we obtain that 
\[
c_\rho(\uast_1(u_2),u_2) 
= 
%\det(\Sigma)^{-1/2} 
\sqrt{1-\rho^2}
\exp\robrfl{\frac{1}{2}\robrfl{\Phi\inv(u_2)}^2}. 
\]
Thus we need an upper bound for the integral
\[
I(\delta)
:=\int_{{\uast_1}\inv(1-2\delta)}^{{\uast_1}\inv(2\delta)} 
\exp\robrfl{\frac{1}{2}\robrfl{\Phi\inv(u_2)}^2}\, \dm u_2. 
\]
The substitution $u_2=\Phi(t)$ yields
\[
I(\delta)
=
\int_{\frac{1}{\rho} \Phi\inv(1-2\delta)}^{\frac{1}{\rho} \Phi\inv(2\delta)} 
\frac{1}{\sqrt{2\pi}} \dm t
=
\frac{\sqrt{2}}{\rho\sqrt{\pi}} \abs{\Phi\inv(2\delta)}.
\]
Finally, applying~\eqref{eq:043}, we obtain that 
\[
I(\delta)  \le \frac{\sqrt{2}}{\rho\sqrt{\pi}} \sqrt{ - 2 \log (2\delta\sqrt{2\pi})}
= 
O\robrfl{\sqrt{\abs{\log \delta}}}
\]
for $\delta\to 0$.  
This implies that   
\begin{align} \label{eq:044}
\ProbCrho\robrfl{\Vdeltat\upone} = 
O \robrfl{\delta\sqrt{\abs{\log \delta}}}.  
\end{align}
Symmetry arguments yield the same order of magnitude for 
$\Vdeltat\uptwo$ and, as a consequence, for $\ProbCrho\robr{\Udelta(\Bt)}$. 
%From here we immediately obtain~\eqref{eq:018}. 
\par
Obviously, \eqref{eq:044} is slightly weaker than $O(\delta)$ in 
assumption~\eqref{eq:024}. 
Revising the proof of Theorem~\ref{thm:2}, we see that~\eqref{eq:024} 
is used to guarantee that the second term 
in~\eqref{eq:022} is $O_\Prob(1)$. 
Hence, replacing~\eqref{eq:024} by~\eqref{eq:044} in Theorem~\ref{thm:2}, 
one obtains~\eqref{eq:045}.
% \[
% \forall \eta>0 
% \quad
% \normfl{\Gnast - G}_\infty = o_\Prob(n^{-1/2 + \eta}).
% \]
\end{proof}
\par
\begin{remark} \label{rem:4}
\begin{enumerate}[(a)]
\item \label{item:rem.4.a}
It is currently an open question whether the weaker result of 
Proposition~\ref{prop:2}(\ref{item:prop.2.c}) reflects the reality or 
simply arises from the approximations used in the proof.   
However, it should be noted that the case $\rho<0$ is indeed more difficult 
than $\rho>0$. For $\rho<0$ the curve $(\uast_1(u_2),u_2)$ 
may be much closer to the set $B_t$, 
and hence $\ProbCrho(\Udelta(\Bt))$ may be substantially larger than for $\rho>0$. 
In particular, if 
$X_1\sim\Ncal(0,1)$ and $X_2\sim\Ncal(0,1/\rho^2)$, 
% the true marginal distributions are 
% $F_1(z)=\Phi(z)$ and $F_2(z) = \Phi(\abs{\rho}z)$, 
then 
\begin{align*}
\cubrfl{(\uast_1(u_2), u_2):u_2\in(0,1)} 
&=
T\robrfl{\cubrfl{x\in\R^2: x_1 + x_2 =0}}\\
&=
B_0\cap(0,1)^2. 
\end{align*}
That is, for $\rho<0$ one can be confronted with the worst 
case when some $B_t$ entirely falls into the area where the copula density 
is at its largest. 
A graphic example to this issue is given on the left hand side of 
Figure~\ref{fig:3}. 
%See also Figure~\ref{fig:2} for the shape of the curve $(\uast_1(u_2),u_2)$. 
The set $B_0$ in that plot coincides with the set 
$\cubr{\robr{\uast_1(u_2),u_2}: u_2\in(0,1)}$ 
from the right hand side of Figure~\ref{fig:2} ($\rho=-0.5)$. 
Such coincidence is not possible for $\rho>0$ or for the Clayton copula. 
\item
Intuitively speaking, this problem originates from the negative dependence for 
$\rho<0$, where large values of $X_1$ tend to 
be associated with small values of $X_2$ and vice versa. 
As a consequence, the probability 
$\Prob\robr{X_1+X_2 \le 0}$ is influenced by tail events. 
This effect is much weaker for Gauss copulas with $\rho>0$ and, analogously, 
for many other copulas with positive dependence.  
\item
The margins $F_i$ may also influence the convergence of the estimated 
sum distribution function $\Gnast$.  
In insurance and reinsurance applications the components $X_i\sim F_i$ 
are often non-negative. 
In this case $X_1+X_2\le t$ implies that $X_i\le t$ for $i=1,2$, 
so that the tail events have no influence on $G(t)=\Prob\robr{X_1+X_2\le t}$ 
for moderately large $t$. 
The resulting sets $\Bt$ for $t<\infty$ do not contain any internal points of 
the unit square that are close to 
the upper left or to the lower right vertex. 
That is, $u\in\Bt$ and $u_1>1-\eps$ with a small $\eps$ implies $u_2=0$, and
$u_2>1-\eps$ implies $u_1=0$. These $\Bt$ avoid the areas  
where the density of the Gauss copula with $\rho<0$ is at its highest. 
Thus non-negative margins simplify the estimation of the function $G$ 
for the Gauss copula with $\rho<0$ and, analogously, for many 
other copulas with negative dependence. 
An illustration to the different types of sets $B_t$ is given in 
Figure~\ref{fig:3}. 
\end{enumerate}
\end{remark}
%\par
%
\section{Conclusions} 
\label{sec:5}
This paper proves that Iman--Conover based estimates for the distribution 
function of the component sum are strongly uniformly consistent, and it provides 
sufficient conditions for the convergence rate $O_\Prob(n^{-1/2})$. 
Besides the component sum, these results hold for any other 
componentwise non-decreas\-ing function. 
The underlying mathematical problem 
goes beyond the classic uniform convergence results for 
empirical copulas. 
In the context of the Iman--Conover method, the primary cause for this technical 
difficulty is the implicit usage of empirical marginal distributions. 
Similar issues also arise in all multivariate models generated by plugging 
empirical margins into an exact copula 
or by plugging exact marginal distributions into 
an empirical copula. 
%
% This technical difficulty is caused by the implicit usage of empirical 
% marginal distributions in the Iman--Conover method, and it arises in  
% all multivariate models generated by plugging empirical margins into a copula.
%
The 
%usage of empirical margins complicates 
marginal transformations involved in these applications complicate
the resulting estimation problem 
in a way that does not allow to establish asymptotic normality of the 
estimated sum distribution. 
Therefore the weaker $O_\Prob(n^{-1/2})$ statement is quite the best 
result one can achieve. 
\par
The results proved here for the Iman--Conover method extend to %also hold for 
all models 
generated by plugging empirical margins into an exact copula 
%.
or by plugging exact margins into an empirical copula.  
The regularity conditions needed for the convergence rate $O_\Prob(n^{-1/2})$ 
are satisfied for all copulas with bounded densities, 
all bivariate Clayton copulas, and bivariate Gauss copulas with correlation 
parameter $\rho>0$. 
% Similar results can be expected for many other bivariate copulas with 
%positive dependence.  
%
The best convergence rate that could be established for the bivariate Gauss 
copulas with $\rho<0$ is $O_\Prob(n^{-1/2}\sqrt{\log n})$.  
This result suggests that 
negative dependence may slow down the convergence of estimates. 
On the other hand, non-negative components $X_i$, 
as typical in insurance applications, may simplify the problem. 
The proof technique used for the bivariate Clayton copula 
applies to bivariate Gauss copulas with $\rho>0$ and 
may also work for other bivariate copulas with positive dependence. 
A straightforward generalization of this method to higher dimensions is, 
however, not feasible.    
% The methods developed here for bivariate copulas with unbounded 
% densities do not work in higher dimensions, so that 
Thus the question for the convergence rate in the $d$-variate case 
with $d>2$ and an unbounded copula density 
is currently open. 
\section*{Acknowledgements}
% P.E's suggestion:
The author would like to thank RiskLab, ETH Zurich, for financial support.
Further thanks for helpful discussions and 
comments are due to Philipp Arbenz, Sara van de Geer,
Jan Beran, Fabrizio Durante, and Paul Embrechts. 
Last but not least, the author thanks the anonymous referees, 
whose valuable comments and suggestions helped to improve this paper. 
%
% ===== ===== ===== ===== ===== ===== ====
% Bibliography:
\mybibstyle
\small{
\bibliography{AuxFiles/mybib-db}

\begin{thebibliography}{27}
\providecommand{\natexlab}[1]{#1}
\providecommand{\url}[1]{\texttt{#1}}
\expandafter\ifx\csname urlstyle\endcsname\relax
  \providecommand{\doi}[1]{doi: #1}\else
  \providecommand{\doi}{doi: \begingroup \urlstyle{rm}\Url}\fi

\bibitem[Arbenz et~al.(2011)Arbenz, Embrechts, and
  Puccetti]{Arbenz/Embrechts/Puccetti:2011}
P.~Arbenz, P.~Embrechts, and G.~Puccetti.
\newblock The {AEP} algorithm for the fast computation of the distribution of
  the sum of dependent random variables.
\newblock \emph{Bernoulli}, 17\penalty0 (2):\penalty0 562–591, 2011.
\newblock \doi{10.3150/10-BEJ284}.

\bibitem[Arbenz et~al.(2012)Arbenz, Hummel, and
  Mainik]{Arbenz/Hummel/Mainik:2012}
P.~Arbenz, C.~Hummel, and G.~Mainik.
\newblock Copula based risk aggregation through sample reordering.
\newblock \emph{Insurance: Mathematics and Economics}, 51\penalty0
  (1):\penalty0 122 -- 133, 2012.
\newblock \doi{10.1016/j.insmatheco.2012.03.009}.

\bibitem[Barbe et~al.(1996)Barbe, Genest, Ghoudi, and
  R{\'e}millard]{Barbe/Genest/Ghoudi/Remillard:1996}
P.~Barbe, C.~Genest, K.~Ghoudi, and B.~R{\'e}millard.
\newblock On {K}endall's process.
\newblock \emph{Journal of Multivariate Analysis}, 58\penalty0 (2):\penalty0
  197 -- 229, 1996.
\newblock \doi{10.1006/jmva.1996.0048}.

\bibitem[Billingsley(1968)]{Billingsley:1968}
P.~Billingsley.
\newblock \emph{Convergence of Probability Measures}.
\newblock John Wiley \& Sons Inc., New York, 1968.

\bibitem[Deheuvels(1979)]{Deheuvels:1979}
P.~Deheuvels.
\newblock La fonction de d\'ependance empirique et ses propri\'et\'es. {U}n
  test non param\'etrique d'ind\'ependance.
\newblock \emph{Acad. Roy. Belg. Bull. Cl. Sci. (5)}, 65\penalty0 (6):\penalty0
  274--292, 1979.

\bibitem[Dudley(1999)]{Dudley:1999}
R.~M. Dudley.
\newblock \emph{Uniform {C}entral {L}imit {T}heorems}.
\newblock Cambridge University Press, 1999.

\bibitem[Durante et~al.(2009)Durante, Sarkoci, and
  Sempi]{Durante/Sarkoci/Sempi:2009}
F.~Durante, P.~Sarkoci, and C.~Sempi.
\newblock Shuffles of copulas.
\newblock \emph{Journal of Mathematical Analysis and Applications},
  352\penalty0 (2):\penalty0 914 -- 921, 2009.
\newblock \doi{10.1016/j.jmaa.2008.11.064}.

\bibitem[Embrechts et~al.(2013)Embrechts, Puccetti, and
  R{\"u}schendorf]{Embrechts/Puccetti/Rueschendorf:2013}
P.~Embrechts, G.~Puccetti, and L.~R{\"u}schendorf.
\newblock Model uncertainty and var aggregation.
\newblock \emph{Journal of {B}anking \& {F}inance}, 37\penalty0 (8):\penalty0
  2750 -- 2764, 2013.
\newblock \doi{10.1016/j.jbankfin.2013.03.014}.

\bibitem[Fermanian et~al.(2004)Fermanian, Radulovi\'c, and
  Wegkamp]{Fermanian/Radulovic/Wegkamp:2004}
J.-D. Fermanian, D.~Radulovi\'c, and M.~Wegkamp.
\newblock {Weak convergence of empirical copula processes.}
\newblock \emph{Bernoulli}, 10\penalty0 (5):\penalty0 847--860, 2004.
\newblock \doi{10.3150/bj/1099579158}.

\bibitem[Genest and Segers(2010)]{Genest/Segers:2010}
C.~Genest and J.~Segers.
\newblock On the covariance of the asymptotic empirical copula process.
\newblock \emph{Journal of Multivariate Analysis}, 101\penalty0 (8):\penalty0
  1837 -- 1845, 2010.
\newblock \doi{10.1016/j.jmva.2010.03.018}.

\bibitem[Ghoudi and R{\'e}millard(1998)]{Ghoudi/Remillard:1998}
K.~Ghoudi and B.~R{\'e}millard.
\newblock Emprical processes based on pseudo-observations.
\newblock In \emph{Asymptotic Methods in Probability and Statistics}, pages 171
  -- 197. North-Holland, Amsterdam, 1998.
\newblock \doi{10.1016/B978-044450083-0/50012-5}.

\bibitem[Iman and Conover(1982)]{Iman/Conover:1982}
R.~Iman and W.~Conover.
\newblock {A Distribution-Free Approach to Inducing Rank Correlation Among
  Input Variables}.
\newblock \emph{Communications in Statistics - Simulation and Computation},
  11\penalty0 (3):\penalty0 311--334, 1982.

\bibitem[Iman(2008)]{Iman:2008}
R.~L. Iman.
\newblock \emph{Latin Hypercube Sampling}.
\newblock John Wiley \& Sons, Ltd, 2008.
\newblock \doi{10.1002/9780470061596.risk0299}.

\bibitem[Kimeldorf and Sampson(1978)]{Kimeldorf/Sampson:1978}
G.~Kimeldorf and A.~R. Sampson.
\newblock Monotone dependence.
\newblock \emph{The Annals of Statistics}, 6\penalty0 (4), 1978.
\newblock URL \url{http://www.jstor.org/stable/2958865}.

\bibitem[McKay et~al.(1979)McKay, Beckman, and
  Conover]{McKay/Beckman/Conover:1979}
M.~D. McKay, R.~J. Beckman, and W.~J. Conover.
\newblock A comparison of three methods for selecting values of input variables
  in the analysis of output from a computer code.
\newblock \emph{Technometrics, Chemical and Engineering Sciences}, 21\penalty0
  (2):\penalty0 239--245, 1979.

\bibitem[Mildenhall(2005)]{Mildenhall:2005}
S.~Mildenhall.
\newblock {Correlation and aggregate loss distributions with an emphasis on the
  Iman-Conover Method}.
\newblock In \emph{The Report of the Research Working Party on Correlations and
  Dependencies Among All Risk Sources}, pages 103--204. Casualty Actuarial
  Society Forum, 2005.
\newblock URL \url{http://mynl.com/wp/ic.pdf}.

\bibitem[Nelsen(2006)]{Nelsen:2006}
R.~B. Nelsen.
\newblock \emph{An introduction to copulas}.
\newblock Springer, New York, second edition, 2006.

\bibitem[Owen(1992)]{Owen:1992}
A.~B. Owen.
\newblock A central limit theorem for {L}atin hypercube sampling.
\newblock \emph{J. Roy. Statist. Soc. Ser. B}, 54\penalty0 (2):\penalty0
  541--551, 1992.

\bibitem[Packham and Schmidt(2010)]{Packham/Schmidt:2010}
N.~Packham and W.~M. Schmidt.
\newblock Latin hypercube sampling with dependence and applications in finance.
\newblock \emph{Journal of Computational Finance}, 13\penalty0 (3), 2010.

\bibitem[R{\"u}schendorf(1976)]{Rueschendorf:1976}
L.~R{\"u}schendorf.
\newblock Asymptotic distributions of multivariate rank order statistics.
\newblock \emph{Ann. Statist.}, 4\penalty0 (5):\penalty0 912--923, 1976.

\bibitem[R{\"u}schendorf(1983)]{Rueschendorf:1983}
L.~R{\"u}schendorf.
\newblock Solution of a statistical optimization problem by rearrangement
  methods.
\newblock \emph{Metrika}, 30\penalty0 (1):\penalty0 55--61, 1983.
\newblock \doi{10.1007/BF02056901}.

\bibitem[Segers(2012)]{Segers:2012}
J.~Segers.
\newblock Asymptotics of empirical copula processes under nonrestrictive
  smoothness assumptions.
\newblock \emph{Bernoulli}, 18\penalty0 (3):\penalty0 764--782, 2012.
\newblock \doi{10.3150/11-BEJ387}.

\bibitem[Stein(1987)]{Stein:1987}
M.~Stein.
\newblock Large sample properties of simulations using {L}atin hypercube
  sampling.
\newblock \emph{Technometrics}, 29\penalty0 (2):\penalty0 143--151, 1987.
\newblock \doi{10.2307/1269769}.

\bibitem[van~der Vaart and Wellner(1996)]{van_der_Vaart/Wellner:1996}
A.~W. van~der Vaart and J.~A. Wellner.
\newblock \emph{{W}eak {C}onvergence and {E}mpirical {P}rocesses}.
\newblock Springer, New York, 1996.
\newblock Corrected 2nd printing 2000.

\bibitem[van~der Vaart and Wellner(2007)]{van_der_Vaart/Wellner:2007}
A.~W. van~der Vaart and J.~A. Wellner.
\newblock Empirical processes indexed by estimated functions.
\newblock In \emph{Asymptotics: Particles, Processes and Inverse Problems}, IMS
  Lecture Notes -- Monograph Series, pages 234--252. Institute of Mathematical
  Statistics, Beachwood, {O}hio, 2007.

\bibitem[Vitale(1990)]{Vitale:1990}
R.~Vitale.
\newblock {On stochastic dependence and a class of degenerate distributions}.
\newblock In H.~Block, A.~Sampson, and T.~Savits, editors, \emph{Topics in
  statistical dependence, IMS Lecture Notes Monograph Series volume 16}, pages
  459--469. Institute of Mathematical Statistics, Hayward, 1990.

\bibitem[Wright(1981)]{Wright:1981}
F.~T. Wright.
\newblock The empirical discrepancy over lower layers and a related law of
  large numbers.
\newblock \emph{The Annals of Probability}, 1981.

\end{thebibliography}
}
\end{document}